\pdfoutput=1
\documentclass[a4paper,UKenglish,orivec,cleveref]{lipics-v2019}

\bibliographystyle{plainurl}

\usepackage{microtype} \usepackage{booktabs}

\usepackage{amsmath,amssymb}
\def\qedhere{\qed}
\usepackage{enumerate}
\usepackage{mathdots}
\usepackage{mathrsfs}
\usepackage{style/mathpartir}
\usepackage{style/notation}

\usepackage{tikz}
\usepackage{style/tikz-tricks}

\usepackage{hyperref}
\crefname{rule}{Rule}{Rules}
\crefname{problem}{problem}{problems}

\title{Decidable Inductive Invariants for Verification of Cryptographic Protocols with Unbounded Sessions}
\titlerunning{Decidable Inductive Invariants for Cryptographic Protocols Verification}

\author{Emanuele D'Osualdo}
  {Imperial College London, UK}
  {e.dosualdo@ic.ac.uk}
  {https://orcid.org/0000-0002-9179-5827}
  {EU Horizon 2020 Marie Curie Individual Fellowship.}

\author{Felix Stutz}
  {MPI-SWS, Kaiserslautern, Germany \and
   Saarland University, Saarbrücken, Germany}
  {fstutz@mpi-sws.org}
  {https://orcid.org/0000-0003-3638-4096}
  {supported by Imperial College London and International Max Planck Research School for Computer Science.}

\authorrunning{E. D'Osualdo and F. Stutz}
\Copyright{Emanuele D'Osualdo and Felix Stutz}

\ccsdesc[500]{Theory of computation~Program verification}

\keywords{Security Protocols, Infinite-State Verification, Ideal Completions for WSTS}

\supplement{Tool available at \url{https://doi.org/10.5281/zenodo.3950846}}

\acknowledgements{We would like to thank Alwen Tiu, Roland Meyer and V\'{e}ronique Cortier for the useful feedback.}

\EventEditors{Igor Konnov and Laura Kov\'{a}cs}
\EventNoEds{2}
\EventLongTitle{31st International Conference on Concurrency Theory (CONCUR 2020)}
\EventShortTitle{CONCUR 2020}
\EventAcronym{CONCUR}
\EventYear{2020}
\EventDate{September 1--4, 2020}
\EventLocation{Vienna, Austria}
\EventLogo{}
\SeriesVolume{2017}
\ArticleNo{8}

\nolinenumbers \hideLIPIcs

\let\ifdraft\iffalse

\newcommand{\adjustfigure}[1][\small]{\centering#1\belowdisplayskip=0pt\belowdisplayshortskip=0pt\abovedisplayskip=0pt\abovedisplayshortskip=0pt}

\begin{document}
\maketitle

\begin{abstract}
We develop a theory of decidable inductive invariants for an infinite-state
variant of the Applied \picalc,
with applications to automatic verification of stateful cryptographic protocols
with unbounded sessions/nonces.
Since the problem is undecidable in general,
we introduce \emph{depth-bounded protocols},
a strict generalisation of a class from the literature,
for which our decidable analysis is sound and complete.
Our core contribution is a procedure to check that an invariant is inductive,
which implies that every reachable configuration satisfies it.
Our invariants can capture security properties like secrecy,
can be inferred automatically,
and represent an independently checkable certificate of correctness.
We provide a prototype implementation
and we report on its performance on some textbook examples.
\end{abstract}

\section{Introduction}
\label{sec:intro}

Security protocols implement secure communication over insecure channels,
by using cryptography.
Despite underpinning virtually every communication over the internet,
new flaws that compromise security are routinely discovered
in deployed protocols.
Automatic protocol verification is highly desirable,
but also very challenging:
the space of possible attacks is infinite.
Indeed, even under the assumption of perfect cryptography,
security properties are undecidable~\cite{durgin99fmsp}.
The most problematic feature for decidability is the necessity
of considering unboundedly many fresh random numbers, called \emph{nonces},
to distinguish between various sessions of the protocol.
There has been a proliferation of verification tools~\cite{DEEPSEC,AVISPA,SPEC,AKISS,PROVERIF,TAMARIN}
which can be categorised according to the way the undecidability issue is resolved.
A~first approach is to only consider a bounded number of sessions,
possibly missing attacks.
A~second is to over-approximate the protocol's behaviour
by representing nonces with less precision,
possibly reporting spurious attacks.
A~third is to implement semi-algorithms, accepting that the tools might never terminate on some protocols.

In this paper, we devise a sound and complete analysis,
i.e.~one that always terminates
with a correct answer, without need for approximations.
We obtain this by developing decision procedures for
proving invariants of a rich sub-class of protocols with unbounded sessions/nonces.
An invariant is any property that holds for every reachable configuration.
We introduce \emph{depth-bounded protocols},
a strict generalisation of the class of~\cite{DOsualdoOT17},
and prove that a class of invariants, called downward-closed,
can be effectively represented using expressions that we call \emph{limits}.
Our core technical results are a decision procedure for limit inclusion
and an algorithm called $\posthat$ that computes, from a limit~$L$,
a finite union of limits that
represent the (infinite) set of configurations reached in one step from~$L$.
By using these two components, we obtain an algorithm to check
if a limit is inductive, i.e.~$\posthat(L) \subseteq L$.
An inductive limit that contains the initial configuration is guaranteed to
be an invariant for the protocol.
We show how to use this to prove a number of properties including
depth-boundedness itself (a semantic property), secrecy,
and control-state reachability.

We define depth-bounded protocols as a subclass of a variant of
the Applied \picalc~\cite{RyanS11},
with support for user-defined cryptographic primitives,
secure and public channels, stateful principals,
a Dolev-Yao-style intruder~\cite{DolevY83} supporting modelling of dishonest participants and leaks of old keys.
In particular, our results apply to any set of cryptographic primitives
that satisfy some simple axioms; examples include
(a)symmetric encryption, blind signatures, hashes, XOR.
To gain intuition about depth-boundedness,
consider the set of messages
$ \Gamma_n = \set{ \enc{k_1}{k_2}, \enc{k_2}{k_3}, \dots \enc{k_{n-1}}{k_n} }$
which ``chains'' key $k_1$ to $k_2$, $k_2$ to $k_3$ and so on,
obtaining an encryption chain of length~$n$.
A depth-bounded protocol cannot produce, or be tricked to produce,
such chains of unbounded length.
Note that, when computing depth, we only consider chains that are essential:
the set $\Gamma_n \union \set{k_n}$ for example has depth~$1$ (for any~$n$)
because it is equivalent to the set $\set{k_1,\dots,k_n}$.
When there is a bound~$d$ on the depth of reachable configurations,
we say that the protocol is depth-bounded.
We built a proof-of-concept prototype tool to evaluate the approach,
showing that many textbook protocols fall into the depth-bounded class.

More precisely, bounding depth alone is not enough
to obtain decidability~\cite{DOsualdoOT17}:
one needs to bound the size of messages too.
For type-compliant protocols~\cite{ArapinisD14,ChretienCD14}
message size can be bounded without excluding any security violation.
More generally, for typical protocols (including all our benchmarks), our inductive invariants can be computed on the
size-bounded model, and then generalised to invariants for the unrestricted version of the protocol.

Our approach has a number of notable properties.
First, once a suitable inductive invariant has been found,
    it can be provided as a certificate of correctness that
    can be independently checked.
Second, the search of a suitable invariant can be performed
    both automatically (with a trade-off between precision and performance)
    or interactively.
Third, supporting unbounded nonces makes it possible to reason about
    properties like susceptibility to known-plaintext attacks
    (\cref{sec:invariants}).
Finally, even coarse invariants inferred with our method
    can be used to prune the search space of other model checking procedures.

\subparagraph{Related work}
The pure \picalc{} version of depth-boundedness was originally proposed
in~\cite{Meyer08} and developed in~\cite{HuchtingMM14,WiesZH10,ZuffereyWH12}.
Our work builds directly on~\cite{DOsualdoOT17},
which introduced depth-boundedness
for the special case of secrecy of protocols using symmetric encryption.
We generalise to a strictly more expressive class of primitives and properties,
a result that requires much more sophisticated techniques and yields more powerful algorithms.

Our theory of invariants is framed in terms of
ideal completions~\cite{FinkelG09},
which, to the best of our knowledge,
has not been instantiated to cryptographic protocols before.
Our decidability proofs introduce substantial new proof techniques
to deal with an active intruder while being parametric
on the cryptographic primitives.

Types~\cite{SPICA2,ChretienCD15,CortierGLM17}
can be used to capture and generalise
common safe usages of cryptographic primitives,
and reduce verification to constraints which can be solved efficiently.
We speculate that our domain of limits and the associated algorithms could be
used to define an expressive class of solvable constraints
that could be integrated in type systems.

In~\cite{ChretienCD15,Froeschle15} two classes of protocols
with unbounded nonces are shown
to enjoy decidable verification.
They consider a less general calculus
((a)symmetric encryption only, with atomic keys),
different properties
(only secrecy~\cite{Froeschle15}, trace equivalence~\cite{ChretienCD15}),
and restrict protocols using (similar) syntactic conditions,
obtaining classes that are orthogonal to ours.

ProVerif~\cite{PROVERIF} and
Tamarin~\cite{TAMARIN} are two mature tools
with support for a wide range of cryptographic primitives
and expressive properties, and handle unbounded sessions.
Both programs employ semi-algorithms and may diverge on verification tasks.
ProVerif is known to terminate
on so-called \emph{tagged protocols}~\cite{BlanchetP05}
which are incomparable to depth-bounded protocols.
Tamarin offers an interactive mode when a proof cannot be carried out automatically.
To the best of our knowledge, there is no characterisation for a
class of protocols on which Tamarin is guaranteed to terminate.

\subparagraph{Outline}
\cref{sec:prelim} introduces the formal model and depth-bounded protocols.
\Cref{sec:ideal} presents our main theoretical results.
In \cref{sec:discussion} we report on experiments with our tool and discuss limitations.
All omitted proofs can be found in the Appendix.
 \section{Formal Model}
\label{sec:prelim}

\label{sec:intruder-models}

We introduce a variant of the Applied \picalc\ as our formal model of protocols.
Following the Dolev-Yao intruder model,
we treat cryptographic primitives algebraically.
Assume an enumerable set of \emph{names} $a,b,\dots\in\Names$.
A signature $\Sig$ of \emph{constructors},
is a finite set of symbols~$\constr{f}$
with their arity $\arity(\constr{f}) \in \Nat$.
The set of messages over $\Sig$
is the smallest set~$ \Msg^\Sig $ which contains all names,
and is closed under application of constructors.
The domain of finite sets of messages is
$ \Know^\Sig \is \finpow(\Msg^\Sig) $.
We define
  $\size(\constr{f}(M_1,\dots,M_n)) \is
    1 + \max\set{\size(M_i)|1\leq i\leq n}$,
  $\size(a) \is 1$, and
$\names(a) \is \set{a}$,
  $\names(\constr{f}(M_1,\dots,M_n)) \is
      \Union_{i=1}^n\names(M_i) $.
Given $X \subseteq \Names$ and $s\in\Nat$,
we define
$ \Msg^{\Sig,X}_s \is
    \set{M\in \Msg^\Sig | \names(M)\subseteq X, \size(M) \leq s} $.
As is standard,
$\Gamma,\Gamma'$ and $\Gamma,M$ stand for
$\Gamma \union \Gamma'$ and $\Gamma \union \set{M}$ respectively.

A \emph{substitution} is a finite partial function $\theta\from \Names \pto \Msg^\Sig$;
we write $\theta = \subst{x_1 -> M_1,,x_n -> M_n}$,
abbreviated with $ \subst{\vec{x} -> \vec{M}} $,
for the substitution with $\theta(x_i) = M_i$ for all~$1\leq i\leq n$.
We write~$M\theta$ for the application
of substitution~$\theta$ to the message~$M$,
and extend the notation to sets of messages
$\Gamma\theta \is \set{M\theta | M \in \Gamma}$.
A substitution $\theta$ is a \emph{renaming of $X \subseteq \Names$}
if it is defined on~$X$, injective, and with $\theta(X) \subseteq \Names$.

\begin{definition}[Intruder model]
  \label{def:intruder-model}
  A \emph{derivability relation} for a signature $\Sig$,
  is a relation
    ${\deriv} \subseteq \Know^\Sig \times \Msg^\Sig$.
The pair $\Intruder = (\Sig, \deriv)$ is an
    \emph{(effective) intruder model} if
  ${\deriv}$ is a (decidable) derivability relation for $\Sig$, and
  for all $M,N\in\Msg^\Sig$, $\Gamma,\Gamma' \in \Know^\Sig$, $a\in\Names$:
  \begin{align}
    & M \deriv M
    \tag{Id}
    \label{axiom:id}
    \\&
     \Gamma \subseteq \Gamma' \land \Gamma \deriv M
       \implies \Gamma' \deriv M
    \tag{Mon}
    \label{axiom:mon}
    \\&
     \Gamma \deriv M \land \Gamma,M \deriv N
       \implies \Gamma \deriv N
    \tag{Cut}
    \label{axiom:cut}
    \\&
     M_1,\dots,M_n \deriv \constr{f}(M_1,\dots,M_n)
     \quad\text{for every }
       \constr{f}\in \Sigma
     \text{ with }
       \arity(\constr{f})=n
    \tag{Constr}
    \label{axiom:constr}
\\&
     \Gamma\theta \deriv M\theta \iff \Gamma \deriv M
      \quad\text{for any } \theta \text{ renaming of } \names(\Gamma)
    \tag{Alpha}
    \label{axiom:alpha}
    \\&
     \Gamma, a \deriv M \land a\not\in \names(\Gamma,M) \implies \Gamma \deriv M
    \tag{Relevancy}
    \label{axiom:relevancy}
\end{align}
The \emph{knowledge ordering} for $\Intruder$ is the relation
    ${\kleq} \subseteq \Know^\Sig \times \Know^\Sig$
  such that
  $
    \Gamma_1 \kleq \Gamma_2
  $ if and only if $
    \forall M\in\Msg^\Sig \st \Gamma_1 \deriv M \implies \Gamma_2\deriv M.
  $
  We write $\Gamma_1 \keq \Gamma_2$ if
  $\Gamma_1 \kleq \Gamma_2$ and $\Gamma_2 \kleq \Gamma_1$.
\end{definition}

The first three axioms deal exclusively with what it means to be a deduction relation:
  what is known can be derived \eqref{axiom:id};
  the more is known the more can be derived \eqref{axiom:mon};
what can be derived is known \eqref{axiom:cut}.
The \eqref{axiom:constr} axiom ensures the intruder is able to
construct arbitrary messages by composing known messages.
The \eqref{axiom:alpha} axiom
justifies \pre\alpha-renaming in our calculus.
The \eqref{axiom:relevancy} axiom allows us to only consider
boundedly many nonces maliciously injected by the intruder,
at each step of the protocol.

In the rest of the paper, unless otherwise specified,
we fix an arbitrary effective intruder model $\Intruder$
and omit the corresponding superscripts.

\begin{proposition}
\label{lm:locality}
  Given $\Gamma_1,\Gamma_2 \in \Know^\Sig$,
  $ \Gamma_1 \kleq \Gamma_2 $
    if and only if
  $ \forall M \in \Gamma_1\st \Gamma_2 \deriv M$.
  As a consequence, if ${\deriv}$ is decidable, so is ${\kleq}$.
\end{proposition}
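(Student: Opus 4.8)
The plan is to establish the biconditional by proving the two implications separately and then read off the decidability claim from the finiteness of $\Gamma_1$.

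The forward implication is immediate. Assuming $\Gamma_1 \kleq \Gamma_2$, I would take an arbitrary $M \in \Gamma_1$, observe that $M \deriv M$ by \eqref{axiom:id} and hence $\Gamma_1 \deriv M$ by \eqref{axiom:mon} (since $\set{M} \subseteq \Gamma_1$), and then conclude $\Gamma_2 \deriv M$ directly from the definition of $\kleq$.

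For the backward implication, assume $\Gamma_2 \deriv M$ for every $M \in \Gamma_1$ and fix an arbitrary $N$ with $\Gamma_1 \deriv N$; the goal is $\Gamma_2 \deriv N$. The heart of the argument is an auxiliary claim, proved by induction on the cardinality of a finite set $\Delta \subseteq \Msg$: if $\Gamma_2 \deriv M$ for all $M \in \Delta$ and $\Gamma_2 \union \Delta \deriv N$, then $\Gamma_2 \deriv N$. The base case $\Delta = \emptyset$ is trivial; in the inductive step I would single out some $M_0 \in \Delta$, put $\Delta' \is \Delta \setminus \set{M_0}$, obtain $\Gamma_2 \union \Delta' \deriv M_0$ from $\Gamma_2 \deriv M_0$ via \eqref{axiom:mon}, note that $\Gamma_2 \union \Delta$ is exactly $(\Gamma_2 \union \Delta'), M_0$, and apply \eqref{axiom:cut} to $\Gamma_2 \union \Delta' \deriv M_0$ and $(\Gamma_2 \union \Delta'), M_0 \deriv N$ to get $\Gamma_2 \union \Delta' \deriv N$; the induction hypothesis then applies to $\Delta'$ (whose elements are still all derivable from $\Gamma_2$). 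Instantiating the claim with $\Delta = \Gamma_1$ — legitimate because $\Gamma_1 \in \Know$ is finite and, by assumption, each of its elements is derivable from $\Gamma_2$ — together with $\Gamma_2 \union \Gamma_1 \deriv N$ (from $\Gamma_1 \deriv N$ and \eqref{axiom:mon}) yields $\Gamma_2 \deriv N$, hence $\Gamma_1 \kleq \Gamma_2$.

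For the final sentence, I would observe that the equivalence just proved reduces the test $\Gamma_1 \kleq \Gamma_2$ to checking the finite conjunction $\bigwedge_{M \in \Gamma_1}\Gamma_2 \deriv M$, which is decidable whenever $\deriv$ is. I do not anticipate a genuine obstacle here; the only point requiring a little care is the bookkeeping in the iterated use of \eqref{axiom:cut}, in particular that removing an element $M_0$ already contained in $\Gamma_2$ leaves $\Gamma_2 \union \Delta$ unchanged, so the cut step stays valid in that degenerate case. Note that \eqref{axiom:constr}, \eqref{axiom:alpha} and \eqref{axiom:relevancy} are not needed for this proposition.
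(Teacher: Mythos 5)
Your proof is correct: the forward direction via \eqref{axiom:id} and \eqref{axiom:mon}, the backward direction by induction on $\card{\Gamma_1}$ with an iterated use of \eqref{axiom:cut}, and the reduction of deciding ${\kleq}$ to finitely many ${\deriv}$-queries are all sound, and your remark about the degenerate case $M_0 \in \Gamma_2$ is a nice touch. The paper leaves this proposition unproved, but your argument is exactly the iterated-\eqref{axiom:cut}-plus-\eqref{axiom:mon} technique the paper itself uses for the closely related \cref{lm:kleq-add-same}, so this matches the intended proof.
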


Our framework uses the derivability relation as a black box
and does not rely on the way it is specified
(e.g.~with a rewriting system or a deduction system).
It is possible to formalise as an effective intruder model
cryptographic primitives such as XOR,
hashes and blind signatures.
We present here, for illustration, a model of (a)symmetric encryption,
and elaborate on extensions in \cref{app:more-primitives}.
We find it convenient to specify it with
a sequent calculus in the style of~\cite{tiu10lmcs}.
This is an alternative to more intuitive natural-deduction-style rules,
which has the key advantage of being \textsc{cut}-free, while admitting \textsc{cut}.
This simplifies considerably the proofs of properties
(e.g.~\cref{lm:symm-absorbing})
of the intruder model.

\begin{example}[Model of Encryption]
  Symmetric and asymmetric encryption
  can be modelled using the signature
  $
    \SigSymm = \set{
      \pair{\cdot}{\cdot} \,,\,
      \enc{\cdot}{\cdot}  \,,\,
      \aenc{\cdot}{\cdot} \,,\,
      \pub{\cdot}
    }
  $,
  where
    $\pair{M}{N}$ pairs messages $M$ and $N$,
    $\enc{M}{N}$ represents the message $M$
      encrypted with symmetric key $N$,
    $\aenc{M}{N}$ represents the message $M$
      encrypted with asymmetric key $N$, and
    $\pub{K}$ is the public key associated with the private key $K$.
  The intruder model for (a)symmetric encryption is the model
  $\IntrSymm = (\SigSymm, \deriv)$ where $\deriv$ is defined by
  the deduction rules in \cref{fig:symmetric-intruder}.
\end{example}

\begin{proposition}
\label{prop:sigsymm-axioms}
  The model $ \SigSymm $ is an effective intruder model.
\end{proposition}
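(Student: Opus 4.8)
The plan is to verify, for the sequent calculus defining $\deriv$ in \cref{fig:symmetric-intruder}, each of the axioms of \cref{def:intruder-model} together with decidability, relying throughout on the fact that the calculus is \textsc{cut}-free while admitting \textsc{cut}. The routine axioms come first. \eqref{axiom:id} holds because the calculus has an axiom rule for names, and general $M\deriv M$ follows by a straightforward induction on the structure of $M$, composing the left- and right-introduction rules. \eqref{axiom:mon} is admissibility of left weakening, by induction on derivations. \eqref{axiom:constr} is immediate from the right-introduction rules for $\pair{\cdot}{\cdot}$, $\enc{\cdot}{\cdot}$, $\aenc{\cdot}{\cdot}$ and $\pub{\cdot}$, which, after weakening, have exactly the shape $M_1,\dots,M_n \deriv \constr{f}(M_1,\dots,M_n)$. \eqref{axiom:alpha} follows because all rules are schematic in names: any renaming of $\names(\Gamma)$ extends to a bijection on $\Names$, which maps a derivation of $\Gamma\deriv M$ to one of $\Gamma\theta\deriv M\theta$, and conversely by applying the inverse bijection.

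The two substantial points, \eqref{axiom:cut} and \eqref{axiom:relevancy}, both hinge on a cut-elimination theorem. Following Tiu~\cite{tiu10lmcs}, I would prove \textsc{cut} admissible by the standard lexicographic induction on the pair consisting of the size of the cut message and the sum of the heights of the two premise derivations, with the usual case analysis: commuting cases push the cut towards the leaves, and a principal cut on $\constr{f}(\vec N)$ is replaced by cuts on the strictly smaller $N_i$. This gives \eqref{axiom:cut} directly, and, crucially, the subformula property for cut-free derivations: every message occurring anywhere in a cut-free derivation of $\Gamma\deriv M$ is a subterm of some message in $\Gamma\cup\set{M}$.

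For \eqref{axiom:relevancy}, assume $\Gamma,a\deriv M$ with $a\notin\names(\Gamma,M)$, and fix a cut-free derivation. Since $a$ is not a subterm of any message in $\Gamma\cup\set{M}$, the subformula property forces $a$ to never occur on the right of any sequent: a right rule would put a compound message containing $a$ on the right below, which cannot be a subterm of the endsequent; and as $a$ is atomic no right rule decomposes it and no left rule touches the right formula, so such an occurrence could never be eliminated down to the $a$-free endsequent. Consequently the axiom rule is never applied with $a$ principal, and being atomic $a$ is never principal in a left rule either; hence $a$ is dead weight and can be removed from every left-hand side by an easy induction on the derivation, yielding $\Gamma\deriv M$. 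Finally, decidability of $\deriv$ follows from the cut-free presentation: backward proof search need only consider sequents whose messages lie in the finite set of subterms of the goal and whose left-hand sides are subsets of that set, so saturating this finite space from the axioms terminates and decides derivability.

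The main obstacle I expect is the cut-elimination argument itself — in particular getting the principal/commuting case analysis to close cleanly for the asymmetric-encryption and public-key rules, where the $\pub{\cdot}$ structure interacts with decryption — and making sure the subformula property is phrased sharply enough to power the \eqref{axiom:relevancy} argument; \eqref{axiom:relevancy} is then short but genuinely depends on cut-freeness, not just on \textsc{cut}-admissibility.
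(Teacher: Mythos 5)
Your proposal is correct and follows essentially the same route as the paper: \eqref{axiom:id}, \eqref{axiom:mon}, \eqref{axiom:constr} and \eqref{axiom:alpha} by inspection of the rules, \eqref{axiom:cut} by the cut-admissibility result of Tiu et al., and decidability from the fact that premises only mention subterms of the conclusion. (Note that \cref{rule:id} applies to arbitrary $M\in\Gamma$, not just names, so $M\deriv M$ is immediate and your structural induction is unnecessary.) The one place you diverge is \eqref{axiom:relevancy}: the paper does a direct induction on the depth of the derivation, removing $a$ from every sequent as it goes, whereas you route through cut-freeness and a subformula property to argue $a$ is never principal. Both work, but be careful with your claim that ``no left rule touches the right formula'': the first premises of \cref{rule:encl} and \cref{rule:aencl} replace the right formula by the key $K$ of a context formula $\enc{M}{K}$, so $a$ could in principle surface on the right that way. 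The argument survives because such an $\enc{M}{K}$ would be a compound subterm of the endsequent containing $a$, which the hypothesis $a\notin\names(\Gamma,M)$ rules out --- but this case must be treated explicitly rather than dismissed. The paper's direct induction avoids this bookkeeping (and does not need cut-freeness, only the fact that left rules introduce only subterms of the context), at the cost of being less reusable for other calculi.
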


\begin{figure}[tb]
  \adjustfigure[\footnotesize]\vspace*{-\abovedisplayskip}
  \begin{mathpar}
\inferrule*[right=Id]{\label{rule:id}
  M \in \Gamma
}{
  \Gamma \deriv M
}
\and
\inferrule*[right=Pub]{\label{rule:pub}
  \Gamma \deriv K
}{
  \Gamma \deriv \pub{K}
}
\and
\inferrule*[right=P\textsubscript{L}]{\label{rule:pl}
  \Gamma, \pair{M}{N}, M, N \deriv M'
}{
  \Gamma, \pair{M}{N} \deriv M'
}
\and
\inferrule*[right=P\textsubscript{R}]{\label{rule:pr}
  \Gamma \deriv M \and \Gamma \deriv N
}{
  \Gamma \deriv \pair{M}{N}
}
\and
\inferrule*[right=S\textsubscript{L}]{\label{rule:encl}
  \Gamma, \enc{M}{K} \deriv K
  \and
  \Gamma, \enc{M}{K}, M, K \deriv N
}{
  \Gamma, \enc{M}{K} \deriv N
}
\and
\inferrule*[right=S\textsubscript{R}]{\label{rule:encr}
  \Gamma \deriv M
  \and
  \Gamma \deriv K
}{
  \Gamma \deriv \enc{M}{K}
}
\and
\inferrule*[right=A\textsubscript{L}]{\label{rule:aencl}
  \Gamma, \aenc{M}{\pub{K}} \deriv K
  \and
  \Gamma, \aenc{M}{\pub{K}}, M, K \deriv N
}{
  \Gamma, \aenc{M}{\pub{K}} \deriv N
}
\and
\inferrule*[right=A\textsubscript{R}]{\label{rule:aencr}
  \Gamma \deriv M
  \and
  \Gamma \deriv N
}{
  \Gamma \deriv \aenc{M}{N}
}
\end{mathpar}   \vspace*{-2\belowdisplayskip}
  \caption{Deduction rules for the derivability relation of $\IntrSymm$}
  \label{fig:symmetric-intruder}
\end{figure}

\subsection{A Calculus for Cryptographic Protocols}
\label{sec:calculus}

A common approach to model cryptographic primitives is to
consider both constructors (e.g.~encryption) and destructors (e.g.~decryption).
Here messages only contain constructors,
and ``destruction'' is represented by pattern matching.
Fix a finite signature $\ProcN$ of process names
(ranged over by~$\p Q$)
each of which has a fixed arity $\arity(\p Q) \in \Nat$.
A protocol specification consists of an initial process $P$ and
a finite set $\Defs$ of (possibly recursive) definitions
of the form $ \p Q[x_1,\dots,x_n] \is A $,
with ${\arity(\p Q) = n}$,
where the syntax of $P$ and $A$ follows the grammar:
\begin{grammar}
  P \is
    \zero
    | \new x. P
    |  P \,{\parallel}\, P
    | \out{M}
    | \p Q[\vec{M}]
    & \text{(process)}
    \\
  A \is
\outpc{a}{M}
    | \inpc{a}{\vec{x}:M}.P
    | A + A
    & \text{(action)}
\end{grammar}
We use the vector notation
  $\vec{x} = x_1,\dots,x_n$
for lists of pairwise distinct names.
In an action $ \inpc{a}{\vec{x}:M}.P $, we call
  $\vec{x}:M$ the \emph{pattern},
and $P$ the \emph{continuation};
processes $\p Q[\vec{M}]$ are called \emph{process calls}.
If $\Gamma = \set{M_1, \ldots, M_k}$ is a finite set of messages,
then $\out{\Gamma} \is \out{M_1} \parallel \ldots \parallel \out{M_k}$.
We define $P^0 \is \zero$ and $ P^{n+1} \is P \parallel P^n $.
For brevity, we assume the special name $\mathbf{in}$ is known to the intruder.
The internal action~$\tact$,
is an abbreviation for $\inp{x:x}$, for a fresh~$x$.
Processes of the form $\out{M}$ or $\p Q[\vec{a}]$ are called \emph{sequential}.
The names $\vec{x}$ are bound in both
  $ \new\vec{x}.P $ and $ \inpc{c}{\vec{x}:M}.P $.
We denote the set of free names of a term $P$ with $\freenames(P)$
and the set of bound names with $\boundnames(P)$.
As is standard, we require, wlog, that
  $ \freenames(P) \inters \boundnames(P) = \emptyset $.
When nesting restrictions $\new\vec{x}.\new\vec{y}.P$,
we implicitly assume wlog that $\vec{x}$ and $\vec{y}$ are disjoint.
We assume there is at most one definition for each $\p Q \in \ProcN$,
and that for each definition  $ \p Q[x_1,\dots,x_n] \is A $,
${\freenames(A) \subseteq \set{x_1,\dots,x_n}}$.
The set $\Proc$ consists of all processes over an underlying signature $\ProcN$.

\subparagraph{Structural congruence}
We write $\aeq$ for standard \pre\alpha-equivalence.
Structural congruence, $\congr$,
is the smallest congruence relation that
includes $\aeq$,
and is associative and commutative with respect to
  $\parallel$ and $+$
  with $\zero$ as the neutral element,
and satisfies the standard laws:
  $\new a . \zero \congr \zero$,
  $\new a . \new b . P \congr \new b . \new a . P$, and
  $P \parallel \new a . Q \congr \new a . (P \parallel Q)$
    if $a \not\in \freenames(P)$.
Every process P is congruent to a process in \emph{standard form}:
\begin{equation}
  \new \vec{x}.\bigl(
    \out{M_1} \parallel \dots \parallel \out{M_m} \parallel
    \p{Q_1}[\vec{N}_1] \parallel \dots \parallel \p{Q_k}[\vec{N}_k]
  \bigr)
  \tag{SF}
  \label{def:stdform}
\end{equation}
where every name in $\vec{x}$ occurs free in some subterm.
We write $\stdf(P)$ for the standard form of~$P$,
which is unique up to \pre\alpha-equivalence,
and associativity and commutativity of parallel.
We abbreviate standard forms with
$\new\vec{x}.(\out{\Gamma} \parallel Q)$
where
all the active messages are collected in $\Gamma$, and
$Q$ is a parallel composition of process calls. Let $\stdf(P)$ be the expression~\eqref{def:stdform},
we define $\msg(P) = \set{M_1,\dots,M_m} \union \Union_{i=1}^k \vec{N}_i$.
Thus $\msg(P)$ is the set of messages appearing in a term.
When ${m=0}, k=0, \vec{x}=\emptyset$,
the expression~\eqref{def:stdform} is~$\zero$.

\subparagraph{Reduction semantics}
One can think of standard forms
$ \new\vec{x}.(\out{\Gamma} \parallel Q) $
as runtime configurations of the protocol.
They capture,
at a specific point in time,
the current relevant names (which encode nonces/keys/data),
the knowledge of the intruder~$\Gamma$,
and the local state of each participant.
A sequential term $ \p{Q}[\vec{N}] $ represents a single participant
in control state $\p{Q}$ with local knowledge of messages~$\vec{N}$.

\begin{figure}[tb]
  \adjustfigure \begin{mathpar}
  \infer*[right={Comm}]{
    \p Q_1[\vec{M}_1] \defeq \outpc{c}{N\subst{\vec{x}->\vec{M}'}}.P_1 + A_1 \\
    \p Q_2[\vec{M}_2] \defeq \inpc{c}{\vec{x}:N}.P_2 + A_2 \\
  }{\new \vec{a}.(
      \out{\Gamma}  \parallel
      \p Q_1[\vec{M}_1] \parallel
      \p Q_2[\vec{M}_2] \parallel
      C
    )
    \redto_\Defs
      \new \vec{a}.(
        \out{\Gamma}               \parallel
        P_1                        \parallel
        P_2\subst{\vec{x}->\vec{M}'} \parallel
        C
      )
  }
  \label{rule:comm}
  \\
  \infer*[right={Struct}]{
    P \congr P'
    \redto_\Defs
    Q' \congr Q
  }{P\redto_\Defs Q
  }
  \label{rule:struct}
  \and
  \infer*[right={PubOut}]{
    \p Q[\vec{M}] \defeq \outpc{c}{M}.P + A \\
    \Gamma \deriv c
}{\new \vec{a}.(
      \out{\Gamma}  \parallel
      \p Q[\vec{M}] \parallel
      C
    )
    \redto_\Defs
      \new \vec{a}.(
        \out{\Gamma}  \parallel
        \out{M} \parallel
        P \parallel
        C
      )
  }
  \label{rule:pub-out}
  \\
  \infer*[right={PubIn}]{
    \p Q[\vec{M}] \defeq \inpc{c}{\vec{x}:N}.P + A \\
    \Gamma, \vec{y} \deriv N \subst{\vec{x} -> \vec{M'}}  \\
    \Gamma \deriv c \\
    \vec{y} \text{ fresh}
}{\new \vec{a}.(
      \out{\Gamma}  \parallel
      \p Q[\vec{M}] \parallel
      C
    )
    \redto_\Defs
      \new \vec{a}.\new \vec{y}.(
        \out{\Gamma}  \parallel
        \out{\vec{y}} \parallel
        P\subst{\vec{x}->\vec{M'}} \parallel
        C
      )
  }
  \label{rule:pub-in}
\end{mathpar}   \caption{Operational semantics.}
  \label{fig:op-sem}
\end{figure}

Principals can communicate through channels;
a channel known by the intruder is considered insecure.
An input action over an insecure channel can be fired
if the intruder can produce any message that matches the action's pattern.
An output $\outpc{c}{M}$ to an insecure channel~$c$ leaks message~$M$ to the
intruder, who can decide to forward it angelically to a corresponding
input over~$c$ (modelling an honest step) or hijack the communication.

We write $ \p Q[\vec{M}] \defeq A $
  if $ \p Q[\vec{x}] \is A' \in \Defs $ and $ A \aeq A'[\vec{M}/\vec{x}] $,
  up to commutativity and associativity of~$+$.
The transition relation $ \redto_\Defs $ is defined in \cref{fig:op-sem}.
In \cref{rule:pub-in}, $\vec{y}$~denotes all fresh names introduced by the intruder in this step.
Thanks to \eqref{axiom:relevancy},
one can wlog ignore transitions where
$ \freenames(\vec{y}) \not\subseteq \freenames(\vec{M'}) $,
since unused names would simply not contribute to the intruder knowledge.
The sets
$
  \reach_\Defs(P) \is \set{Q | P \redto_\Defs^* Q}
$
and
$
  \traces_\Defs(P) \is
    \set{ Q_0 \cdots Q_n | P \kcongr Q_0 \redto_\Defs \cdots \redto_\Defs Q_n }
$
collect
  the processes reachable from $P$ and
  all the transition sequences from $P$ respectively,
given the definitions $\Defs$.
We omit $\Defs$ when unambiguous.

\begin{definition}[$\kcongr$]
  \emph{Knowledge congruence}, $P\kcongr Q$,
  is the smallest congruence that
  includes $\congr$ and such that
  $\out{\Gamma_1} \kcongr \out{\Gamma_2}$ if $\Gamma_1 \keq \Gamma_2$.
\end{definition}
Knowledge congruence is also characterised by
\[
  P_1 \kcongr P_2 \iff
    \stdf(P_1) \aeq \new\vec{x}.(\out{\Gamma_1} \parallel Q)
    \land
    \stdf(P_2) \aeq \new\vec{x}.(\out{\Gamma_2} \parallel Q)
    \land
    {\Gamma_1 \keq \Gamma_2}.
\]
Intuitively, modulo derivability, two processes $ P \kcongr Q $
are indistinguishable to the intruder and to the principals.
Formally, if $P \kcongr Q$ then
  the transitions systems ${(P, \redto_\Defs)}$ and ${(Q, \redto_\Defs)}$
  are isomorphic.
We thus close the reduction semantics under knowledge congruence:
we add the rule that if $
  P \kcongr P'\redto_\Defs Q' \kcongr Q
$ then $
  P\redto_\Defs Q
$.

While knowledge congruence captures when two configurations are essentially the same, knowledge embedding formalises the notion of ``sub-configuration''.

\begin{figure}[tb]
  \adjustfigure \begin{align*}
\p{S}[a, b, k_{as}, k_{bs}] &\is
  \inp{n_{a} : (n_{a}, b)}.
    \new k.\bigl(
      \out*{\enc{k}{k_{bs}}} \parallel
      \out*{\enc{k}{(n_{a}, k_{as})}} \parallel
      \p{S}[a, b, k_{as}, k_{bs}]
    \bigr)
\\
\p{A_1}[a, b, k_{as}] &\is
  \tact.\new n_{a}.(
    \out{(n_{a}, b)} \parallel
    \p{A_2}[a, b, k_{as}, n_{a}] \parallel
    \p{A_1}[a, b, k_{as}]
  )
\\
\p{A_2}[a, b, k_{as}, n_{a}] &\is
  \inp*{k : \enc{k}{(n_{a}, k_{as})}}.\p{A_3}[a, b, k_{as}, k]
\\
\p{A_3}[a, b, k_{as}, k] &\is
  \inp{n_{b} : \enc{n_{b}}{k}}.\out{\enc{n_{b}}{(k, k)}}
\\
\p{B_1}[a, b, k_{bs}] &\is
  \inp{k : \enc{k}{k_{bs}}}.
    \new n_{b}.\bigl(
      \out{\enc{n_{b}}{k}} \parallel
      \p{B_2}[a, b, k_{bs}, n_{b}, k] \parallel
      \p{B_1}[a, b, k_{bs}]
    \bigr)
\\
\p{B_2}[a, b, k_{bs}, n_{b}, k] &\is
  \inp{\enc{n_{b}}{(k, k)}}.\p{Secret}[k]
\end{align*} \caption{Formal model of \cref{ex:running}.}
  \label{fig:running-model}
\end{figure}

\begin{definition}[Knowledge embedding]
  The \emph{knowledge embedding} relation
  ${P_1 \kembed P_2}$ holds if
  $
    P_1 \congr \new\vec{x}.(\out{\Gamma_1} \parallel Q)
  $, $
    P_2 \congr \new\vec{x}.\new\vec{y}.(\out{\Gamma_2} \parallel Q \parallel Q')
  $ and $
    {\Gamma_1 \kleq \Gamma_2}
  $.
\end{definition}

\begin{proposition}
\label{prop:kn-embed-vs-kn-congr}
  $ P_1 \kcongr P_2 $
    if and only if
  $ P_1 \kembed P_2 $ and $ P_2 \kembed P_1 $.
\end{proposition}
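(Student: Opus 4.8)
The plan is to derive both directions from the standard-form characterisation of $\kcongr$ stated just above the proposition: $P_1 \kcongr P_2$ holds iff $P_1$ and $P_2$ have standard forms $\new\vec{x}.(\out{\Gamma_1}\parallel Q)$ and $\new\vec{x}.(\out{\Gamma_2}\parallel Q)$ with the \emph{same} restriction prefix $\vec{x}$ and the \emph{same} process-call part $Q$, and $\Gamma_1 \keq \Gamma_2$. The ``$\Rightarrow$'' direction is then immediate: from such a presentation, instantiate the definition of $\kembed$ with an empty extra prefix and $Q' \is \zero$; since $\Gamma_1 \keq \Gamma_2$ gives both $\Gamma_1 \kleq \Gamma_2$ and $\Gamma_2 \kleq \Gamma_1$, we obtain $P_1 \kembed P_2$ and $P_2 \kembed P_1$ at once.

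For ``$\Leftarrow$'' I would first normalise, using only $\congr$-steps (collecting active outputs into the $\out{\cdot}$ component, dropping vacuous restrictions), the two witnessing decompositions so that they are standard forms: $P_1 \congr \new\vec{x}.(\out{\Gamma_1}\parallel Q)$, $P_2 \congr \new\vec{x}.\new\vec{y}.(\out{\Gamma_2}\parallel Q \parallel Q')$ with $\Gamma_1 \kleq \Gamma_2$, and symmetrically $P_2 \congr \new\vec{x}'.(\out{\Gamma_1'}\parallel \hat Q)$, $P_1 \congr \new\vec{x}'.\new\vec{y}'.(\out{\Gamma_2'}\parallel \hat Q \parallel \hat Q')$ with $\Gamma_1' \kleq \Gamma_2'$, where $Q,Q',\hat Q,\hat Q'$ are parallel compositions of process calls. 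The crucial observation is that two quantities are invariants of $\congr$-classes, readable off any standard form: the number of process calls, and the number of restricted names. The number of process calls cannot decrease along $\kembed$, because the witnessing remainder ($Q'$, resp.\ $\hat Q'$) contributes only process calls (and possibly more outputs); going around the cycle $P_1 \kembed P_2 \kembed P_1$ it is therefore constant, which forces $Q' \congr \hat Q' \congr \zero$. After this simplification the standard forms of $P_1$ have $|\vec{x}|$ and $|\vec{x}'| + |\vec{y}'|$ restricted names, and those of $P_2$ have $|\vec{x}| + |\vec{y}|$ and $|\vec{x}'|$; invariance of this count gives $|\vec{x}| = |\vec{x}'| + |\vec{y}'|$ and $|\vec{x}| + |\vec{y}| = |\vec{x}'|$, hence $\vec{y} = \vec{y}' = \emptyset$. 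So $P_1$ has standard forms $\new\vec{x}.(\out{\Gamma_1}\parallel Q)$ and $\new\vec{x}'.(\out{\Gamma_2'}\parallel \hat Q)$, $P_2$ has standard forms $\new\vec{x}.(\out{\Gamma_2}\parallel Q)$ and $\new\vec{x}'.(\out{\Gamma_1'}\parallel \hat Q)$, with $\Gamma_1 \kleq \Gamma_2$ and $\Gamma_1' \kleq \Gamma_2'$.

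It remains to upgrade $\kleq$ to $\keq$. By uniqueness of standard forms (up to $\aeq$ and associativity/commutativity of $\parallel$), the two standard forms of $P_1$ differ by a bijective renaming $\rho$ of $\vec{x}'$ onto $\vec{x}$ that sends $\Gamma_2'$ to $\Gamma_1$ and $\hat Q$ to $Q$ (up to reordering of parallel components). Renaming $P_2$'s second standard form along $\rho$ and comparing with $P_2$'s first standard form, uniqueness again produces a permutation $\beta$ of $\vec{x}$, which is a symmetry of $Q$, such that $\Gamma_1'(\rho\beta) = \Gamma_2$ (writing $\rho\beta$ for $\rho$ followed by $\beta$). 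Since $\kleq$ is preserved by bijective renamings, an easy consequence of \eqref{axiom:alpha}, applying $\rho\beta$ to $\Gamma_1' \kleq \Gamma_2'$ gives $\Gamma_2 \kleq \Gamma_1\beta$; with $\Gamma_1 \kleq \Gamma_2$ this yields $\Gamma_1 \kleq \Gamma_1\beta$, and renaming iteratively $\Gamma_1 \kleq \Gamma_1\beta \kleq \Gamma_1\beta^2 \kleq \cdots$. As $\beta$ has finite order $m$, this chain closes at $\Gamma_1\beta^m = \Gamma_1$, so $\Gamma_1\beta \keq \Gamma_1$, whence $\Gamma_2 \kleq \Gamma_1\beta \keq \Gamma_1$ and finally $\Gamma_1 \keq \Gamma_2$; the characterisation of $\kcongr$ then gives $P_1 \kcongr P_2$. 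I expect the main obstacle to be exactly this bookkeeping: standard forms are unique only up to $\aeq$ and reordering of parallel components, so the renamings relating the four decompositions cannot simply be cancelled — the residual symmetry $\beta$ of $Q$ must be neutralised by the finite-order argument — and keeping the $\alpha$-renaming of $\vec{x},\vec{x}',\vec{y},\vec{y}'$ coherent across all four decompositions is where most of the care goes.
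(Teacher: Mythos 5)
Your proof is correct, and at the top level it follows the same strategy as the paper's --- pass to standard forms, show the restriction prefixes and the process-call parts coincide, and upgrade the two inclusions $\Gamma_1 \kleq \Gamma_2$ and $\Gamma_2 \kleq \Gamma_1$ to $\Gamma_1 \keq \Gamma_2$ --- but it is substantially more careful exactly where the paper compresses. The paper's proof asserts, as ``easy to check'', a characterisation of $P_1 \kembed P_2$ directly in terms of $\stdf(P_1)$ and $\stdf(P_2)$ (shared prefix, shared call part, $\Gamma_1 \kleq \Gamma_2$) and then reads off $\Gamma_1 \keq \Gamma_2$, $\vec{x}_1 = \vec{x}_2$ and $Q_1 = Q_2$ from the two embeddings; it does not address the fact that the representatives witnessing $P_1 \kembed P_2$ and those witnessing $P_2 \kembed P_1$ need only agree up to \pre\alpha-renaming, so the two occurrences of ``$\Gamma_1$'' are a priori related by a nontrivial permutation of the bound names. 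Your counting of the two $\congr$-invariants (number of process calls, number of restricted names in standard form) is a clean way to force $Q' \congr \hat Q' \congr \zero$ and $\vec{y} = \vec{y}' = \emptyset$, and your finite-order argument for the residual symmetry $\beta$ of $Q$ --- iterating $\Gamma_1 \kleq \Gamma_1\beta \kleq \Gamma_1\beta^2 \kleq \cdots$ until the cycle closes at $\Gamma_1\beta^m = \Gamma_1$ --- is a genuine addition that discharges the renaming subtlety the paper leaves implicit. The one loose end is your claim that the two witnessing decompositions can be normalised to standard forms that literally share the prefix $\vec{x}$: a name of $\vec{x}$ may be vacuous in $P_1$'s body yet occur in $\Gamma_2$ or $Q'$, in which case it must be moved from $\vec{x}$ into $\vec{y}$ rather than dropped from both sides; this repartitioning does not affect your counting argument, but it should be said, since as stated the displayed shapes need not both be standard forms.
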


\begin{theorem}
\label{th:kembed-simulation}
  Knowledge embedding is a simulation, that is,
  for all $P$, $P'$ and $Q$,
  if $ P \redto Q $ and $ P \kembed P' $
  then there is a $Q'$ such that
    $ P' \redto Q' $ and $Q \kembed Q'$.
\end{theorem}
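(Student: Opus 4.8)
The plan is to prove the simulation property by a careful case analysis on the reduction $P \redto Q$, unfolding $\kembed$ into its defining shape and showing that the extra structure tolerated by $\kembed$ never blocks a transition. So assume $P \kembed P'$, meaning $P \congr \new\vec{x}.(\out{\Gamma_1} \parallel R)$ and $P' \congr \new\vec{x}.\new\vec{y}.(\out{\Gamma_2} \parallel R \parallel R')$ with $\Gamma_1 \kleq \Gamma_2$; by \cref{rule:struct} (closure under $\congr$) we may work directly with these standard forms. Since the only reduction rules are \cref{rule:comm}, \cref{rule:pub-out}, \cref{rule:pub-in}, and the relevant redex in each case is a parallel component (a process call, or a process call together with a second one for \cref{rule:comm}) sitting inside $R$, the same component is present in $P'$ — it lies in the $R$ part, untouched by the added $\new\vec{y}$, $R'$, and by the enlarged knowledge. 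The continuation introduced by firing the redex is identical on both sides (same definitions $\Defs$, same instantiating messages $\vec{M}'$, since these come from the local state stored in the process call, which is shared). Hence the residual $Q$ and $Q'$ will again be of the form demanded by $\kembed$, with the same $\vec{y}, R'$ and a knowledge pair that stays ordered.

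First I would handle \cref{rule:comm}: here no side condition on $\Gamma$ is used, the redex $\p{Q}_1[\vec{M}_1] \parallel \p{Q}_2[\vec{M}_2]$ occurs in $R$, and after the step both sides keep $\out{\Gamma_1}$ resp.\ $\out{\Gamma_2}$ unchanged, so $Q \kembed Q'$ is immediate with the witnesses unchanged. Next, \cref{rule:pub-out}: the side condition is $\Gamma_1 \deriv c$, and by $\Gamma_1 \kleq \Gamma_2$ we get $\Gamma_2 \deriv c$, so the rule fires on $P'$ as well; the step adds $\out{M}$ to both knowledge sets, and since $\Gamma_1 \kleq \Gamma_2$ implies $\Gamma_1, M \kleq \Gamma_2, M$ (monotonicity of $\kleq$ under adding the same message, a direct consequence of \eqref{axiom:mon} and \cref{lm:locality}), the ordering is preserved. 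The main work is \cref{rule:pub-in}: the transition uses $\Gamma_1 \deriv c$ and $\Gamma_1, \vec{z} \deriv N\subst{\vec{x} -> \vec{M}'}$ for some fresh $\vec{z}$; again $\Gamma_2 \deriv c$, and since $\Gamma_1 \kleq \Gamma_2$ we have $\Gamma_1, \vec{z} \kleq \Gamma_2, \vec{z}$ (the $\vec{z}$ are fresh for both, so \eqref{axiom:alpha}/\eqref{axiom:mon} give this), whence $\Gamma_2, \vec{z} \deriv N\subst{\vec{x} -> \vec{M}'}$ — so $P'$ can perform the matching input, introducing the \emph{same} fresh $\vec{z}$ and the same substituted continuation. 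The resulting configurations have knowledge $\Gamma_1, \vec{z}$ and $\Gamma_2, \vec{z}$ (still ordered) and the binder $\new\vec{z}$ added on both sides, so the $\kembed$ witnesses become $\vec{x}\vec{z}$ for the shared binder and $\vec{y}, R'$ unchanged.

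The subtle point, and the one I would be most careful about, is bookkeeping of bound names: one must ensure that the fresh names $\vec{z}$ chosen for the intruder's input on the $P$ side can also be chosen fresh on the $P'$ side, i.e.\ disjoint from $\vec{y}$ and from $\boundnames(R')$, which is fine by \pre\alpha-renaming ($\vec{y}$ is finite), and that the standard-form reshuffling needed to expose the redex as a top-level parallel component commutes with the extra $\new\vec{y}.(\cdots \parallel R')$ context — this is exactly the scope-extrusion law $P \parallel \new a.Q \congr \new a.(P \parallel Q)$ when $a \notin \freenames(P)$, together with the convention that free and bound names are kept disjoint. Once these congruence manipulations are lined up, each case closes mechanically. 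I would also note that closure of $\redto$ under $\kcongr$ is not needed here — plain $\congr$-closure via \cref{rule:struct} suffices — but it does no harm. I expect the write-up to be dominated by making the three reduction cases and the name-freshness argument precise rather than by any conceptual difficulty.
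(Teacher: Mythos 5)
Your proposal is correct and follows what is clearly the intended argument. The paper does not in fact spell out a proof of this theorem in its appendix, but the auxiliary lemmas it does prove there --- in particular \cref{lm:kleq-add-same} ($\Gamma_1 \kleq \Gamma_2$ implies $\Gamma,\Gamma_1 \kleq \Gamma,\Gamma_2$) and its corollaries --- are exactly the ingredients your case analysis consumes, so the route via standard forms, a case split on \textsc{Comm}/\textsc{PubOut}/\textsc{PubIn}, preservation of the shared redex inside the common component $Q$, and monotonicity of $\kleq$ under adding the same messages is the one the authors rely on. Your handling of the freshness of the intruder-injected names in \textsc{PubIn} (choosing them globally fresh, up to \pre\alpha-renaming of the resulting bound occurrences) is also the right way to discharge the only delicate bookkeeping step.

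The one point to tighten is the parenthetical claim that closure of $\redto$ under $\kcongr$ ``is not needed''. It is true that the \emph{simulating} transition on the $P'$ side can be built with \textsc{Struct} alone, but the \emph{given} transition $P \redto Q$ may itself be an instance of the added closure rule $P \kcongr P_1 \redto Q_1 \kcongr Q$, so a case analysis on how $P \redto Q$ was derived must include this case. It closes immediately --- by \cref{prop:kn-embed-vs-kn-congr}, $\kcongr$ is contained in $\kembed$ in both directions, so $P_1 \kembed P \kembed P'$; the base cases then give a $Q'$ with $P' \redto Q'$ and $Q_1 \kembed Q'$, and transitivity of $\kembed$ yields $Q \kembed Q_1 \kembed Q'$ --- but it should be stated rather than waved away.
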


\begin{example}
\label{ex:running}
  Consider the following toy protocol,
  given in Alice\&Bob notation,
  meant to establish a new session key $K$ between $A$ and $B$
  through a trusted server $S$:
\par\noindent \begin{minipage}[c]{.5\textwidth}\begin{alicebob}
  \step A -> S: N_A, B \label{step:AS}
  \step S -> B: \enc{K}{(N_A,K_{AS})}, \enc{K}{K_{BS}} \label{step:SB}
\end{alicebob}\end{minipage}\begin{minipage}[c]{.5\textwidth}\begin{alicebob}[3]
  \step B -> A: \enc{K}{(N_A,K_{AS})}, \enc{N_B}{K} \label{step:BA}
  \step A -> B: \enc{N_B}{(K,K)} \label{step:AB}
\end{alicebob}\end{minipage}   \vspace{\belowdisplayshortskip}
  \par\smallskip\noindent \Cref{fig:running-model} shows the protocol formalised in our calculus.
  Assume the initial state~is
  \[
    P_0 =
      \new a,b,k_{as},k_{bs}.(
        \p S[a,b,k_{as},k_{bs}] \parallel
        \p A_1[a,b,k_{as}] \parallel
        \p B_1[a,b,k_{bs}] \parallel
        \out{a,b}
      ).
  \]
\Cref{step:AS} is initiated by $\p A_1$ which sends
  some new name $n_a$ to the server;
  since communication is over an insecure channel,
  the message is just output without indicating the intended recipient.
  The server receives the message (or any message the intruder may decide to forge instead)
  and outputs the fresh key~$k$ encrypted with $ k_{bs} $
  (the long-term key between~$B$ and~$S$)
  and with the pair $ (n_a,k_{as}) $
  (note the use of non-atomic encryption keys).
  In the protocol, these two messages are sent to $B$ but
  we model \cref{step:SB} by $\p B_1$ which just receives the message
  relevant to $B$.
  The forwarding of $\enc{k}{(n_a,k_{as})}$ from $S$ to $A$
  is performed by the intruder instead of $B$ in the model.

  In the last two steps, modelled by $\p B_2$ and $\p A_3$, $B$ sends a nonce $n_b$ encrypted with $k$,
  to challenge $A$ to prove she knows $k$, which she does by sending back
  $\enc{n_b}{(k,k)}$. At this point, $B$ is convinced that by encrypting messages with $k$
  they will be only accessible to $A$.
  We model this by making $\p B_2$ transition to $\p{Secret}[k]$
  after a successful challenge.
We always assume the definition
  $ \p{Secret}[k] \is \inp{k}.\p{Leak}[k] $.
  A transition to $\p{Leak}[k]$ is only possible when the intruder can derive $k$ so we can check whether the secrecy assertion holds by checking that
  no reachable process contains a call to $\p{Leak}[k]$.

  Notice how $\p A_1$ and $\p B_1$ spawn both the continuation of the session
  and (recursively) a process ready to start a new session.
  This creates the possibility of an unbounded number of sessions,
  each of which will involve fresh $n_a, n_b,$ and $k$.
\end{example}

\subparagraph{Threat model}
Our reduction semantics follows the Dolev-Yao attacker model
in representing the intruder's interference:
the intruder mediates every communication over insecure channels,
is able to create new names and analyse and construct messages from
all the messages that have been communicated insecurely so far.
Threat models that go beyond Dolev-Yao include dishonest participants
and compromised old session keys.
These aspects are not embedded in the semantics,
but can be modelled through the process definitions.
If we wanted to model compromised keys in \cref{ex:running}, for instance,
we could modify the definition of $ \p B_2 $ to
$
  \p{B_2}[a, b, k_{bs}, n_{b}, k] \is
    \inp{\enc{n_{b}}{(k, k)}}.\p{Secret}[k]
    +
    \inp{\enc{n_{b}}{(k, k)}}.\out{k}
$
which makes a non-deterministic choice to declare $k$ a secret,
or to consider it as old and reveal it.

\begin{remark}[Implementable patterns]
  Our calculus represents message deconstruction (e.g.~decryption)
  with pattern matching.
  However, general pattern matching is too powerful:
  a pattern like $\inp{x,k:\enc{x}{k}}$ would obtain \emph{both}
  the key $k$ and the plaintext $x$ from an encrypted message!
  This is only a modelling problem:
one should make sure all patterns can be implemented
  using the cryptographic primitives.
  Consider a pattern $\vec{x}:M$ and
  let $ Z = \names(M) \setminus \vec{x} $;
  the pattern is \emph{implementable},
  if, for all $\theta\from Z \pto \Msg$,
  we have
  $
    M\theta, Z\theta \deriv y
  $
  for all $y \in \vec{x}$.
\end{remark}

\subsection{Depth-Bounded Protocols}
\label{sec:depth}

We can now define the class of depth-bounded protocols,
a strict generalisation of the notion in~\cite{DOsualdoOT17}.
While the definitions of~\cite{DOsualdoOT17} depend on fixing the intruder
to symmetric encryption only,
here we define it fully parametrically to the intruder model.

\begin{definition}[Depth]
\label{def:depth}
The \emph{nesting of restrictions} of a term is given by the function
$
  \nestr(\p{Q}[\vec{a}]) \is \nestr(\out{M}) \is \nestr(\zero) \is 0$,
$  \nestr(\new x.P) \is 1 + \nestr(P)$,
$  \nestr(P \parallel Q) \is \max(\nestr(P), \nestr(Q)).
$
The \emph{depth} of a term is defined as
the minimal nesting of restrictions in its knowledge congruence class,
$
  \depth(P) \is \min \set{ \nestr(Q) | Q \kcongr P }.
$
\end{definition}

\begin{lemma}
  \label{lm:name-reuse}
  Every $Q$ is \pre\alpha-equivalent to a process $Q'$
  such that $\card{\boundnames(Q')} \leq \nestr(Q)$.
\end{lemma}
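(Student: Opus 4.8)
The plan is to prove the statement by induction on the structure of the process $Q$, carefully reusing bound names across parallel components wherever the nesting of restrictions permits. The key observation is that $\nestr(P \parallel Q) = \max(\nestr(P), \nestr(Q))$, so the two parallel branches can be renamed to share the \emph{same} pool of bound names rather than requiring disjoint pools; only the $\new$-prefix case strictly increases the budget by one.

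First I would set up the induction hypothesis in a slightly strengthened form: for every process $Q$ and every finite set of names $\mathcal{F} \supseteq \freenames(Q)$, there is a $Q' \aeq Q$ with $\boundnames(Q') \cap \mathcal{F} = \emptyset$ and $\card{\boundnames(Q')} \leq \nestr(Q)$. The freeness-avoidance parameter $\mathcal{F}$ is what makes the induction go through, since in the $\parallel$ case the bound names chosen for one branch must be kept free of capture with respect to the other branch's free names. The base cases ($\zero$, $\out{M}$, $\p Q[\vec a]$) are immediate: these have no bound names and $\nestr = 0$. For $Q = \new x.P$, apply the IH to $P$ with the enlarged set $\mathcal{F} \union \set{x}$ to obtain $P' \aeq P$ using at most $\nestr(P)$ bound names all avoiding $\mathcal{F} \union \set{x}$; then $\new x.P'$ uses at most $\nestr(P)+1 = \nestr(Q)$ bound names, and by $\alpha$-renaming $x$ itself to a fresh name outside $\mathcal{F}$ we also ensure $x \notin \mathcal{F}$.

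The crux is the parallel case $Q = P_1 \parallel P_2$. Apply the IH to $P_1$ with $\mathcal{F}_1 = \mathcal{F} \union \freenames(P_2)$ to get $P_1' \aeq P_1$ with bound-name set $B_1$, $\card{B_1} \leq \nestr(P_1)$, $B_1 \cap \mathcal{F}_1 = \emptyset$. Now apply the IH to $P_2$ with $\mathcal{F}_2 = \mathcal{F} \union \freenames(P_1') \union B_1$ to get $P_2' \aeq P_2$ with bound-name set $B_2$, $\card{B_2} \leq \nestr(P_2)$, $B_2 \cap \mathcal{F}_2 = \emptyset$. Then $\alpha$-rename within $P_2'$ so that its bound names reuse names from $B_1$: concretely, pick an injection of $B_2$ into $B_1 \union (\text{fresh names})$ that is the identity on as much of $B_1$ as possible, yielding $P_2'' \aeq P_2$ with $\boundnames(P_2'') \subseteq B_1 \union B_2'$ where $\card{B_1 \union \boundnames(P_2'')} = \max(\card{B_1}, \card{B_2}) \leq \max(\nestr(P_1), \nestr(P_2)) = \nestr(Q)$. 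One must check that this renaming is legitimate, i.e.\ does not capture free names of $P_2$ (guaranteed because $B_1 \cap \freenames(P_2) \subseteq B_1 \cap \mathcal{F}_1 = \emptyset$) and does not alter $P_1'$ (the components are in parallel, so renaming bound names of $P_2''$ is a local operation). Finally $Q' = P_1' \parallel P_2''$ satisfies the required bound and avoids $\mathcal{F}$.

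The main obstacle I anticipate is bookkeeping the capture-avoidance side conditions precisely in the parallel case — ensuring that reusing $B_1$'s names inside $P_2$ never accidentally binds a name that occurs free in $P_2$, and that the two rounds of applying the IH compose without the freeness sets growing in a way that breaks the cardinality bound (they don't, since cardinality of bound names is controlled by $\nestr$, not by $\mathcal{F}$). Everything else is a routine structural induction, and the $\new$-case increment of $1$ matching the definition of $\nestr$ on $\new x.P$ is exactly what balances the recursion.
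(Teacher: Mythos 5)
Your proof is correct and matches the idea the paper relies on (the lemma is stated without an explicit proof, but the intended argument is the canonical renaming visible in the forest encoding, where every binder at nesting level $i$ is renamed to the fixed name $x_i$): names can be shared across parallel branches because $\nestr$ takes a maximum there, and only a chain of nested restrictions forces distinct names. Your structural induction with the explicit avoidance set $\mathcal{F}$ and the injective remapping of $B_2$ into $B_1$ (legitimate since $B_1$ is disjoint from $\freenames(P_2)$ and from $B_2$) is a correct, more explicit rendering of exactly that construction.
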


\noindent Consider for example
$
  P = \new a,b,c.(
    \out{a}
      \parallel
    \out{\enc{b}{a}}
      \parallel
    \out{\enc{c}{b}}
      \parallel
    \out{c}
  )
$
which has $\nestr(P) = 3$.
The process $P$ is knowledge-congruent to
$
  Q = (
    \new a.\out{a}
      \parallel
    \new b.\out{b}
      \parallel
    \new c.\out{c}
  )
$
which has $\nestr(Q) = 1$;
this gives us $\depth(P) = \nestr(Q) = 1$.
Although $\boundnames(Q) = \set{a,b,c}$,
by \pre\alpha-renaming all names to $x$ we obtain
$
  Q' = {(
    \new x.\out{x}
      \parallel
    \new x.\out{x}
      \parallel
    \new x.\out{x}
  )}
$
which has the property $\card{\boundnames(Q')} \leq \nestr(Q) \leq \depth(P)$.
More generally, \cref{lm:name-reuse} says that processes of depth $k$
can always be represented using at most~$k$ unique names,
by reusing names in disjoint scopes.

Let
$
  \Size{s} \is
    \set{ P \in \Proc | \forall M \in \msg(P)\colon \size(M) \leq s }
$
be the set of processes containing messages of size at most~$s$.
The set $\Depth[X]{s,k}$
is the set of processes of depth at most $k \in \Nat$,
with free names in $X$, and messages not exceeding size $s$:
\[
  \Depth[X]{s,k} \is
    \set{ P \in \Size{s} |
            \freenames(P) \subseteq X,
            \exists Q \in \Size{s}\st Q \kcongr P \land \nestr(Q) \leq k }.
\]
When starting from some initial process~$P_0$,
every reachable process~$P$ has $\freenames(P) \subseteq \freenames(P_0)$
so~$X$ can always be fixed to be $\freenames(P_0)$.
We therefore omit~$X$ from the superscripts to unclutter notation.
The set of processes reachable from~$P$ while respecting a size bound~$s$
is the set
$
  \reach_\Defs^s(P) \is
    \set{ Q
      | P\cdots Q \in \traces_\Defs(P) \inters \Size{s}^*
    }.
$

\begin{definition}
\label{def:sk-bounded}
  For some $s,k\in \Nat$,
  we say the process $P$ is \emph{\pre(s,k)-bounded} (w.r.t.~a finite set $\Defs$ of definitions) if
    $ \reach_\Defs^s(P) \subseteq \Depth{s,k} $,
  i.e.~from $P$ only processes of depth at most $k$ can be reached,
  in traces respecting the size bound $s$.
\end{definition}

\begin{example}
\Cref{ex:running} is \pre(3,7)-bounded.
We defer the proof of this fact to \cref{ex:running-is-bounded}.
\end{example}

\begin{example}[Encryption Oracle]
\label{ex:encryption-oracle}
  The definition $ \p E[k] \is \inp{x:x}.(\out{\enc{x}{k}} \parallel E[k]) $
  leads to unboundedness as soon as the initial process contains $\p E[k]$ for some $k$ not known to the intruder, and size bound such that $x$ can match messages of size greater than 1.
  In such case, the intruder can inject messages $(c_i,c_{i+1})$ for unboundedly many $i$, where $c_i$ are intruder-generated nonces.
Since $k$ is secret, the resulting reachable configurations would
  contain ``encryption chains'' of the form
  $
    \new k.\new c_1,\dots,c_n.(
      \out{\enc{(c_1,c_2)}{k}} \parallel
      \out{\enc{(c_2,c_3)}{k}} \parallel
      \dots
      \out{\enc{(c_{n-1},c_n)}{k}}
    ).
  $
  When such chains appear in a set for unboundedly many $n\in\Nat$,
  the set is not depth-bounded.
  This \emph{encryption oracle} pattern could be considered an anti-pattern because
  it can be exploited for a chosen-plaintext attack on the key $k$.
  The pattern can be usually modified or constrained to obtain a bounded protocol. One option is to limit the verification to only consider traces where~$x$ is of size 1.\footnote{In Tamarin one would obtain this by typing $x{:}\mathit{fresh}$.}
\end{example}

The two bounds~$s$ and~$k$ are very different in nature.
For size, we ignore any trace that involves messages exceeding size~$s$.
Then we determine if the depth bound~$k$ is respected by all remaining traces.
Ignoring traces exceeding~$s$ is acceptable for protocols not susceptible to type confusion attacks and is achieved in other tools by using typing.
Our method can however be pushed beyond this limitation:
In \cref{sec:limitations}, we show how the results of our analysis
on the traces of bounded message size, can be generalised to results that hold for the unrestricted set of traces.
 \section{Ideal Completions for Security Protocols}
\label{sec:ideal}

Our main technical contributions are the proofs needed to show
that \pre(s,k)-bounded protocols form a
post-effective ideal completion
in the sense of~\cite{BlondinFM18}.
First we outline the significance and applications of this result,
and then proceed with the proofs.

\subsection{Downward-Closed Invariants and Security Properties}
\label{sec:invariants}

Suppose we want to establish that a protocol $P$ fulfils some security requirement.
In a typical proof, one needs to establish many intermediate facts about
executions of the protocol.
For example, part of the argument may hinge on some key $k$ being always unknown to the intruder.
This kind of property is an \emph{invariant} of the protocol:
it holds at every step of an execution.
Formally, an invariant of {$P$
(under definitions $\Defs$ and size constraint $s$)}
is any set of processes that includes $\reach_\Defs^s(P)$.
For example,
  $k$ is never leaked to the intruder in executions of the protocol $P$
  if the set of processes
    $ \mathcal{S}_k \is \set{Q | \out{k} \not\kembed Q} $
  ---i.e.~all the processes where $k$ is not public---
  is an invariant of $P$.
We will focus here on the class of \pre\kembed-downward-closed invariants.
Formally, given a set of processes $X$,
its \emph{\pre\kembed-downward closure} is the set
$ \dwd{X} \is \set{ Q | \exists P \in X\st Q \kembed P } $.
A set $X$ is \emph{\pre\kembed-downward closed} if $ X = \dwd{X} $.
Many properties of interest are naturally downward closed.
For example,
  the set $ \mathcal{S}_k $ above is downward-closed
  as $\out{k} \not\kembed Q$ and $Q' \kembed Q$
  implies $\out{k} \not\kembed Q'$.

The problem we need to solve is, then, how to show that a given
downward-closed set $X$ is an invariant for a given protocol.
Formally, that corresponds to checking
$ X \supseteq \reach_\Defs^s(P) $ which, by downward-closure of $X$,
is equivalent to checking $ X \supseteq \dwd{\reach_\Defs^s(P)} $.
To prove the latter inclusion, our strategy is to find
an inductive invariant that includes the initial state $P$
and that is included in $X$.
Let $\post_\Defs^s(X) \is \set{Q' | \exists Q \in X, Q \to Q' \in \Size{s} }$ be the set of processes
reachable in one step from processes in $X$.
An invariant $X$ is \emph{inductive} if $X \supseteq \post_\Defs^s(X) $,
which is equivalent to requiring $X \supseteq \dwd{\post_\Defs^s(X)} $
if $X$ is downward-closed.
Any inductive invariant that contains the initial process $P$ will
include $\reach_\Defs^s(P)$.

\noindent
To turn this proof strategy into an algorithm,
we need three components:
\begin{enumerate}
  \item a recursively enumerable
    finite representation of downward-closed sets,
    \label[problem]{prob:fin-repr-dwd}
  \item a way to decide inclusion between two downward-closed sets,
    given their representation,\label[problem]{prob:decide-inclusion}
  \item an algorithm (called $\posthat_\Defs^s$) to compute,
    given a finite representation of a downward-closed set $D$,
    a finite representation of $\dwd{\post_\Defs^s(D)}$.
    \label[problem]{prob:decide-posthat}
\end{enumerate}
Unfortunately, downward-closed sets cannot be finitely represented in general,
especially if one considers unbounded sessions/nonces.
We will show, however, that we can devise solutions to all three items
above for downward-closed subsets of $\Depth{s,k}$,
under a mild restriction on the intruder model.
Solving \cref{prob:fin-repr-dwd,prob:decide-inclusion,prob:decide-posthat}
amounts to proving that
$\Depth{s,k}$ admits a \emph{post-effective ideal completion}
in the sense of~\cite{FinkelG09,BlondinFM18}.
This implies that we can decide if the reachable configurations satisfy any
given downward-closed property, by adapting the enumeration scheme
presented in~\cite{BlondinFM18}.

\begin{theorem}
  \label{th:dwdcl-decidable}
  Given a property $ \mathcal{D} \subseteq \Proc $,
  and a protocol $P$ with definitions $\Defs$,
  we write $ P,\Defs \models^s \mathcal{D} $ if
  $\reach^s_\Defs(P) \subseteq \mathcal{D}$.
  If $\mathcal{D} \subseteq \Depth{s,k}$ is downward-closed,
then $ P,\Defs \models^s \mathcal{D} $ is decidable.
\end{theorem}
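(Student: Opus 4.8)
The plan is to reduce the statement to the three ingredients listed just before the theorem, together with the generic enumeration machinery of post-effective ideal completions from~\cite{BlondinFM18}. Concretely, I would argue that $\Depth{s,k}$, ordered by $\kembed$ (restricted to it), is a well-quasi-order admitting a \emph{post-effective ideal completion}: downward-closed subsets are exactly finite unions of \emph{ideals}, ideals have a computable finite representation (the ``limits'' alluded to in the introduction), inclusion of finite unions of ideals is decidable (\cref{prob:decide-inclusion}), and $\posthat^s_\Defs$ computes a finite representation of $\dwd{\post^s_\Defs(\cdot)}$ from one (\cref{prob:decide-posthat}). Granting these, the decision procedure is the standard one: to decide $P,\Defs \models^s \mathcal{D}$, i.e.\ $\dwd{\reach^s_\Defs(P)} \subseteq \mathcal{D}$, one simultaneously (a) enumerates candidate inductive invariants $I$ — finite unions of limits — checking by the inclusion algorithm that $\dwd{\set{\stdf(P)}} \subseteq I$, that $\posthat^s_\Defs(I) \subseteq I$, and that $I \subseteq \mathcal{D}$; a successful candidate witnesses $\mathcal{D}$ is an invariant; and (b) enumerates finite under-approximations of $\reach^s_\Defs(P)$ by forward exploration, checking whether some reachable $Q \notin \mathcal{D}$ (using that $\kembed$, hence $\mathcal{D}$'s complement membership, is decidable by \cref{lm:locality} and \cref{prop:kn-embed-vs-kn-congr}); a witness here refutes the inclusion. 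By the general theory of~\cite{BlondinFM18}, exactly one of these searches succeeds, so the combined procedure is a decision procedure.

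The steps I would carry out, in order, are: (1) fix the representation of ideals of $(\Depth{s,k},\kembed)$ as limits and show every $\kembed$-downward-closed subset of $\Depth{s,k}$ is a finite union of such limits — this uses that $\kembed$ restricted to $\Depth{s,k}$ is a wqo (finitely many names suffice by \cref{lm:name-reuse}, messages are size-bounded, and process calls range over a finite signature, so one gets a wqo by a Higman/Dickson-style argument over the finitely many ``shapes''), and that the completion by ideals is the sought domain; (2) give the decision procedure for inclusion between finite unions of limits, reducing it (as usual) to inclusion of a single limit in a finite union of limits, and ultimately to $\kleq$-queries which are decidable by \cref{lm:locality}; (3) define $\posthat^s_\Defs$ on a limit $L$ by case analysis on the three reduction rules of \cref{fig:op-sem} applied ``symbolically'' to the generic inhabitant of $L$, then downward-closing and re-normalising into a finite union of limits, and prove its soundness/completeness, i.e.\ $\bigcup \posthat^s_\Defs(L) = \dwd{\post^s_\Defs(L)} \cap \Depth{s,k}$ — here \cref{th:kembed-simulation} is what guarantees $\post^s_\Defs$ interacts well with $\kembed$, so that downward-closure is preserved and the symbolic computation captures all one-step successors up to embedding; (4) assemble the enumeration procedure above and invoke~\cite{BlondinFM18} to conclude termination with a correct answer, noting that $\reach^s_\Defs(P) \subseteq \Depth{s,k}$ because $\mathcal{D} \subseteq \Depth{s,k}$ lets us assume $P$ is $(s,k)$-bounded on the relevant traces (or, more precisely, we only ever need the part of $\reach^s_\Defs(P)$ that stays inside $\mathcal{D}\subseteq\Depth{s,k}$, and any escape is detected by the forward search).

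I expect step (3), the construction and correctness of $\posthat^s_\Defs$, to be the main obstacle. Computing the downward closure of the one-step image of an \emph{infinite} ideal is delicate: the \cref{rule:pub-in} rule injects freshly intruder-chosen names $\vec{y}$ with side condition $\Gamma,\vec{y}\deriv N\subst{\vec{x}->\vec{M'}}$, so one must reason about derivability from the generic contents of a limit — a family of configurations with unboundedly many ``replicated'' components — rather than a fixed $\Gamma$; keeping the result both finite and downward-closed while remaining faithful is exactly where the cryptographic-primitive axioms (especially \eqref{axiom:relevancy}, which bounds the fresh names worth considering, and \eqref{axiom:mon}/\eqref{axiom:cut}) and the simulation property \cref{th:kembed-simulation} do the heavy lifting. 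The wqo argument in step (1) is routine once the ``finitely many shapes'' observation is in place, and steps (2) and (4) are standard for ideal completions; it is the symbolic post-computation through an active, primitive-parametric intruder that carries the real weight, and I would expect it to occupy the bulk of the technical development.
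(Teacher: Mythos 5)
Your proposal is correct and follows essentially the same route as the paper: a Prover that enumerates candidate downward-closed inductive invariants $I \subseteq \Depth{s,k}$ (checking $P \in I$, $\posthat^s_\Defs(I) \subseteq I$, and $I \subseteq \mathcal{D}$) run in parallel with a Refuter that enumerates $\reach^s_\Defs(P)$ looking for a witness outside $\mathcal{D}$, with termination guaranteed because $\dwd{\reach^s_\Defs(P)}$ is itself an eligible candidate when the property holds. The additional detail you give on establishing the three ingredients (wqo structure, limits, inclusion, $\posthat$) matches what the paper develops in its subsequent sections, and your identification of the symbolic $\posthat$ computation as the main technical burden is accurate.
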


\begin{proof}
  The algorithm runs two semi-procedures, Prover and Refuter, in parallel.
  The first procedure, Prover,
  enumerates all the downward-closed subsets $I$ of $\Depth{s,k}$.
For each $I$, Prover checks if
  \begin{enumerate}[(a)]
    \item $P \in I$, \label[equation]{check:PinI}
    \item $I$ is inductive, by checking $ \posthat_\Defs^s(I) \subseteq I $, and
      \label[equation]{check:I-inductive}
    \item $I \subseteq \mathcal{D}$.
      \label[equation]{check:IinD}
  \end{enumerate}
  If we find such a set $I$,
  then we have proven $\reach_\Defs(P) \subseteq \mathcal{D}$,
  and the overall algorithm can terminate returning ``True''.
  The second procedure, Refuter,
  enumerates all $ Q \in \reach_\Defs(P) $
  and checks if $ Q \not\in \mathcal{D} $,
  in which case the overall algorithm can terminate returning ``False''.

  When $ P,\Defs \models^s \mathcal{D} $ holds,
  then, in the worst case, Prover will eventually consider
  the finite representation of $\dwd{\reach_\Defs^s(P)}$,
  which satisfies checks~\labelcref{check:PinI,check:I-inductive,check:IinD}
  above.
  In the case where  $ P,\Defs \models^s \mathcal{D} $ does not hold,
  Refuter would eventually find a reachable process not in $\mathcal{D}$.
  In either case, the algorithm terminates with the correct answer.
\end{proof}

\noindent
The above algorithm can be used to decide
the following properties.

\subparagraph{Deciding \pre(s,k)-boundedness}
The set $\Depth{s,k}$ is itself downward-closed,
so we can decide if $P$ is \pre(s,k)-bounded,
by deciding $ P,\Defs \models^s \Depth{s,k} $.

\subparagraph{Deciding control-state reachability and secrecy}
\label{par:secrecy-encoding}
Control-state reachability asks whether
there is an execution of the protocol which reaches a process
containing a process call $ \p Q[\dots] $ for some given $\p Q$.
Secrecy can be reduced to control-state reachability by introducing
a definition $\p{Secret}[m]\is \inp{m}.\p{Leak}[m]$
(for a special process identifier $\p{Leak}$ with no definition).
In the definition of the protocol one can call $\p{Secret}[m]$
to mark some message $m$ as a secret,
and secrecy corresponds to asking control-state (un)reachability for $\p{Leak}$.

If $P$ is \pre(s,k)-bounded,
control-state reachability for $\p Q$ from $P$,
can decided by $ P,\Defs \models^s \mathcal{D}_{\p Q} $,
where $\mathcal{D}_{\p Q}$ is the (downward-closed) subset of $\Depth{s,k}$
of processes that do not contain calls to $\p Q$.
Notice that, when $P$ is arbitrary, the algorithm checks
\pre(s,k)-boundedness and control-state reachability at the same time.

\subparagraph{Absence of misauthentication}
A misauthentication happens when a principal
$a$ believes she shares a secret $n$ with $b$ but
$b$ believes she shares the secret $n$ with some other entity~$c$.
To check this situation can never arise,
we can produce the process $\p{Auth}[a,b,n]$ when $a$ believes to share the secret $n$ with $b$.
Absence of misauthentication can be decided by
$ P, \Defs \models^s \mathcal{A} $ where
$
  \mathcal{A} =
    \Set{ Q \in \Depth{s,k} |
      \new a,b,c,n.(\p{Auth}[a,b,n] \parallel \p{Auth}[b,c,n]) \not\kembed Q }.
$

\subparagraph{Susceptibility to known-plaintext attacks}
The task of guessing a symmetric key is made much easier if it is possible
for the attacker to have access to an arbitrarily large number
of known nonces encrypted with the same key~$k$.
We can model this situation by asking if:\begin{equation}
  \forall n\in\Nat\st
    \exists Q \in \reach^s_\Defs(P) \st R^n \kembed Q
    \qquad\qquad\text{where }
    R = \new m.(\out{m} \parallel \out{\enc{m}{k}})
  \tag{$\dagger$}
  \label{def:known-plaintext}
\end{equation}
If~\eqref{def:known-plaintext} holds for $P$ then
the intruder does have access to an unbounded supply of known messages $m$
encrypted with the same key $k$.
Interestingly, the property becomes meaningful
only when considering unbounded number of nonces.
Condition~\eqref{def:known-plaintext} is equivalent to
$ \dwd{\reach^s_\Defs(P)} \supseteq \dwd{\set{R^n | n \in \Nat}} $.
If we find a downward-closed inductive invariant $I$ for $P$,
such that $ I \not\supseteq \dwd{\set{R^n | n \in \Nat}} $,
then we can be sure that \eqref{def:known-plaintext} does not hold,
and $P$ is not susceptible to known-plaintext attacks on~$k$.
We can therefore semi-decide~\eqref{def:known-plaintext} by enumerating
all candidate~$I$.
Contrary to the previous algorithms, we are not able
to provide a Refuter procedure (we conjecture the problem is undecidable).
We can get a decision procedure if, instead of unboundedly many plaintext-encrypted pairs we ask whether a sufficiently high, user-provided number~$N$ of such pairs can be produced.
The problem can be extended to cover the case where the known-plaintext
can be any message of size at most~$s$.

\medskip

Notice how, if the protocol is found to satisfy the property,
the algorithms above can output an inductive invariant acting as an independently checkable certificate of correctness.
Although the mentioned security properties alone do not cover the full security requirements, an effectively presented invariant can provide the foundations to prove further properties.

The algorithm of \cref{th:dwdcl-decidable}
relies on expensive enumeration schemes,
which are mainly a theoretical device to prove decidability.
In a more practical setting,
the candidate invariants can be supplied by the user and refined interactively,
avoiding the need for the enumeration of Prover,
or they can be inferred as we describe in \cref{sec:algorithms}.

\subsection{Bounded Processes are Well-Quasi-Ordered}
\label{sec:wqo}

We construct finite representations of downward-closed invariants
by making use of the algebraic structure of
the quasi-order $(\Depth{s,k}, \kembed)$.
A relation ${\qoleq} \subseteq S\times S$ over some set $S$
is a \emph{quasi-order} (qo) if it is reflexive and transitive.
An infinite sequence $s_0,s_1,\ldots$ of elements of $S$ is called \emph{good}
if there are two indexes $i < j$ such that $s_i \qoleq s_j$.
A qo $(S,\qoleq)$ is called a \emph{well quasi order}~(wqo)
if all its sequences are good.

We prove $(\Depth{s,k}, \kembed)$ is a wqo
for any intruder model, by showing a correspondence between
processes in $\Depth{s,k}$ and finitely-labelled
forests of height at most $k$, which we represent as nested multisets.
The details can be found in the \cref{app:wqo-limits}.

\subsection{Limits and Ideal Decompositions}
\label{sec:limits}

By exploiting the wqo structure of $\Depth{s,k}$,
we can provide a finite representation for its downward-closed sets.
Let $(S,\qoleq)$ be a qo.
A set $D \subseteq S$ is an \emph{ideal} if it is downward-closed and directed,
i.e.~for all $x,y \in D$
  there is a $z\in D$ such that $x\qoleq z$ and $y \qoleq z$.
We write $\Idl{S}$ for the set of ideals of $S$.
It is well-known that in a well-quasi-order,
every downward-closed set is equal to
a canonical minimal finite union of ideals,
its \emph{ideal decomposition}.
To represent downward-closed sets of $\Depth{s,k}$ we will only need to provide
finite representations of its ideals.
We represent ideals using \emph{limits},
which have the same syntax as processes
augmented with a construct $\hole^\omega$ to represent
an arbitrary number of parallel components.

\begin{definition}[Limits]
  \label{def:limits}
  We call \emph{limits} the terms $L$ formed according to the grammar:
  \begin{grammar}
    \Limits \ni L \is
      \zero  | (R_1 \parallel \cdots \parallel R_n)
    &\quad R \is
        B
      | B^\omega
    &\quad B \is
        \out{M}
      | \p Q[\vec{M}]
      | \new x.L
  \end{grammar}
\end{definition}

\begin{definition}[Denotation of limits]
  \label{def:limits-sem}
  The denotation of $L$ is the set ${\sem{L}\is\dwd{\instn{L}}}$ where:
\begin{align*}
\instn{\zero}         &\is \set{ \zero }
    &
    \instn{L_1 \parallel L_2} &\is
      \set{ (P_1 \parallel P_2) |
                P_1 \in \instn{L_1}, P_2 \in \instn{L_2}}
    \\
    \instn{\p Q[\vec{M}]} &\is \set{ \p Q[\vec{M}] }
    &
    \instn{B^\omega}      &\is
      \textstyle
      \Union_{n \in \Nat}
      \set{ (P_1 \parallel \dots \parallel P_n) |
                \forall i\leq n: P_i \in \instn{B}} \\
    \instn{\out{M}}       &\is \set{ \out{M} }
    &
    \instn{\new x.L}      &\is \set{ \new x.P | P \in \instn{L}}
  \end{align*}
  We call the processes in $\sem{L}$ \emph{instances} of $L$.
Define $\nestr(L)$ to be as $\nestr$ on processes
  with the addition of the case $\nestr(L^\omega) \is \nestr(L)$.
  It is easy to check that for each $P \in \sem{L}$,
  $\depth(P) \leq \nestr(L)$.
  We write $\Limits_{s,k}^X$ for the set of limit expressions $L$ with free names in $X$ that have $\nestr(L) \leq k$ and do not contain messages of size exceeding $s$.
  We often omit $X$ and understand it is a fixed finite set of names.
\end{definition}

\begin{theorem}
  \label{th:limit-sound-complete}
  Limits faithfully represent ideals:
  ${
    I \in \Idl{\Depth{s,k}}
    \iff
    \exists L \in \Limits_{s,k} \st
      I = \sem{L}
  }$.
\end{theorem}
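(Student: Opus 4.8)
The plan is to prove the two directions separately, establishing first soundness (every limit denotes an ideal) and then completeness (every ideal is denoted by some limit). For soundness, I would take an arbitrary $L \in \Limits_{s,k}$ and argue that $\sem{L}$ is (a) contained in $\Depth{s,k}$, (b) downward-closed, and (c) directed. Containment in $\Depth{s,k}$ follows from the remark in \cref{def:limits-sem} that $\depth(P) \leq \nestr(L) \leq k$ for every instance $P$, together with the size bound built into $\Limits_{s,k}$; downward-closure is immediate since $\sem{L} = \dwd{\instn{L}}$ by definition. The crux is directedness: given $P_1, P_2 \in \sem{L}$, I would produce a common upper bound by induction on the structure of $L$. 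The key observation is that the only source of ``branching'' in instances is the $B^\omega$ construct, and the upper bound for two instances of $B^\omega$ is obtained by placing their respective parallel copies side by side (a multiset union), which is again an instance of $B^\omega$; for $\new x.L'$ one uses that $\kembed$ allows introducing fresh restricted names and extra parallel components, so $P_1 \congr \new\vec{x}.(\dots)$ and $P_2 \congr \new\vec{x}.(\dots)$ can be merged under a shared prefix. I would lean on \cref{prop:kn-embed-vs-kn-congr} and the definition of $\kembed$ to make the merging precise.

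For completeness, I would show that every $I \in \Idl{\Depth{s,k}}$ equals $\sem{L}$ for some $L \in \Limits_{s,k}$. Since $\Depth{s,k}$ is a wqo (\cref{sec:wqo}), $I$ is a directed downward-closed set, and the natural approach is to exploit the forest/nested-multiset representation already used to prove the wqo property: an ideal of a wqo of finitely-labelled forests of bounded height is itself representable by a ``forest with $\omega$-annotations'' — i.e., a forest in which some subtrees are marked as repeatable arbitrarily often. Concretely, I would characterise $I$ by the following data: look at the set of ``shapes'' (finite nested-multiset skeletons) that occur in elements of $I$; directedness forces this set to have a least upper bound in the completion, and that upper bound is exactly a limit expression $L$ once one translates nested multisets back into the process/limit syntax (restrictions $\new x.-$ for tree nodes carrying names, $\out{M}$ and $\p Q[\vec M]$ for leaves, and $B^\omega$ wherever a subtree multiplicity is unbounded in $I$). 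I would then verify $\sem{L} = I$ by a double inclusion: $\sem{L} \subseteq I$ because every instance of $L$ embeds into some element of $I$ by directedness (finitely many copies of a repeatable subtree already appear together in a single element of $I$), and $I \subseteq \sem{L}$ because any element of $I$ matches the skeleton of $L$ with finite multiplicities, hence lies in $\instn{L} \subseteq \sem{L}$.

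The main obstacle I anticipate is the bookkeeping in the completeness direction, specifically making rigorous the passage from ``the set of finite shapes occurring in $I$'' to a single limit expression, and proving that directedness really does yield a well-defined $\omega$-annotation at each node (rather than, say, forcing infinitely many incomparable limit candidates). This is exactly where the correspondence with nested multisets and the known description of ideals of $(\mathrm{Forests}_{k}, \leq)$ — ideals of a wqo built by finite multisets and bounded iteration are the ``$\omega^2$-free'' completions à la Finkel–Goubault-Larrecq — has to be invoked or re-derived; I would cite the relevant ideal-completion machinery and adapt it, rather than redo it from scratch. A secondary subtlety is that $\kembed$ is coarser than plain syntactic forest embedding because it quotients by $\kcongr$ (knowledge equivalence of the $\out{\Gamma}$ component), so I must check that the limit's message components $M$ are chosen canonically up to $\keq$ and that $\sem{L}$ is insensitive to this choice; this is handled by working with $\stdf$ and the characterisation of $\kcongr$ given after \cref{def:kcongr}.
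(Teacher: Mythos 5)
Your proposal is correct and follows essentially the same route as the paper: soundness is proved by checking containment in $\Depth{s,k}$, downward closure, and directedness (the paper gets directedness more directly from the cofinal chain of groundings $\lground{L}{n}$ rather than by structural induction on $L$, but both work), and completeness is obtained by transporting the ideal along the forest encoding into the nested-multiset wqo, invoking the Finkel--Goubault-Larrecq representation of multiset ideals by $\miter$-products, and translating these back into limit syntax by induction on the height $k$. The one step your sketch leaves implicit is that, because the forest encoding of a $\kembed$-ideal is only $\fembed$-downward-closed (not necessarily a single forest ideal, precisely because $\kembed$ is coarser than $\fembed$), one first obtains a finite union of candidate limits and must then invoke the irreducibility of ideals (an ideal that equals a finite union of downward-closed sets equals one of them) to land on a single limit.
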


\subsection{Decidability of Inclusion}
\label{sec:inclusion}

Now we turn to decidability of inclusion between downward-closed sets.
It is well-known that in a wqo,
given the ideal decomposition of two downward-closed sets
$D_1 = I_1 \union \dots \union I_n$ and $D_2 = J_1 \union \dots \union J_m$,
we have $ D_1 \subseteq D_2 $
  if and only if
for all $1 \leq i \leq n$,
  there is a $1 \leq j \leq m$,
    such that $I_i \subseteq J_j$.
Hence, decidability of ideals inclusion implies
decidability of downward-closed sets inclusion.

We extend structural congruence to limits in the obvious way,
with the addition of the law $ \out{M}^\omega \congr \out{M} $
obtaining that $ L \congr L' $ implies $ \sem{L} = \sem{L'} $.
We can define a standard form for limits:
every limit is structurally congruent to a limit of the form
$
  \new \vec{x}.(
    \out{\Gamma} \parallel
    \Parallel_{i\in I}\p Q_i[\vec{M}_i] \parallel
    \Parallel_{j\in J}B_j^\omega
  )
$
where every name in $\vec{x}$ occurs free at least once
in the scope of the restriction,
and for all $j\in J$, $B_j$ is also in standard form.
When we write
  $\stdf(L) \aeq \new \vec{x}.(\out{\Gamma} \parallel Q \parallel R)$
we imply that $Q$ is a parallel composition of process calls
$\Parallel_{i\in I}\p Q_i[\vec{M}_i]$
(in which case we write~$\card{Q}$ for~$\card{I}$)
and $R$ is a parallel composition of iterated limits
$\Parallel_{j\in J} B_j^\omega$.

To better manipulate limits,
we introduce, in \cref{fig:grounding-extension},
the \pre n-th grounding $\lground{L}{n}$
and the \pre n-th extension $\lext{L}{n}$,
of a limit~$L$.
Grounding
replaces
each $\hole^\omega$ with $\hole^n$,
with the obvious property that $\lground{L}{n} \in \sem{L}$.
An extension $\lext{L}{n}$ produces a new limit with
each sub-limit $B^\omega$  unfolded $n$ times.
Note that extension does not alter semantics:
$\sem{L} = \sem{\lext{L}{n}}$.

\begin{figure}[tb]
  \adjustfigure \resizebox{.5\textwidth}{!}{$
    \lground{L}{n} \is
      \begin{cases}
        L                     & \text{if $L$ is sequential or }\zero\\
        \lground{L_1}{n} \parallel \lground{L_2}{n}
                              & \text{if } L = L_1 \parallel L_2\\
        \new x.(\lground{L'}{n}) & \text{if } L = \new x.L'\\
        \left(\lground{B}{n}\right)^n       & \text{if } L = B^\omega
      \end{cases}
    $}\resizebox{.5\textwidth}{!}{$
      \lext{L}{n} \is
        \begin{cases}
          L                     & \text{if $L$ is sequential or }\zero\\
          \lext{L_1}{n} \parallel \lext{L_2}{n}
                                & \text{if } L = L_1 \parallel L_2\\
          \new x.(\lext{L'}{n}) & \text{if } L = \new x.L'\\
          \left(\lext{B}{n}\right)^n \parallel B^\omega
                                & \text{if } L = B^\omega
        \end{cases}
  $}
  \caption{The grounding
      $\lground{\hole}{n} \from \Limits_{s,k} \to \Depth{s,k}$
    and extension
      $\lext{\hole}{n} \from \Limits_{s,k} \to \Limits_{s,k}$
    operations on limits.}
  \label{fig:grounding-extension}
\end{figure}

\subparagraph{The absorption axiom}
The decidability proof hinges on a characterisation of inclusion
that requires an additional hypothesis on the intruder model.

\begin{definition}[Absorbing intruder]
  \label{lm:deriv-variant-idemp}
  \label{def:absorption}
  Fix an intruder model $\Intruder = (\Sig,\deriv)$.
  Let
    $\vec{x}$ and $\vec{y}$ be two lists of pairwise distinct names,
    $\Gamma$ be a finite set of messages, and
    $\Gamma' = \Gamma\subst{\vec{x} -> \vec{y}}$.
  Moreover, assume that $ \names(\Gamma) \inters \vec{y} = \emptyset $.
  We say $\Intruder$ is \emph{absorbing} if,
  for all messages $M$ with $\names(M) \subseteq \names(\Gamma)$,
  we have that
    $\Gamma,\Gamma' \deriv M$
      if and only if
    $\Gamma \deriv M$.
\end{definition}

\begin{lemma}
\label{lm:symm-absorbing}
  $\IntrSymm$ is absorbing.
\end{lemma}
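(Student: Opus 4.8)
The plan is to work with the cut-free sequent calculus of \cref{fig:symmetric-intruder} and show that every derivation of $\Gamma,\Gamma' \deriv M$ with $\names(M)\subseteq\names(\Gamma)$ can be transformed into a derivation of $\Gamma\deriv M$; the converse direction is immediate from \eqref{axiom:mon}. Write $\sigma = \subst{\vec{x}->\vec{y}}$, so $\Gamma' = \Gamma\sigma$, and recall the standing assumption $\names(\Gamma)\inters\vec{y}=\emptyset$. The key observation is that the names in $\vec{y}$ are ``fresh'' with respect to $\Gamma$ and the target $M$: they occur only inside $\Gamma'$, and only as the $\sigma$-images of names of $\Gamma$. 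Intuitively, every message that $\Gamma'$ contributes to a derivation is a $\sigma$-renamed copy of a message $\Gamma$ already contributes, so $\Gamma'$ is redundant.

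The main step is a structural induction on a cut-free derivation $\pi$ of $\Gamma,\Gamma'\deriv N$, proving the stronger statement: for every $N$ with $\names(N)\subseteq\names(\Gamma)\union\vec{y}$, if $\Gamma,\Gamma'\deriv N$ then $\Gamma \deriv N'$ where $N'$ is obtained from $N$ by renaming every $y\in\vec{y}$ back to its $\sigma$-preimage in $\vec{x}$ (when $\vec x,\vec y$ overlap one picks a canonical preimage; taking $\vec x$ disjoint from $\vec y$ via \eqref{axiom:alpha} avoids this). Instantiating with $N=M$, where $\names(M)\subseteq\names(\Gamma)$ so $N'=M$, yields the claim. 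Each rule is handled locally: for \textsc{Id}, $N\in\Gamma,\Gamma'$, so either $N\in\Gamma$ (done) or $N=M_0\sigma$ with $M_0\in\Gamma$, and then $N'=M_0\in\Gamma$; for the right rules (\textsc{Pub}, \textsc{P}\textsubscript{R}, \textsc{S}\textsubscript{R}, \textsc{A}\textsubscript{R}) one just applies the induction hypothesis to the premises and reapplies the rule, using that renaming commutes with constructor application; for the left rules (\textsc{P}\textsubscript{L}, \textsc{S}\textsubscript{L}, \textsc{A}\textsubscript{L}) the principal formula is some $\pair{M_1}{M_2}$, $\enc{M_1}{K}$, or $\aenc{M_1}{\pub{K}}$ drawn from $\Gamma,\Gamma'$; if it comes from $\Gamma$ the induction goes through directly, and if it is a $\sigma$-copy from $\Gamma'$ one renames the whole subderivation by $\sigma^{-1}$ and observes the principal formula becomes an element of $\Gamma$. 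Using a cut-free presentation is what makes this work: the subformula property guarantees that every formula appearing in $\pi$ has names in $\names(\Gamma)\union\vec{y}$, so the renaming-back operation is well-defined throughout, and we never need to invent a cut formula whose names might escape this set.

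I expect the main obstacle to be the bookkeeping in the left rules, specifically the case where a left rule acts on a principal formula contributed by $\Gamma'$: one must carefully ``pull back'' the entire subderivation along $\sigma^{-1}$, check that the context of that subderivation, which is of the form $\Gamma,\Gamma',\text{(decomposed parts)}$, maps correctly, and argue that the decomposed parts (e.g.\ $M,N$ in $\Gamma,\pair{M}{N},M,N$) are themselves names-in-$\names(\Gamma)\union\vec y$ so the inductive hypothesis applies after pulling back. A clean way to package this is to prove first an auxiliary renaming-invariance lemma — essentially a sharpened form of \eqref{axiom:alpha} internal to the calculus: if $\Gamma,\Gamma\sigma,\Delta \deriv N$ then $\Gamma,\Gamma\sigma,\Delta\sigma\deriv N\sigma$ — and then combine it with \eqref{axiom:relevancy} to discard the now-duplicated $\Gamma\sigma$-part. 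Once the induction is set up with the strengthened statement, each individual rule case is a short syntactic manipulation; the real content is choosing that statement so that it is simultaneously strong enough to be inductive and to specialise to the lemma.
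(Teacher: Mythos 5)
Your overall strategy matches the paper's: induct on the cut-free derivation of $\Gamma,\Gamma'\deriv M$, dispose of \textsc{Id} and the right rules directly, and split each left rule according to whether the principal formula comes from $\Gamma$ or from $\Gamma'$, handling the latter case by renaming the whole subderivation with $\sigma'=\subst{\vec{y}->\vec{x}}$ via \eqref{axiom:alpha}, which collapses $\Gamma'$ onto $\Gamma$ and fixes $M$. Two points in your write-up do not close as stated, and both sit exactly where you predict the difficulty to be. First, your strengthened induction hypothesis generalises the conclusion $N$ but keeps the context fixed at $\Gamma,\Gamma'$. In a left rule whose principal formula lies in $\Gamma$ --- say \textsc{P}\textsubscript{L} with $\pair{M_1}{M_2}\in\Gamma$ --- the premise has context $\Gamma,\Gamma',M_1,M_2$, which is not of the form $\Delta,\Delta\sigma$ for any $\Delta$, so your induction hypothesis does not apply to it; your claim that ``if it comes from $\Gamma$ the induction goes through directly'' is where the argument breaks. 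The paper resolves this by stating the induction on the \emph{depth} of the derivation with $\Gamma$ universally quantified, and by using \eqref{axiom:mon} to pad the premise to $\Gamma,M_1,M_2,\Gamma',M_1\sigma,M_2\sigma$ (a derivation of the same depth), which is again of the shape $\Delta,\Delta\sigma$ with $\Delta=\Gamma,M_1,M_2$ and still satisfies $\names(M)\subseteq\names(\Delta)$. Some such generalisation of the context is needed for the induction to go through.

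Second, your proposed ``clean packaging'' --- prove $\Gamma,\Gamma\sigma,\Delta\sigma\deriv N\sigma$ and then ``combine it with \eqref{axiom:relevancy} to discard the now-duplicated $\Gamma\sigma$-part'' --- does not work: \eqref{axiom:relevancy} only removes a single unused atomic \emph{name} from the knowledge set, not an arbitrary set of (generally composite) messages such as $\Gamma\sigma$. Discarding $\Gamma\sigma$ is precisely the absorption property being proved, so it cannot be obtained from \eqref{axiom:relevancy}. Dropping this detour and carrying out the direct two-case analysis with the $\Gamma$-generic induction hypothesis --- which is what the paper does, and what the first half of your own sketch describes --- yields a correct proof.
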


\noindent
For the rest of the paper,
we assume an absorbing intruder model.
The absorption axiom has a technical definition,
which becomes more intuitive if understood in the context
of limits of the form
$
  L = \bigl(\new\vec{x}.(\out{\Gamma}\parallel Q)\bigr)^\omega
$.
Imagine comparing the difference in knowledge between
$\lground{L}{1}$ and
$\lground{L}{2}$:
we have
$
  \stdf(\lground{L}{2}) \aeq
    \bigl(
      \new\vec{x}.\new\vec{x}'.(
        \out{\Gamma} \parallel
        \out{\Gamma'} \parallel
        Q \parallel
        Q'
      )
    \bigr)
$,
where $\Gamma' = \Gamma \subst{\vec{x}->\vec{x}'}$.
The absorption axiom tells us that if we want to check
whether a process $\new\vec{x}.\out{M}$ is embedded in $\stdf(\lground{L}{2})$,
we only need to check if $M$ is derivable from $\Gamma$
and we can ignore $\Gamma'$.
In other words, we only need to check if $\new\vec{x}.\out{M}$ is embedded in $\lground{L}{1}$.

We are now ready to prove our main result:
a small model property that shows decidability of limit inclusion.
Let us present the intuition on the simpler problem of deciding inclusion when one of the limits
is a single process $P$,
i.e.~deciding if $P \in \sem{L}$.
Take $P = \new a,b.( \out{\enc{a}{b}} \parallel \p A[b] )$
and $L = \bigl(\new x.(\out{x} \parallel A[x]) \bigr)^\omega$.
Suppose we replicate the $\omega$ twice,
obtaining the equivalent limit
$
  \lext{L}{2} \congr
    \new x_0,x_1.(
      \out{x_0} \parallel A[x_0] \parallel \out{x_1} \parallel A[x_1]
      \parallel L
    )
$.
The idea is that we can match $P$ against the fixed part of $\lext{L}{2}$:
\begin{align*}
  P \kcongr
    \new x_0,x_1.( \out{\enc{x_0}{x_1}} \parallel \p A[x_1] )
    &\kembed
    \new x_0,x_1.(
      \out{\enc{x_0}{x_1}} \parallel A[x_0] \parallel A[x_1]
    )
    \\
    &\kembed
    \new x_0,x_1.(
      \out{x_0} \parallel \out{x_1} \parallel A[x_0] \parallel A[x_1]
    ) \in \lext{L}{2}
\end{align*}
The first observation is therefore that if we can find some $m$ such that
$P$ is embedded in the fixed part of $\lext{L}{m}$,
we have proven $P \in \sem{L}$.
To turn this into an algorithm, we need to prove that there exists an $n$
so that if we failed to embed $P$ in the fixed part of $\lext{L}{m}$,
for any $m \leq n$ then $P$ is not going to embed in $\lext{L}{m'}$
for \emph{every} $m'$, and therefore $P \not\in \sem{L}$.
In other words, we need to know that after some threshold $n$,
there is no point trying with bigger extensions.
Take for example
$ P = \new x, y.(\p B[x, y] || B[y, x]) $ and
$ L = (\new x, y.\p B[x, y])^\omega $.
We can try and embed $P$ into $\lext{L}{2}$ but we would fail as the fixed part
expands to
$\new x_0, y_0.\p B[x_0, y_0] \parallel \new x_1, y_1.\p B[x_1, y_1]$.
It is easy to see that expanding further would not introduce new patterns
in the fixed part of the limit which would help embed~$P$.

\Cref{th:char-lim-incl} formalises the idea for general inclusion
between two arbitrary limits $L_1$ and~$L_2$:
it proves that the threshold for expansion is the number of fixed restrictions
of $L_1$ plus the number of fixed process calls of $L_1$, plus one;
and it makes use of the absorption axiom to prove the threshold is sound
even in the presence of knowledge.

\begin{theorem}[Characterisation of Limits Inclusion]
\label{th:char-lim-incl}
 Let $L_1$ and $L_2$ be two limits,
 with
 $
   \stdf(L_1) \aeq \new\vec{x}_1.(
     \out{\Gamma_1} \parallel
     Q_1 \parallel
     \Parallel_{i\in I} B_i^\omega
   )
 $,
 and let $n = \card{\vec{x}_1} + \card{Q_1} + 1 $.
 Then:
 \[
   \sem{L_1} \subseteq \sem{L_2}
     \iff
   \begin{cases}
      \stdf(\lext{L_2}{n}) \aeq \new\vec{x}_1,\vec{x}_2.(
         \out{\Gamma_2} \parallel
         Q_1 \parallel
         Q_2 \parallel
         R_2
       )
     \;\text{ and }\;
     \Gamma_1 \kleq \Gamma_2 &
     \cond{A}\\
     \sem{\out{\Gamma_1} \parallel \Parallel_{i\in I} B_i}
       \subseteq
     \sem{\out{\Gamma_2} \parallel R_2}
     & \cond{B}
   \end{cases}
 \]
\end{theorem}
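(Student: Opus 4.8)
The plan is to prove both directions by unfolding the definition of $\sem{\cdot}$ and exploiting two facts: that extension preserves denotation ($\sem{L_2} = \sem{\lext{L_2}{n}}$), and that every downward-closed invariant of $\Depth{s,k}$ decomposes into ideals. For the ``$\Leftarrow$'' direction, I would assume \cond{A} and \cond{B} hold and show $\sem{L_1} \subseteq \sem{L_2}$. Take an arbitrary instance $P \in \sem{L_1}$, so $P \kembed P_0$ for some $P_0 \in \instn{L_1}$; since $\kembed$ is a simulation (Theorem~\ref{th:kembed-simulation}) and $\sem{L_2}$ will be shown downward-closed, it suffices to embed $P_0$. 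Writing $P_0$ in standard form as $\new\vec{x}_1.(\out{\Gamma_1}\parallel Q_1\parallel \Parallel_{i\in I}\tilde B_i)$ where each $\tilde B_i$ is a finite parallel composition of instances of $B_i$, I would split $P_0$ into its ``fixed'' skeleton $\new\vec{x}_1.(\out{\Gamma_1}\parallel Q_1)$ and the ``$\omega$-part'' $\Parallel_{i\in I}\tilde B_i$ together with $\out{\Gamma_1}$ again (knowledge is monotone, so duplicating $\Gamma_1$ is harmless up to $\kembed$). Condition~\cond{A} lets me match the skeleton against the first $n$ copies produced by $\lext{L_2}{n}$, using $\Gamma_1 \kleq \Gamma_2$ for the knowledge part; condition~\cond{B} recursively handles the $\omega$-part against $\out{\Gamma_2}\parallel R_2$. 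Gluing the two embeddings and re-folding the $\omega$'s of $L_2$ via $\sem{\lext{L_2}{n}} = \sem{L_2}$ gives $P_0 \in \sem{L_2}$.

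For the ``$\Rightarrow$'' direction, assume $\sem{L_1}\subseteq\sem{L_2}$. Condition~\cond{A} is the delicate part and is where the threshold $n = \card{\vec{x}_1}+\card{Q_1}+1$ enters. I would feed the specific instance $\lground{L_1}{n} \in \sem{L_1} \subseteq \sem{L_2}$, which contains $n$ disjoint renamed copies of the skeleton $\new\vec{x}_1.(\out{\Gamma_1}\parallel Q_1)$ plus the $\Gamma$-knowledge. Since this instance embeds into some instance of $L_2$, a pigeonhole argument on which restrictions and process calls of $L_2$'s fixed part absorb the $n$ copies forces at least one full copy of the $L_1$-skeleton to be realised entirely inside material that $\lext{L_2}{n}$ exposes as fixed; this yields the standard-form decomposition $\stdf(\lext{L_2}{n}) \aeq \new\vec{x}_1,\vec{x}_2.(\out{\Gamma_2}\parallel Q_1\parallel Q_2\parallel R_2)$ claimed in \cond{A}. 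The knowledge inequality $\Gamma_1 \kleq \Gamma_2$ then follows because $\out{\Gamma_1}$ (suitably $\alpha$-renamed onto $\vec{x}_1$) must be derivable from whatever knowledge the matching instance of $L_2$ carries, and the \emph{absorbing} axiom (Definition~\ref{def:absorption}) guarantees that the extra renamed copies of $\Gamma_2$ contributed by the remaining $\omega$-unfoldings do not help derive messages over $\names(\Gamma_2)$ — so the derivation already goes through with the single fixed copy $\Gamma_2$. Once \cond{A} is established, \cond{B} follows by a residuation argument: any instance of $\out{\Gamma_1}\parallel\Parallel_{i\in I}B_i$ can be padded with the skeleton of $L_1$ to form an instance of $L_1$, pushed through the inclusion into $\sem{L_2}$, and then the skeleton part is cancelled against $Q_1\parallel\vec{x}_1$ using \cond{A}, leaving an embedding into $\out{\Gamma_2}\parallel R_2$.

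I expect the main obstacle to be the soundness of the threshold $n$ in the $\Rightarrow$ direction: proving that failing to embed at level $n$ rules out embedding at \emph{every} level $m'>n$. The combinatorial heart is that the fixed part of $\lext{L_2}{m'}$ for larger $m'$ only adds \emph{isomorphic} copies of sub-limits already present, so no genuinely new matching pattern appears; formally this needs a careful pigeonhole over the $\card{\vec{x}_1}+\card{Q_1}$ ``obligations'' of $L_1$'s skeleton, showing that $n-1$ copies already exhaust all structurally distinct ways a copy of $L_2$'s fixed material can be consumed, so the $n$-th copy must be matched by material that is stable under further extension. Interleaving this with the knowledge component — ensuring the $\kleq$ comparison is not sensitive to how many $\omega$-copies of $\Gamma_2$ are unfolded — is exactly what the absorption axiom is designed to neutralise, and I would invoke Lemma~\ref{lm:symm-absorbing} only implicitly, since we have fixed an arbitrary absorbing intruder. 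The remaining bookkeeping (standard forms, $\alpha$-renaming bound names apart, monotonicity of $\deriv$) is routine and I would relegate it to the appendix.
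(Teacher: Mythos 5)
Your ``$\Leftarrow$'' direction is essentially the paper's: match the fixed skeleton via \cond{A}, the iterated part via \cond{B}, and glue using idempotence of knowledge. You do gloss over one step the paper has to work for --- \cond{B} only speaks about a \emph{single} copy of each $B_i$, whereas an instance of $L_1$ contains arbitrarily many; the paper bridges this by multiplying the embedding $n_1$ times and folding $(\lground{R_2}{m})^{n_1}$ back into $\lground{R_2}{n_1\cdot m}$ (\cref{lm:kembed-limit-mult}) --- but that is repairable bookkeeping.

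The ``$\Rightarrow$'' direction has two genuine gaps. First, your pigeonhole for \cond{A} starts from a false premise: $\lground{L_1}{n}$ does \emph{not} contain $n$ disjoint copies of the skeleton $\new\vec{x}_1.(\out{\Gamma_1}\parallel Q_1)$. Grounding replicates only the $\omega$-components; the fixed part appears exactly once, and in general no element of $\sem{L_1}$ contains several disjoint copies of $Q_1$, so there is no instance you can feed into $\sem{L_2}$ to run that pigeonhole. The tell-tale sign is that your argument would need $n$ to dominate the size of $L_2$'s fixed part, whereas the threshold $n=\card{\vec{x}_1}+\card{Q_1}+1$ is calibrated to the \emph{obligations of $L_1$'s skeleton}. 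The paper's argument runs in the opposite direction: the single skeleton $\lground{L_1}{0}$ embeds into some deep grounding $\lground{L_2}{m}$, and one then \emph{folds} $m$ down to $n$ (\cref{lm:fold-sound-kembed}), because in each $\omega$-component of $L_2$ at most $n-1$ of the $m$ branches can host a restricted name or process call of the skeleton; the discarded branches carry only renamed copies of the retained knowledge, which absorption makes redundant for deriving $\Gamma_1$ (whose names lie in $\vec{x}_1$). Second, your ``residuation'' for \cond{B} cancels the skeleton out of an embedding, which is not a valid operation on $\kembed$: an embedding of $P_{\mathrm{skel}}\parallel P$ into an instance of $L_2$ may match part of $P$ against the fixed process calls $Q_2$ and the fixed restrictions $\vec{x}_2$, and subtracting the skeleton does not force $P$ into $\out{\Gamma_2}\parallel R_2$; likewise $\Gamma_1\kleq\Gamma_2,\Gamma'$ admits no cancellation. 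The paper instead proves \cond{B} by contradiction with a counting argument: if it failed, grounding $R_1$ at $n_1=2\cdot\max(\card{\vec{x}_2}+\card{Q_2}+1,\,m_1')$ copies forces at least half of them to be matched inside $\lground{R_2}{m_2}$, since the fixed part of $L_2$ cannot absorb more than half, contradicting $\sem{L_1}\subseteq\sem{L_2}$.
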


\begin{theorem}
\label{th:lim-incl-decidable}
  Given $L_1, L_2 \in \Limits$ it is decidable whether
  $\sem{L_1} \subseteq \sem{L_2}$.
\end{theorem}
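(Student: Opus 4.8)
The plan is to reduce decidability of $\sem{L_1} \subseteq \sem{L_2}$ to a finite, effectively checkable test by invoking \cref{th:char-lim-incl} and then recursing. First I would put $L_1$ and $L_2$ in standard form, which is an effective operation: write $\stdf(L_1) \aeq \new\vec{x}_1.(\out{\Gamma_1} \parallel Q_1 \parallel \Parallel_{i\in I} B_i^\omega)$ and set $n = \card{\vec{x}_1} + \card{Q_1} + 1$, a number computable from $L_1$. Then I would form the extension $\lext{L_2}{n}$, again an effective syntactic operation by \cref{fig:grounding-extension}, and compute its standard form.

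Next I would decide condition \cond{A}: checking whether $\stdf(\lext{L_2}{n})$ has the shape $\new\vec{x}_1,\vec{x}_2.(\out{\Gamma_2} \parallel Q_1 \parallel Q_2 \parallel R_2)$ amounts to searching for a way to match the fixed restrictions $\vec{x}_1$ and fixed process calls $Q_1$ of $L_1$ against corresponding components of $\stdf(\lext{L_2}{n})$ up to $\aeq$; since $\stdf(\lext{L_2}{n})$ is a finite term, there are only finitely many such matchings to try, so this is decidable. For each successful matching, the side condition $\Gamma_1 \kleq \Gamma_2$ is decidable by \cref{lm:locality} (knowledge ordering is decidable whenever $\deriv$ is, which holds since we fixed an effective intruder model), and condition \cond{B} is an inclusion $\sem{\out{\Gamma_1} \parallel \Parallel_{i\in I} B_i} \subseteq \sem{\out{\Gamma_2} \parallel R_2}$ between two strictly smaller limits, on which I would recurse.

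The key point making the recursion well-founded is that in passing from $\sem{L_1} \subseteq \sem{L_2}$ to condition \cond{B} we strip one layer of $\omega$-iteration off each $B_i^\omega$ on the left (it becomes $B_i$), and the $B_i$ are proper sub-limits of $L_1$ with strictly smaller $\nestr$ or fewer $\omega$-markers; likewise $R_2$ is built from sub-limits of $\lext{L_2}{n}$. I would therefore set up a termination measure --- for instance the multiset of (number of nested $\omega$'s, structural size) over the $\omega$-subterms on the left, ordered by the multiset extension --- and argue it strictly decreases at each recursive call, so only finitely many calls occur. The base case is when $L_1$ contains no $\omega$ at all, i.e. $L_1$ is (congruent to) a single process $P$; then $\sem{L_1} = \dwd{\{P\}}$, and $\sem{L_1} \subseteq \sem{L_2}$ reduces by \cref{th:char-lim-incl} to a finite search for an embedding of $P$ into the fixed part of some $\lext{L_2}{m}$ with $m \le n$, which is decidable since $\kembed$ on concrete processes is decidable (it combines a finite structural match with $\kleq$, again decidable by \cref{lm:locality}).

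The main obstacle I expect is \emph{not} the recursion bookkeeping but making sure the reduction to condition \cond{B} is itself effectively computable: extracting $Q_2$, $R_2$, and $\Gamma_2$ from $\stdf(\lext{L_2}{n})$ requires committing to a particular way of carving the fixed part of $\lext{L_2}{n}$ into the ``matched'' piece ($Q_1$) and the ``leftover'' piece, and there may be several such carvings; the correctness of \cref{th:char-lim-incl} presumably quantifies over the existence of a suitable carving, so the algorithm must enumerate all of them and return ``yes'' if any choice leads to a successful recursive check. Since $\stdf(\lext{L_2}{n})$ is a fixed finite term with boundedly many components, this enumeration is finite, and combined with the well-founded recursion above it yields a terminating decision procedure. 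I would close by remarking that the same argument gives decidability of inclusion between arbitrary downward-closed subsets of $\Depth{s,k}$, via their ideal decompositions and the componentwise criterion recalled at the start of \cref{sec:inclusion}.
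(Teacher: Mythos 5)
Your proposal is correct and follows essentially the same route as the paper: apply \cref{th:char-lim-incl}, enumerate the finitely many \pre\alpha-renamings/matchings that could make condition~\cond{A} hold (with $\Gamma_1 \kleq \Gamma_2$ decidable via \cref{lm:locality}), and recurse on condition~\cond{B}, terminating because each recursive call strictly decreases the number of $\omega$-occurrences in the left-hand limit until the base case where only the fixed-part check remains. The extra detail you supply (the explicit termination measure and the enumeration of carvings of the fixed part) is a faithful elaboration of the paper's one-paragraph argument rather than a different proof.
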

\begin{proof}
  \Cref{th:char-lim-incl} leads to a recursive algorithm.
  Given $L_1$ and $L_2$, one computes
  $\stdf(L_1)$ and $\stdf(\lext{L_2}{n})$.
  For every \pre\alpha-renaming
    that makes condition~\cond{A} hold, one checks
    condition~\cond{B} (recursively).
  If no renaming makes both true then the inclusion does not hold.
  In the recursive case, there are fewer occurrences of $\omega$
  in the limit on the left, eventually leading to the case
  where $L_1$ has no occurrence of $\omega$,
  and only condition~\cond{A} needs to be checked.
\end{proof}

\begin{example}[Limit inclusion]
  Consider the following two limits:
  \begin{align*}
   L_1 & = \new x_1.\bigr(
                    ( \new x_2.
                      ( \out{\enc{x_2}{x_1}} \parallel
                        \p A[x_2] \parallel
                        \p A[x_1] ))^\omega
                  \bigl)
   \\
   L_2 & = \new y_1, y_3. \bigr(
                    \out{y_3} \parallel
                    \p A[y_3]^\omega \parallel
                    ( \new y_2.
                     ( \out{y_2} \parallel
                       \p A[y_2] ) )^\omega
                  \bigl)
  \end{align*}
  We prove that $\sem{L_1} \subseteq \sem{L_2}$ by applying the recursive algorithm from \cref{th:lim-incl-decidable}.
  By \cref{th:char-lim-incl},
  here the threshold for expansion is $n=2$,
  but we will try with $n = 1$.
  In case we succeed, the inclusion holds.
  If not, we might have to increase $n$ up to $2$.
  This results in
  \begin{align*}
    \lext{L_2}{1} & =
      \new y_1, y_3. \bigr(
                    \out{y_3} \parallel
                    \p A[y_3] \parallel
                    \p A[y_3]^\omega \parallel
                    ( \new y_2.
                     ( \out{y_2} \parallel
                       \p A[y_2] ) )^\omega \parallel
( \new y_2.
                     ( \out{y_2} \parallel
                       \p A[y_2] ) )
                  \bigl)
    \\
    & \kcongr
     \new y_1, y_{22}, y_3. \bigr(
                    \out{y_3} \parallel
                    \out{y_{22}} \parallel
                    \p A[y_{22}] \parallel
                    \p A[y_3] \parallel
                    \p A[y_3]^\omega \parallel
( \new y_2.
                     ( \out{y_2} \parallel
                       \p A[y_2] ) )^\omega
                  \bigl)
  \end{align*}
  To try and match the fixed part of $L_1$ with $\lext{L_2}{1}$,
  we could \pre\alpha-rename $x_1$ to $y_1$.
This works for \cond{A} but the remaining goal \cond{B} cannot be shown as it does not hold because we cannot derive $\out{\enc{x_2}{y_1}}$ from
  the knowledge on the right-hand~side:
  \begin{equation*}
   \sem{
      \new x_2.
      ( \out{\enc{x_2}{y_1}} \parallel
        \p A[x_2] \parallel
        \p A[x_1] )
   }
   \not\subseteq \sem{
      \out{y_3} \parallel
      \out{\enc{y_{22}}{y_3}} \parallel
      \p A[x_1]^\omega \parallel
      ( \new y_2.
       ( \out{\enc{y_2}{y_3}} \parallel
         \p A[y_2] ) )^\omega
   }
  \end{equation*}
  Choosing the \pre\alpha-renaming $\subst{y_3 -> x_1}$ leaves us instead with:
  \begin{equation*}
   \sem{
      \new x_2.
      ( \out{\enc{x_2}{y_3}} \parallel
        \p A[x_2] \parallel
        \p A[x_1] )
   }
   \subseteq \sem{
      \out{y_3} \parallel
      \out{\enc{y_{22}}{y_3}} \parallel
      \p A[x_1]^\omega \parallel
      ( \new y_2.
       ( \out{\enc{y_2}{y_3}} \parallel
         \p A[y_2] ) )^\omega
   }
  \end{equation*}
  which holds.
  It suffices to expand the latter limit by $1$
  and to choose the renaming $\subst{y_2 -> x_2}$.
  Then, $\enc{x_1}{x_2} \kleq x_1, x_2$ holds and the process calls match.
  Note that it is crucial to keep $\p A[x_1]^\omega$ on the right side.
\end{example}

\subsection{Computing Post-Hat}
\label{sec:posthat}

The last result we need is the decidability of a function
$ \posthat_\Defs^s(L) $ which, given a limit $L$,
returns a finite set of limits
$\set{L_1,\dots,L_n}$
such that
$
  \dwd{\post_\Defs^s(\sem{L})} =
    \sem{L_1} \union \dots \union \sem{L_n}.
$
The challenge is representing
all the possible successors of processes in $\sem{L}$
without having to enumerate $\sem{L}$.
The key idea hinges again on the absorption axiom:
we observe that to consider all possible process calls that may cause a  transition,
it is enough to unfold each $\omega$ in $L$ by some bounded number $b$.
Any transition taken from further unfoldings will give rise to successors
that are congruent to some of the ones already considered.

The bound $b$ used in extending a limit,
is defined as a function of the process definitions and the intruder model.
The arity of a pattern $\vec{x} : M$ is $\card{\vec{x}}$.
Given a set of definitions $\Defs$,
  $\beta(\Delta)$ is the maximum arity of patterns in $\Delta$.
The function $\gamma(\Intruder)$ returns the maximum arity of the constructors
in the signature of $\Intruder$.

\begin{definition}[$\posthat$]
\label{def:posthat}
  Let
  $ b = \beta(\Delta)  \cdot \gamma(\Intruder)^{s-1} + 1  $ and
  $\stdf(\lext{L}{b}) = {\new \vec{x}.(\out{\Gamma} \parallel Q \parallel R)}$,
\[
    \posthat_\Defs^s(L) \is
        \Set{\;
          \new \vec{y}.\bigl(\out{\Gamma'} \parallel Q' \parallel R\bigr)
          \;|\;
            \new \vec{x}.\bigl(\out{\Gamma} \parallel Q\bigr)
              \redto_\Defs
            \new \vec{y}.\bigl(\out{\Gamma'} \parallel Q'\bigr)
              \in \Size{s}
        \;}.
  \]
\end{definition}

\begin{theorem}\label{th:posthat-correct}
$
  \posthat_\Defs^s(L) = \set{L_1,\dots,L_n}
    \implies
      \dwd{\post_\Defs^s(\sem{L})} =
        \sem{L_1} \union \dots \union \sem{L_n}.
$
\end{theorem}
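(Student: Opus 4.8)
The plan is to show the two inclusions $\dwd{\post_\Defs^s(\sem{L})} \subseteq \bigcup_i \sem{L_i}$ and $\bigcup_i \sem{L_i} \subseteq \dwd{\post_\Defs^s(\sem{L})}$ separately, with the first one being the substantial direction. For the easy direction, fix some $L_i = \new\vec{y}.(\out{\Gamma'}\parallel Q'\parallel R)$ arising from a transition $\new\vec{x}.(\out{\Gamma}\parallel Q)\redto_\Defs\new\vec{y}.(\out{\Gamma'}\parallel Q')\in\Size{s}$ where $\stdf(\lext{L}{b})=\new\vec{x}.(\out\Gamma\parallel Q\parallel R)$. Take any instance $P'\in\sem{L_i}$; by definition $P'\kembed$ some ground instantiation in which the $R$-part is replicated some finite number of times and the fixed part is as displayed. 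Since extension preserves semantics ($\sem{L}=\sem{\lext{L}{b}}$) and $R$ occurs as a parallel component on both sides of the transition, the transition on the fixed part lifts to a transition from an instance $P\in\sem{L}$ (the one with enough copies of $R$ unfolded), using \cref{rule:struct} to move $R$ around and the fact that the reduction rules only touch finitely many parallel components. Then $P\redto_\Defs P''$ with $P'\kembed P''$, so $P'\in\dwd{\post_\Defs^s(\sem{L})}$, provided the size bound is respected, which holds by construction since we only kept transitions in $\Size{s}$ and instantiation of $\omega$ does not change message sizes.

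For the hard direction, take $Q'\in\post_\Defs^s(\sem{L})$, i.e.\ $P\redto_\Defs Q'\in\Size{s}$ for some $P\in\sem{L}$; I must produce an index $i$ with $Q'\in\sem{L_i}$, possibly after passing to the downward closure. Write $P$ as an instance of $\stdf(L)$: it is $\kembed$ a ground process obtained by unfolding each $B_j^\omega$ in $L$ some number $m_j$ of times; call this ground process $P_0\in\instn{\lext{L}{\max_j m_j}}$ (up to $\kcongr$ and reordering). The transition that fires consumes/reads only the parallel components mentioned in the relevant rule: for \cref{rule:comm} two process calls, for \cref{rule:pub-out} one process call, for \cref{rule:pub-in} one process call plus a derivability check $\Gamma,\vec y\deriv N\theta$ against the whole knowledge $\Gamma$. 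The key claim, which is exactly where the absorption axiom enters, is that to witness this transition it suffices to unfold each $\omega$ only $b=\beta(\Delta)\cdot\gamma(\Intruder)^{s-1}+1$ times: the process calls touched by a \cref{rule:comm} step number at most $2\le\beta(\Delta)+1$, and for the derivability side-condition, absorption lets us discard all but boundedly many copies of the periodic knowledge block $\out{\Gamma_j}$ — each message of size $\le s$ mentions at most $\gamma(\Intruder)^{s-1}$ leaves, the pattern binds at most $\beta(\Delta)$ of them, and beyond $b$ copies the extra copies are absorbed ($\Gamma,\Gamma'\deriv M\iff\Gamma\deriv M$ when $M$'s names lie in $\Gamma$, applied iteratively along the renamings witnessing the periodicity). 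Hence the same transition, up to $\kcongr$, can be taken from an instance of $\stdf(\lext{L}{b})=\new\vec x.(\out\Gamma\parallel Q\parallel R)$ that grounds $R$ somehow but keeps the fixed part $\new\vec x.(\out\Gamma\parallel Q)$ intact; that fixed-part transition is precisely one of those enumerated in the definition of $\posthat$, yielding some $\new\vec y.(\out{\Gamma'}\parallel Q')$, and the untouched $R$-part rides along, so $Q'\kembed$ an instance of $L_i=\new\vec y.(\out{\Gamma'}\parallel Q'\parallel R)$. Invoking \cref{th:kembed-simulation} where needed to transport the transition back and forth along $\kembed$ closes the argument.

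The main obstacle is making the absorption-based counting rigorous for the \cref{rule:pub-in} case, where the intruder's derivability check ranges over the \emph{entire} current knowledge, including all $\omega$-replicated knowledge blocks. I expect the argument to require: (i) reducing the derivation to one using only finitely many names (by \eqref{axiom:relevancy} and \eqref{axiom:alpha}, only names appearing in the target message $N\theta$ and boundedly many intruder-fresh $\vec y$ matter), (ii) observing that the periodic structure of $\lext{L}{m}$ for large $m$ means the knowledge is $\Gamma$ together with many renamed copies $\Gamma\sigma_1,\Gamma\sigma_2,\dots$ with pairwise-disjoint fresh name sets, and (iii) applying Definition~\ref{def:absorption} repeatedly to peel off copies whose fresh names do not occur in the (bounded-size) target, leaving at most $b$ copies. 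The bookkeeping of which copies' names can appear in a size-$\le s$ message — hence the exponent $s-1$ in $\gamma(\Intruder)^{s-1}$ — and the interaction with the freshly introduced $\vec y$ from \cref{rule:pub-in} is the delicate part; everything else (the \cref{rule:comm} and \cref{rule:pub-out} cases, the easy inclusion, and the size-bound preservation) is comparatively routine and follows the pattern of standard WSTS ideal-completion arguments together with \cref{th:kembed-simulation}.
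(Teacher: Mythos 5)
Your proposal is correct and follows essentially the same route as the paper: the same two inclusions, the same reduction to successors of groundings via the simulation property of $\kembed$ (the paper's \cref{cor:post-exp-enough}), the same case analysis over the reduction rules, and the same absorption-based peeling of surplus $\omega$-copies with the same counting argument justifying $b = \beta(\Delta)\cdot\gamma(\Intruder)^{s-1}+1$. The steps (i)--(iii) you flag as the delicate part are exactly what the paper carries out in its Claims~I and~II via the step-indexed $\fold$ function and \cref{cor:deriv-variant-idemp}.
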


\subsection{Algorithmic Aspects}
\label{sec:algorithms}
\label{ex:running-is-bounded}

\begin{figure}[tb]
  \adjustfigure \resizebox{\textwidth}{!}{\vbox{\begin{align*}
L &= \new a, b, k_{as}, k_{bs}.(
      \out{a,b} \parallel
      \p{A_{1}}[a, b, k_{as}]^\omega \parallel
      \p{B_{1}}[a, b, k_{bs}]^\omega \parallel
      \p{S}[a, b, k_{as}, k_{bs}]^\omega \parallel
      L_1^\omega
    )
\\
L_1 &= \new n_{a}.\bigl( \out{n_{a}} \parallel
    \p{A_{2}}[a, b, k_{as}, n_{a}] \parallel
    L_2^\omega
  \bigr)
\\
L_2 &= \new k.\bigl(
    \out{\enc{k}{(a, k_{as})}} \parallel
    \out{\enc{k}{(b, k_{as})}} \parallel
    \out{\enc{k}{(n_{a}, k_{as})}} \parallel
    \out{\enc{k}{k_{bs}}} \parallel
    \p{Secret}[k]^\omega \parallel
    \p{A_{3}}[a, b, k_{as}, k]^\omega \parallel
    L_3^\omega
  \bigr)
\\
L_3 &= \new n_{b}.\bigl(
    \out{\enc{n_{b}}{(k, k)}} \parallel
    \out{\enc{n_{b}}{k}} \parallel
    \p{B_{2}}[a, b, k_{bs}, n_{b}, k]
  \bigr)
\end{align*} }}
\caption{An inductive invariant for \cref{ex:running}.}
  \label{fig:running-invariant}
\end{figure}

The limit~$L$ in \cref{fig:running-invariant} represents an inductive invariant
for \cref{ex:running}, under size bound~$3$.
It can be proven invariant by checking that $\posthat^3_\Defs(L)$
is included in~$L$.
Since the initial process of \cref{ex:running},~$P_0$, is in $\sem{L}$,
the invariant certificates that any reachable process
is \pre(3,7)-bounded (note $\nestr(L)=7$) and
satisfies secrecy. In fact, $L^\omega$ is also inductive,
proving boundedness and secrecy for any process in $(P_0)^\omega$.
Since $
  \sem{\new k.(\new x.(\out{x,\enc{x}{k}}))^\omega} \not\subseteq \sem{L^\omega}
$, $L^\omega$~provides proof that the protocol is not susceptible to
known-plaintext attacks where arbitrary known nonces
are available to the intruder encrypted with the same key.\footnote{The property could be extended to cover composite known-plaintext messages (of some maximum size~$s$), and the generation of a sufficiently high number~$N$ of plaintext-encrypted pair with the same key.}
The algorithms for inclusion and $\posthat$
can be used to check inductiveness given a candidate invariant such as $L$.
This leaves open the question of
how to efficiently generate candidates.
The algorithm of \cref{th:dwdcl-decidable} can in principle be used to
enumerate all candidate invariants, with an impractically high complexity.
Luckily, a more directed inference of invariants can be obtained
by a \emph{widening operator}, in the style of~\cite{ZuffereyWH12};
in fact, the invariant~$L$ was automatically inferred from~$P_0$
using the widening of our prototype tool.
The basic observation behind invariant inference,
is that from a sequence of transitions
$ P_1 \redto^* P_2 $ with $P_1 \kembed P_2$,
one can deduce that the same sequence can be simulated from $P_2$
(by~\cref{th:kembed-simulation})
obtaining a~$P_3$ with $P_2 \kembed P_3$ and so on.
The embedding between~$P_1$ and~$P_2$,
is justified by
$
  P_1 \kcongr \new\vec{x}.(\out{\Gamma_1} \parallel Q)
$, $
  P_2 \kcongr \new\vec{x}.(
    \out{\Gamma_1}\parallel
    Q \parallel
    P
  )
$;
we can extrapolate the difference $P$ and
accelerate the sequence of transitions
by constructing the limit $ \new\vec{x}.(
    \out{\Gamma_1}\parallel
    Q \parallel
    P^\omega
  )
$.
This operation can be extended to limits.
The end product is a finite union of limits which is inductive by construction.
This procedure requires exploration of transitions and
many inclusion checks, a costly combination.
To obtain a more practical algorithm, we devised two techniques:
inductiveness checks through ``incorporation'', and a coarser widening.

Consider the inductiveness check
$\sem[\big]{\posthat_\Defs^s(L)} \subseteq \sem{L}$
implemented by checking that,
for each transition considered by $\posthat$
the resulting limit $L'$ is included by~$L$.
We observe that $L'$~and~$L$ will share the context of the
transition: $L$ can be rewritten to $C[\p Q[\vec{M}]]$
and~$L'$ to $C[P]$ for some~$P$.
To prove inclusion of $C[P]$ in $C[\p Q[\vec{M}]]$
it is then sufficient to show that $P$ is embedded in $C[\p Q[\vec{M}]]$.
We call this check an \emph{incorporation} of $P$ in~$C$.
See \cref{app:incorporation} for an example.
Although incomplete in general, incorporation can prove inductiveness
in many practical examples, in a remarkably faster way.

There are cases, however, where the incorporation check fails on inductive invariants.
Consider for example an inductive invariant represented by the union
of two incomparable limits $L_1,L_2$.
Suppose that for some $P\in\sem{L_1}$ there is
$P'$ with $P\redto P'\in\sem{L_2}\setminus\sem{L_1}$.
Then, incorporation of $P'$ in $L_1$ would fail.
To side-step this problem we replace union of limits with parallel composition:
$\sem{L_1}\union\sem{L_2} \subseteq \sem{L_1\parallel L_2}$
by downward closure of $\sem{\hole}$.
Using this over-approximation, we can try to aim for an inductive invariant
which consists of exactly one limit,
and for which the incorporation check suffices.
This approximation is incomplete:
some protocols require inductive invariants
consisting of unions of incomparable limits.
 \section{Evaluation}
\label{sec:discussion}

\begin{table}[tb]
  \caption{Experimental results. Columns:
      \textbf{Infer}ence of invariant
        fully automatic~(\fullyauto) or
        interactive~(\interact);
      \textbf{C}heck of inductiveness;
      \textbf{S}ecrecy
        proved~(\checker),
        not holding~(\notsecret),
        not modelled~(\notmodelled).
  }
  \label{fig:tool-results}
  \centering\footnotesize \begin{tabular}{lr@{\ \ }c@{\quad}r@{\quad}c@{\enspace}|@{\enspace}lr@{\ \ }c@{\quad}r@{\quad}c@{\enspace}|@{\enspace}lr@{\ \ }c@{\quad}r@{\quad}c@{\enspace}}
\toprule\strut
\textbf{Name}  & &\llap{\textbf{Infer}} & \hfill\textbf{C}\hfill\null & \textbf{S} &
\textbf{Name}  & &\llap{\textbf{Infer}} & \hfill\textbf{C}\hfill\null & \textbf{S} &
\textbf{Name}  & &\llap{\textbf{Infer}} & \hfill\textbf{C}\hfill\null & \textbf{S} \\
\midrule\strut
Ex.\ref{ex:running}&  4.4s  & \fullyauto &  1.8s & \checker     &
NHS                &   5.0s & \fullyauto &  1.6s & \notmodelled &
YAH                &   7.8s & \fullyauto &  2.5s & \notmodelled \\
OR                 &   3.4s & \fullyauto &  1.9s & \notmodelled &
NHSs               &   6.8s & \fullyauto &  1.7s & \checker     &
YAHs1              &  12.3s & \fullyauto &  2.6s & \checker     \\
ORl                &  25.0s & \fullyauto &  3.5s & \notsecret   &
NHSr               &  90.9s & \interact  & 20.8s & \notsecret   &
YAHs2              &   8.8s & \fullyauto &  2.0s & \checker    \\
ORa	        	   &  13.8s & \interact  &  5.0s & \notmodelled &
KSL                &  37.0s & \fullyauto &  9.8s & \checker     &
YAHlk              &  11.9s & \interact  &  17.3s & \checker        \\
ORs                &   9.8s & \fullyauto &  2.0s & \checker     &
KSLr               & 200.6s & \fullyauto & 31.4s & \checker     &
ARPC               &  0.5s  & \fullyauto &  0.1s & \checker     \\
\bottomrule
\end{tabular}
 \end{table}

We built a proof-of-concept tool
implementing limit inclusion, inductivity check, incorporation
and a coarse widening.
Currently, the tool only supports symmetric encryption,
but we plan to add support for asymmetric encryption, signatures and hashing.
The source code and all the test protocol models are available at
\cite{lemma9}
where we also provide a tutorial-style explanation of the methodology.
We summarise our experiments\footnote{Tests run with Python~2.7, z3-solver~v4.8, 8GB~RAM, Intel CPU~i5, on Linux.}
in \cref{fig:tool-results}.
The tool is instructed to compute, using the widening, an invariant for the provided model,
under given message size assumptions.
When an invariant can be found, it represents a proof
that the model is depth-bounded.
If the inferred invariant is leak-free, secrecy is also proven.
We model a number of well-known protocols under various threat scenarios (e.g.~with or without leak of old keys):
Needham-Schr\"{o}der~(NHS), Otway-Rees~(OR), Kehne-Sch\"{o}nw\"{a}lder-Landend\"{o}rfer~(KSL), Yahalom~(YAH) and Andrew RPC~(ARPC).
For any example containing a problematic encryption oracle pattern (see \cref{ex:encryption-oracle})
we constrain the input message of the oracle to be a nonce (message of size~1).
With the exception of NHSr, ORa and YAHlk, all the invariants were obtained fully automatically.
For YAHlk we had to combine two widened limits (the timing is the sum of the time spent computing each limit);
for NHSr and ORa we had to tweak a partially widened limit manually to make it inductive.
To simulate using invariants as correctness certificates, for each example we re-checked them for inductiveness.

\subsection{Limitations}
\label{sec:limitations}

\subparagraph{Message size bounds}
Bounding message size in the analysis is not always acceptable,
as it may miss type confusion attacks.
Such patterns arise, for example, in XOR-based protocols,
where considering only typed runs is not appropriate.
This is problematic because in most practical examples
bigger size bounds lead to unboundedness in depth,
as illustrated by \cref{ex:encryption-oracle}.
However, not all is lost:
our limits can be extended to provide invariants with unbounded message size.
The idea, detailed in \cref{app:extended-limits},
is to first analyse a protocol with a size bound that ensures depth-boundedness,
obtaining an inductive invariant as one or more limits;
then we annotate such limits with ``wildcards'' ($\wildcard$)
on occurrences of names.
A wildcarded name $a^\wildcard$ stands for an arbitrary message
(of arbitrary size).
By introducing the appropriate wildcards, one can obtain an inductive invariant
for the protocol with no message size bounds.
For \cref{ex:encryption-oracle},
$
  \p E[k] \parallel \bigl( \new x. \out{\enc{x}{k}}\bigr)^\omega
$
is an inductive invariant if we restrict $x$ to be of size 1.
Since~$x$ is never inspected , injecting larger terms in it would not lead to new behaviour.
We can therefore generalise the limit to
$
  \p E[k] \parallel \bigl( \new x. \out{\enc{x^\wildcard}{k}}\bigr)^\omega
$
which is an inductive invariant for the protocol with no size bounds.

Since verification with no size bounds is undecidable,
this extension is by necessity incomplete:
there are downward-closed sets that cannot be represented by extended limits.
However, since protocols typically achieve resistance to type confusion attacks
by making sure that messages of the wrong type
are discarded by honest principals,
this extension is very effective.
In fact, all the size bounds in our benchmarks can be lifted using this technique.

\subparagraph{Inherent unboundedness}
There are two ways a protocol may fall outside of our class.
The first is when unboundedly many participants in a session form a
ring/list topology, like the recursive protocol of~\mbox{\cite[\textsection 6]{Paulson98}}.
One can provide a partial solution by using an under-approximate model
with a ring/list of fixed size,
or an over-approximate model by using an unbounded star topology.
The second way is when the intruder can produce irreducible encryption chains
and the participants would inspect them generating new behaviour.
In such cases not even extended limits can help.
We deem this situation unlikely to be desirable in a realistic protocol.
 \section{Conclusions and Future Work}
\label{sec:conclusions}

We presented a theory of decidable inductive invariants
for depth-bounded cryptographic protocols.
We showed how one can infer inductive invariants
and evaluated the approach through a prototype implementation.

From a theoretical perspective, it would be interesting to determine precise complexity bounds for inclusion, for general intruder models.
We can show that~$\kembed$ is NP-complete
for any intruder model that has polynomially decidable~$\kleq$~\cite{Stutz19}.

A direction for further improvement is
extending the class of supported properties.
In particular, we plan to study how invariants can be used
to automatically prove diff\hyph-equivalence~\cite{Blanchet08}
without bounding sessions/nonces.

Finally,
we intend to explore ways in which our invariants can be integrated
in existing tools such as ProVerif and Tamarin.
For instance, Tamarin performs a backwards search to find possible attacks.
Our invariants could provide a pruning technique to avoid exploring
paths that are unreachable from the initial state.
Similar combinations of forward and backward search have been shown to improve
performance dramatically for analyses of infinite-state systems
such as Petri~nets~\cite{BlondinFHH16}.

\clearpage
\appendix
{\LARGE\bfseries\noindent Appendix\par}

\section{Proof of Proposition~\ref{prop:sigsymm-axioms}}
We check the model $ \SigSymm $ is an effective intruder model.
\Cref{rule:id} implies \eqref{axiom:id}; it is easy to check that any derivation for $\Gamma\deriv M$ is also a derivation (of same depth) for $\Gamma,\Gamma'\deriv M$, which implies \eqref{axiom:mon};
\eqref{axiom:cut} is a consequence of
the admissibility of the \textsc{cut} rule proven in~\cite{tiu10lmcs}.
The --\textsubscript{R} rules directly entail \eqref{axiom:constr}.
All rules are invariant under renamings so \eqref{axiom:alpha} holds.
\eqref{axiom:relevancy} can be proven by a straightforward induction on the depth of derivations.
Since the premises in every rule involve only sub-terms of the consequence,
derivability is decidable.
\section{Proof of Proposition~\ref{prop:kn-embed-vs-kn-congr}}
We want to prove
  $ P_1 \kcongr P_2 $
    if and only if
  $ P_1 \kembed P_2 $ and $ P_2 \kembed P_1 $.
The ``only if'' direction is trivial.
The ``if'' direction is proved as follows.
Let $\stdf(P_i) \aeq \new\vec{x}_i.(\out{\Gamma_i} \parallel Q_i)$
for $i=1,2$.
It is easy to check that
$ P_1 \kembed P_2 $ if and only if
$ \Gamma_1 \kleq \Gamma_2 $,
$ \vec{x}_2 = \vec{x}_1\vec{y}_2 $ for some $\vec{y}_2$, and
$ Q_2 = Q_1 \parallel Q'_2 $
for some parallel composition of process calls $Q'_2$.
Similarly,
$ P_2 \kembed P_1 $ if and only if
$ \Gamma_2 \kleq \Gamma_1 $,
$ \vec{x}_1 = \vec{x}_2\vec{y}_1 $, and
$ Q_1 = Q_2 \parallel Q'_1 $
for some $\vec{y}_1, Q'_1$.
We get $\Gamma_1 \keq \Gamma_2$, $\vec{x}_1 = \vec{x}_2$ and $Q_1 = Q_2$ as required
to prove $P_1 \kcongr P_2$.

\section{Auxiliary Results}

\subsection{Properties of Knowledge}

\begin{lemma}
\label{lm:kleq-add-same}
  Let
  $\Gamma_1, \Gamma_2, \Gamma \in \Know$ be
  such that $\Gamma_1 \kleq \Gamma_2$.
  Then, $\Gamma, \Gamma_1 \kleq \Gamma, \Gamma_2$.
\end{lemma}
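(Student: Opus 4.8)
The plan is to reduce the claim to the ``local'' characterisation of the knowledge ordering given by \cref{lm:locality}, namely that $\Delta_1 \kleq \Delta_2$ holds precisely when $\Delta_2 \deriv M$ for every $M \in \Delta_1$. With that in hand it suffices to show $\Gamma, \Gamma_2 \deriv M$ for every $M \in \Gamma \union \Gamma_1$. First I would fix an arbitrary such $M$ and split on its origin. If $M \in \Gamma$, then $M$ occurs in $\Gamma, \Gamma_2$, so $\Gamma, \Gamma_2 \deriv M$ directly by \eqref{axiom:id}. If instead $M \in \Gamma_1$, then from the hypothesis $\Gamma_1 \kleq \Gamma_2$ and \cref{lm:locality} we obtain $\Gamma_2 \deriv M$, and since $\Gamma_2 \subseteq \Gamma \union \Gamma_2$, \eqref{axiom:mon} yields $\Gamma, \Gamma_2 \deriv M$. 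In both cases $\Gamma, \Gamma_2 \deriv M$, so applying \cref{lm:locality} in the converse direction gives $\Gamma, \Gamma_1 \kleq \Gamma, \Gamma_2$.

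If one prefers a proof that appeals only to the axioms of \cref{def:intruder-model}, the same fact follows by an iterated \eqref{axiom:cut}. Write $\Gamma_1 = \set{N_1, \dots, N_k}$. By $\Gamma_1 \kleq \Gamma_2$ and \eqref{axiom:mon} we have $\Gamma, \Gamma_2, N_{i+1}, \dots, N_k \deriv N_i$ for each $i$. Starting from $\Gamma, \Gamma_2, \Gamma_1 \deriv M$, which follows from the given $\Gamma, \Gamma_1 \deriv M$ by \eqref{axiom:mon}, one removes $N_1, \dots, N_k$ from the left-hand side one at a time using \eqref{axiom:cut}, arriving at $\Gamma, \Gamma_2 \deriv M$; then \eqref{axiom:mon} (or rather nothing further) gives the conclusion. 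I expect essentially no obstacle in this lemma: the only mild subtlety is bookkeeping in the iterated-\textsc{cut} argument, where each application must retain the not-yet-eliminated $N_j$'s as side knowledge, which the explicit enumeration above makes routine.
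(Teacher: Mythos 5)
Your proof is correct, and your second argument (iterated \eqref{axiom:cut} eliminating the elements of $\Gamma_1$ one at a time, with \eqref{axiom:mon} supplying both the side premises $\Gamma,\Gamma_2,N_{i+1},\dots,N_k \deriv N_i$ and the starting point $\Gamma,\Gamma_2,\Gamma_1\deriv M$) is exactly the paper's proof of this lemma. Your first route via \cref{lm:locality} is an equally valid and slightly slicker packaging, legitimate here because that proposition is stated before this lemma and does not depend on it.
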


\begin{proof}
  Let $\Gamma_1 = \set{M_1,\dots,M_n}$.
We have to show that for all $N$,
if $\Gamma, \Gamma_1 \deriv N$ then $\Gamma, \Gamma_2 \deriv N$.
We apply~\eqref{axiom:cut} $n$ times obtaining
\begingroup\small
\[
  \infer*[right=\ref{axiom:cut}]{
    \Gamma, \Gamma_2 \deriv M_1\and
    \infer*[right=\ref{axiom:cut}]{
      \Gamma, \Gamma_2, M_1 \deriv M_2\and
      \infer*[right=\ref{axiom:cut}]{
        \infer*[right=\ref{axiom:cut},
                left=\llap{\raisebox{-1em}{$\iddots\quad$}}]{
          \Gamma, \Gamma_2, (\Gamma_1\setminus M_n) \deriv M_n\and
\Gamma,\Gamma_1,\Gamma_2\deriv N
        }{\vdots}
      }{
        \Gamma, \Gamma_2, M_1, M_2 \deriv N
      }
    }{
      \Gamma, \Gamma_2, M_1 \deriv N
    }
  }{
    \Gamma, \Gamma_2\deriv N
  }
\]
\endgroup
From $\Gamma_1 \kleq \Gamma_2$ we have
$\forall i \leq n\st \Gamma_2 \deriv M_i$,
which implies, by~\eqref{axiom:mon},
all the left-most premises.
By~\eqref{axiom:mon}, from $\Gamma,\Gamma_1 \deriv N$
we know $\Gamma,\Gamma_1,\Gamma_2 \deriv N$,
which completes the derivation
showing $\Gamma, \Gamma_2\deriv N$.
 \end{proof}

\begin{corollary}
\label{lm:kleq-add-kleq}
  Let $\Delta_1, \Delta_2, \Gamma_1, \Gamma_2 \in \Know$ be such that
  $\Delta_1 \kleq \Delta_2$ and $\Gamma_1 \kleq \Gamma_2$.
  Then, $\Delta_1, \Gamma_1 \kleq \Delta_2, \Gamma_2$.
\end{corollary}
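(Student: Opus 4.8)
The plan is to derive this directly from \cref{lm:kleq-add-same} by two applications and one use of transitivity of $\kleq$. Note first that $\kleq$ is a quasi-order: reflexivity and transitivity are immediate from its definition as $\Gamma_1 \kleq \Gamma_2 \iff \forall M\st \Gamma_1 \deriv M \implies \Gamma_2 \deriv M$.

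First I would instantiate \cref{lm:kleq-add-same} with the common prefix $\Delta_1$ and the hypothesis $\Gamma_1 \kleq \Gamma_2$, obtaining $\Delta_1, \Gamma_1 \kleq \Delta_1, \Gamma_2$. Then I would instantiate \cref{lm:kleq-add-same} a second time, now using $\Gamma_2$ as the common part and the hypothesis $\Delta_1 \kleq \Delta_2$; since $\union$ is commutative this yields $\Delta_1, \Gamma_2 \kleq \Delta_2, \Gamma_2$. Chaining the two inclusions by transitivity of $\kleq$ gives $\Delta_1, \Gamma_1 \kleq \Delta_1, \Gamma_2 \kleq \Delta_2, \Gamma_2$, which is the claim.

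There is essentially no obstacle here: the only point worth stating explicitly is that \cref{lm:kleq-add-same} is symmetric in how it treats the fixed context versus the ordered pair, so it applies in both positions once one uses commutativity of set union to rewrite $\Delta_1, \Gamma_2$ as $\Gamma_2, \Delta_1$ and $\Delta_2, \Gamma_2$ as $\Gamma_2, \Delta_2$.
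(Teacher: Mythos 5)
Your proof is correct and matches the paper's own argument exactly: the paper also applies \cref{lm:kleq-add-same} twice and chains $\Gamma_1,\Delta_1 \kleq \Gamma_2,\Delta_1 \kleq \Gamma_2,\Delta_2$ by transitivity. Your extra remarks about commutativity of union and transitivity of $\kleq$ are harmless elaborations of what the paper leaves implicit.
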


\begin{proof}
  Easy corollary of \cref{lm:kleq-add-same}:
  $ \Gamma_1,\Delta_1 \kleq \Gamma_2, \Delta_1 \kleq \Gamma_2, \Delta_2 $.
\end{proof}

\begin{lemma}
 \label{lm:persistence-of-knowledge}
$\out{\Gamma} \parallel \out{\Gamma} \kcongr \out{\Gamma}$,
 for all~$\Gamma\in\Know$.
\end{lemma}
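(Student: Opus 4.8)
The plan is to read the statement off the characterisation of knowledge congruence, exploiting that the knowledge ordering is a relation on \emph{sets} of messages and is therefore blind to how many times a given message is output.

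First I would unfold the notation. For $\Gamma = \set{M_1,\dots,M_n}$ the term $\out{\Gamma}$ is $\out{M_1}\parallel\dots\parallel\out{M_n}$, so $\out{\Gamma}\parallel\out{\Gamma}$ is itself in standard form~\eqref{def:stdform}: it has no restrictions, its process-call part is $\zero$, and the \emph{set} of its active messages is $\Gamma\union\Gamma = \Gamma$. Exactly the same data ($\vec{x}$ empty, process-call part $\zero$, knowledge set $\Gamma$) describes $\stdf(\out{\Gamma})$.

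Next I would discharge the side condition of the characterisation of $\kcongr$ stated just after its definition (equivalently, of the base clause $\out{\Gamma_1}\kcongr\out{\Gamma_2}$ when $\Gamma_1\keq\Gamma_2$): it suffices to observe $\Gamma\keq\Gamma$. This is immediate from reflexivity of $\kleq$, a consequence of~\eqref{axiom:id} and~\eqref{axiom:mon} (cf.\ also \cref{lm:locality}); more to the point, ${\deriv}$ is a relation on finite sets, so the messages derivable from the active outputs of $\out{\Gamma}\parallel\out{\Gamma}$ are exactly those derivable from $\Gamma$. With the structural parts literally identical and the knowledge components $\keq$, the characterisation yields $\out{\Gamma}\parallel\out{\Gamma}\kcongr\out{\Gamma}$; alternatively one obtains the two embeddings $\out{\Gamma}\parallel\out{\Gamma}\kembed\out{\Gamma}$ and $\out{\Gamma}\kembed\out{\Gamma}\parallel\out{\Gamma}$ directly and concludes by \cref{prop:kn-embed-vs-kn-congr}.

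The only point needing care — and where the (small) content of this bookkeeping lemma lies — is the interface with the conventions: since $\Know$ consists of sets and, at the level of the knowledge block, $\out{\Delta_1}\parallel\out{\Delta_2}$ denotes the knowledge $\Delta_1\union\Delta_2$, the doubled occurrence of each $M_i$ on the left is absorbed before the $\keq$-check is ever applied. So there is no genuine mathematical obstacle here; the lemma records precisely that $\kcongr$ is insensitive to output multiplicities, which is what makes it legitimate, throughout the paper, to collect the active messages of a configuration into a single \emph{set} $\Gamma$.
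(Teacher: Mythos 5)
Your proof is correct and follows essentially the same route as the paper's: both reduce the claim to the observation that $\Gamma,\Gamma=\Gamma$ as sets, so that the knowledge components are $\keq$ by reflexivity and the structural parts coincide. The extra detail you give about standard forms and the alternative conclusion via \cref{prop:kn-embed-vs-kn-congr} is harmless elaboration of the same argument.
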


\begin{proof}
Let $\Gamma \in \Know$.
By definition, $\out{\Gamma} \parallel \out{\Gamma} \kcongr \out{\Gamma}$ iff $\Gamma, \Gamma \keq \Gamma$.
Notice that these are actually sets, hence $\Gamma, \Gamma = \Gamma$ and the claim follows by reflexivity of $\keq$.
\end{proof}

\subsection{Properties of Knowledge Embedding}

\begin{corollary}
\label{cor:kembed-add-mess}
  Let $\Gamma\in\Know$ and
  $P_1, P_2 \in \Proc$ with
  $\stdf(P_i) = \new \vec{x}_i.(\out{\Gamma_i} \parallel Q_i)$
  for $i \in \set{1,2}$ and
  $P_1 \kembed P_2$.
  Then, $\out{\Gamma} \parallel P_1 \kembed \out{\Gamma} \parallel P_2$.
\end{corollary}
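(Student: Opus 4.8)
The plan is to reduce the statement to \cref{lm:kleq-add-same}. First I would unfold $P_1 \kembed P_2$ using the syntactic characterisation of knowledge embedding established in the proof of \cref{prop:kn-embed-vs-kn-congr}: with $\stdf(P_i) \aeq \new\vec{x}_i.(\out{\Gamma_i} \parallel Q_i)$ as in the statement, $P_1 \kembed P_2$ is equivalent to $\Gamma_1 \kleq \Gamma_2$ together with $\vec{x}_2 = \vec{x}_1\vec{y}_2$ for some $\vec{y}_2$ and $Q_2 = Q_1 \parallel Q'_2$ for some parallel composition of process calls $Q'_2$.

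Next I would prepend $\out{\Gamma}$ to both sides and renormalise. Since free and bound names are assumed disjoint, and we may freely \pre\alpha-rename, I would first rename the restricted names so that $\names(\Gamma) \inters (\vec{x}_1 \union \vec{y}_2) = \emptyset$, keeping the shared prefix $\vec{x}_1$ of the two restriction lists identified. Then scope extrusion gives $\out{\Gamma} \parallel P_1 \congr \new\vec{x}_1.(\out{\Gamma} \parallel \out{\Gamma_1} \parallel Q_1)$ and $\out{\Gamma} \parallel P_2 \congr \new\vec{x}_1.\new\vec{y}_2.(\out{\Gamma} \parallel \out{\Gamma_2} \parallel Q_1 \parallel Q'_2)$, so both processes are now in standard form with knowledge parts $\Gamma,\Gamma_1$ and $\Gamma,\Gamma_2$, a common process-call part $Q_1$, and the additional part $Q'_2$ appearing only on the right --- exactly the shape demanded by the definition of $\kembed$.

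It then remains to verify $\Gamma,\Gamma_1 \kleq \Gamma,\Gamma_2$, which is precisely \cref{lm:kleq-add-same} instantiated with $\Gamma_1 \kleq \Gamma_2$; note that $\kleq$ is insensitive to whether $\out{\Gamma}\parallel\out{\Gamma_i}$ is read as a set or as a multiset union of outputs (by monotonicity of derivability), so no issue arises from messages shared between $\Gamma$ and $\Gamma_i$. There is no substantial obstacle here: the only points requiring a little care are the bookkeeping of restricted names --- choosing them to coincide on the shared prefix while avoiding $\names(\Gamma)$ --- and the observation that the knowledge component of a standard form only matters up to $\keq$, both of which are routine and already implicit in the earlier congruence arguments of the paper.
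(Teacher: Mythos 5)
Your proposal is correct and follows essentially the same route as the paper's (much terser) proof: the paper likewise observes that the process-call part of the embedding is inherited from the hypothesis $P_1 \kembed P_2$ and that only the knowledge components need comparing, which is exactly \cref{lm:kleq-add-same} applied to $\Gamma_1 \kleq \Gamma_2$. The extra bookkeeping you supply about \pre\alpha-renaming, scope extrusion, and reading $\Gamma,\Gamma_i$ as a set union is accurate but left implicit in the paper.
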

\begin{proof}
We only have to compare knowledge as $Q_1 \kembed Q_2$ holds by assumption.
$\Gamma, \Gamma_1 \kleq \Gamma, \Gamma_2$ follows by \cref{lm:kleq-add-same}.
\end{proof}

\begin{lemma}
\label{lm:kembed-lground-proc}
  Let $P_1, P_2\in\Proc$
  and $n \in \Nat$.
  If $P_1 \kembed P_2$,
  then $P_1^n \kembed P_2^n$.
\end{lemma}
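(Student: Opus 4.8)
The plan is to prove \cref{lm:kembed-lground-proc} ($P_1 \kembed P_2 \implies P_1^n \kembed P_2^n$) by induction on $n$, using the fact that knowledge embedding is preserved under parallel composition with a shared knowledge context, which is essentially \cref{cor:kembed-add-mess} together with the persistence and monotonicity lemmas for $\kleq$.

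\paragraph{Proof.}
We argue by induction on $n$. The base case $n=0$ is trivial, since $P_1^0 = \zero = P_2^0$ and $\kembed$ is reflexive. For the inductive step, assume $P_1^n \kembed P_2^n$. Write $\stdf(P_i) \aeq \new\vec{x}_i.(\out{\Gamma_i}\parallel Q_i)$ for $i\in\set{1,2}$, and note that, reasoning as in the proof of \cref{prop:kn-embed-vs-kn-congr}, $P_1\kembed P_2$ gives $\Gamma_1 \kleq \Gamma_2$, a decomposition $\vec{x}_2 = \vec{x}_1\vec{y}_2$, and $Q_2 = Q_1 \parallel Q_2'$ for some process calls $Q_2'$. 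Choosing fresh copies of the bound names for each of the $n{+}1$ occurrences (as justified by $\aeq$, and pulling all restrictions to the top using the structural-congruence laws), we obtain standard forms
\[
  \stdf(P_i^{n+1}) \aeq \new \vec{z}_i.\bigl(\out{\Gamma_i^{(0)},\dots,\Gamma_i^{(n)}} \parallel Q_i^{(0)}\parallel\cdots\parallel Q_i^{(n)}\bigr),
\]
where the $j$-th copy $\Gamma_i^{(j)}, Q_i^{(j)}$ is a renaming of $\Gamma_i, Q_i$ along a renaming of the bound names, and $\vec{z}_i$ collects all these bound names. From $\Gamma_1 \kleq \Gamma_2$ and \eqref{axiom:alpha} we get $\Gamma_1^{(j)} \kleq \Gamma_2^{(j)}$ for each $j$; applying \cref{lm:kleq-add-kleq} repeatedly yields $\Gamma_1^{(0)},\dots,\Gamma_1^{(n)} \kleq \Gamma_2^{(0)},\dots,\Gamma_2^{(n)}$. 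Moreover $\vec{z}_2$ extends $\vec{z}_1$ by the union of the extra names $\vec{y}_2^{(j)}$, and the process-call component of $\stdf(P_2^{n+1})$ is exactly that of $\stdf(P_1^{n+1})$ extended by $Q_2'^{(0)}\parallel\cdots\parallel Q_2'^{(n)}$. This is precisely what the definition of $\kembed$ requires, so $P_1^{n+1} \kembed P_2^{n+1}$.

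\paragraph{Alternative, cleaner route.}
Rather than manipulating standard forms directly, I would prefer to reduce everything to \cref{cor:kembed-add-mess} and the identity $P^{n+1} = P \parallel P^n$. The only mismatch is that \cref{cor:kembed-add-mess} adds a \emph{common} $\out{\Gamma}$ to both sides, whereas here the two sides have $P_1$ and $P_2$ on one component. So the cleanest packaging is first to establish the following congruence-closed strengthening: if $P_1 \kembed P_2$ and $P_1' \kembed P_2'$ then $P_1 \parallel P_1' \kembed P_2 \parallel P_2'$ (this is the monoidality of $\kembed$ under $\parallel$, which follows by composing the standard-form characterisations of $\kembed$ from \cref{prop:kn-embed-vs-kn-congr}'s proof with \cref{lm:kleq-add-kleq}). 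Granted that, \cref{lm:kembed-lground-proc} is immediate: $P_1^{n+1} = P_1 \parallel P_1^n \kembed P_2 \parallel P_2^n = P_2^{n+1}$, using the induction hypothesis $P_1^n \kembed P_2^n$ and $P_1 \kembed P_2$.

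\paragraph{Main obstacle.}
The only genuinely delicate point is the bookkeeping of bound names: each of the $n{+}1$ copies of $P_1$ (resp.\ $P_2$) must use disjoint bound names, and the extra bound names $\vec{y}_2$ coming from the embedding $P_1 \kembed P_2$ must not be confused across copies. This is handled by $\aeq$-renaming and the implicit disjointness convention for nested restrictions already adopted in the paper, but it is where the argument must be stated carefully; the knowledge-ordering side is routine given \cref{lm:kleq-add-kleq} and \eqref{axiom:alpha}. Everything else is a direct unfolding of the definition of $\kembed$ and the already-available algebraic lemmas about $\kleq$.
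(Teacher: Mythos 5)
Your proof is correct and follows essentially the same route as the paper's: the paper likewise combines the per-copy matchings of names and process calls given by $P_1 \kembed P_2$ and obtains the knowledge-ordering side by $n$ applications of \cref{lm:kleq-add-kleq}; your version merely makes the induction and the bound-name bookkeeping explicit where the paper leaves them implicit.
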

\begin{proof}
For every single instance of both processes use the matching provided by $P_1 \kembed P_2$ for names and process calls. The knowledge embedding part of the goal can be obtained by $n$~applications of \cref{lm:kleq-add-kleq}.
\end{proof}

\subsection{Properties of Grounding}

\begin{lemma}
  \label{lm:lground-in-sem}
  For every $n \in \Nat$,
  $\lground{L}{n} \in \instn{L}$.
\end{lemma}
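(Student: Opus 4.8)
The plan is to prove the statement by structural induction on the limit~$L$, matching each defining clause of $\lground{\hole}{n}$ from \cref{fig:grounding-extension} against the corresponding clause of $\instn{\hole}$ from \cref{def:limits-sem}. Since the grounding operation is defined by cases on whether $L$ is sequential or $\zero$, a parallel composition, a restriction, or an iterated limit $B^\omega$, these four cases give an exhaustive case analysis.

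In the base cases $L \in \set{\zero, \out{M}, \p Q[\vec{M}]}$, grounding acts as the identity and $\instn{L}$ is by definition the singleton $\set{L}$, so $\lground{L}{n} = L \in \instn{L}$ with nothing further to check. For $L = L_1 \parallel L_2$ I would invoke the induction hypothesis to obtain $\lground{L_i}{n} \in \instn{L_i}$ for $i=1,2$, and then conclude from $\lground{L_1 \parallel L_2}{n} = \lground{L_1}{n} \parallel \lground{L_2}{n}$ together with $\instn{L_1 \parallel L_2} = \set{P_1 \parallel P_2 \mid P_1 \in \instn{L_1},\, P_2 \in \instn{L_2}}$. The restriction case $L = \new x.L'$ is handled identically, since grounding commutes with $\new x.\hole$ and $\instn{\new x.L'} = \set{\new x.P \mid P \in \instn{L'}}$, so the induction hypothesis $\lground{L'}{n} \in \instn{L'}$ already does the work.

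The only clause requiring a moment's attention is $L = B^\omega$, where $\lground{B^\omega}{n} = (\lground{B}{n})^n$ is the $n$-fold parallel composition of $\lground{B}{n}$ (with $P^0 \is \zero$). Here I would apply the induction hypothesis to get $\lground{B}{n} \in \instn{B}$, and then observe that in $\instn{B^\omega} = \Union_{m \in \Nat} \set{P_1 \parallel \dots \parallel P_m \mid \forall i \leq m\colon P_i \in \instn{B}}$ one may choose the union index $m$ to be exactly the grounding parameter~$n$ and each $P_i$ to be $\lground{B}{n}$; for $n=0$ both sides degenerate to $\zero$, matching the $m=0$ summand. This closes the induction. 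There is no genuine obstacle: the only subtlety worth flagging is that the bound variable in the union defining $\instn{B^\omega}$ is distinct from the grounding parameter and must be instantiated to it in the $\omega$-case.
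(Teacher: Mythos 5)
Your proof is correct and matches the paper's argument, which simply states that the claim follows by an easy induction on the structure of $L$; your case analysis, including the instantiation of the union index to $n$ in the $B^\omega$ case, is exactly the routine verification the paper leaves implicit.
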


\begin{lemma}
  \label{lm:less-lground-kembed}
  For every $n \leq m \in \Nat$,
  $\lground{L}{n} \kembed \lground{L}{m}$.
\end{lemma}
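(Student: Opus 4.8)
The plan is to prove the statement by structural induction on the limit~$L$, following the recursive clauses of grounding in \cref{fig:grounding-extension}. Before starting the induction I would record three elementary facts about $\kembed$ that are used repeatedly. First, $\kembed$ is reflexive: since $\kcongr$ is reflexive (it contains the reflexive relation $\congr$), \cref{prop:kn-embed-vs-kn-congr} gives $P\kembed P$. Second, $\kembed$ is preserved by the term constructors $\new x.(-)$ and $\parallel$: writing each hypothesis $P_i\kembed P_i'$ in the shape $P_i\congr\new\vec z_i.(\out{\Delta_i}\parallel R_i)$, $P_i'\congr\new\vec z_i\vec w_i.(\out{\Delta_i'}\parallel R_i\parallel R_i')$ with $\Delta_i\kleq\Delta_i'$ required by the definition of $\kembed$, one $\alpha$-renames the bound vectors apart, uses the law $P\parallel\new a.Q\congr\new a.(P\parallel Q)$ to float the restrictions out, fuses the knowledge parts via $\out{\Delta_1}\parallel\out{\Delta_2}\congr\out{\Delta_1\cup\Delta_2}$, and concludes $\Delta_1\cup\Delta_2\kleq\Delta_1'\cup\Delta_2'$ with \cref{lm:kleq-add-kleq}; for $\new x.(-)$ the argument is even simpler, as one just prepends $\new x$ to the shared prefix. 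Third, $P^n\kembed P^m$ whenever $n\leq m$, because $P^m\congr P^n\parallel P^{m-n}$ and the surplus copies only add parallel components and, by \eqref{axiom:mon}, enlarge the derivable knowledge.

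With these in hand the induction is routine. If $L$ is sequential or $\zero$, grounding is the identity and $\lground{L}{n}=L=\lground{L}{m}$, so reflexivity suffices. If $L=L_1\parallel L_2$ or $L=\new x.L'$, the induction hypothesis gives $\lground{L_i}{n}\kembed\lground{L_i}{m}$ (resp.\ $\lground{L'}{n}\kembed\lground{L'}{m}$), and the second fact above lifts this through the constructor to $\lground{L}{n}\kembed\lground{L}{m}$.

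The only clause that takes two steps is $L=B^\omega$, where $\lground{L}{n}=(\lground{B}{n})^n$ and $\lground{L}{m}=(\lground{B}{m})^m$. Here I would first apply the induction hypothesis $\lground{B}{n}\kembed\lground{B}{m}$ together with \cref{lm:kembed-lground-proc} to obtain $(\lground{B}{n})^n\kembed(\lground{B}{m})^n$; then, since $n\leq m$, the third fact gives $(\lground{B}{m})^n\kembed(\lground{B}{m})^m$; finally, composing the two embeddings — transitivity of $\kembed$ is immediate from the standard-form characterisation of $\kembed$ recorded in the proof of \cref{prop:kn-embed-vs-kn-congr} — yields $\lground{L}{n}\kembed\lground{L}{m}$.

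I expect the only mildly delicate point to be the bookkeeping in the $B^\omega$ clause: one must pick the $m$ $\alpha$-variants of $B$ used for $(\lground{B}{m})^m$ so that their first $n$ copies coincide with the variants used for $(\lground{B}{m})^n$, and verify that the accumulated knowledge on the two sides is related by $\kleq$ — which follows by iterating \cref{lm:kleq-add-kleq} and then invoking \eqref{axiom:mon} to absorb the surplus copies. Everything else reduces to the definition of $\kembed$ and the structural congruence laws, so no new ideas are needed beyond the three preliminary observations.
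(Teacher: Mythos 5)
Your proof is correct and follows exactly the route the paper intends: the paper's own proof of \cref{lm:less-lground-kembed} is just the one-line remark that the claim ``follows by an easy induction on the structure of~$L$'', and your argument is that induction spelled out, with the compositionality of $\kembed$ under $\parallel$ and $\new x.(-)$, the fact $P^n\kembed P^m$ for $n\leq m$, and \cref{lm:kembed-lground-proc} supplying exactly the lemmas needed in each clause. The only remark worth adding is that the $B^\omega$ bookkeeping you flag as delicate is in fact immediate, since $(\lground{B}{m})^m$ is by definition $m$ syntactically identical parallel copies of the same term, so no choice of $\alpha$-variants arises.
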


\begin{lemma}
  \label{lm:lground-kembed}
  For every $P \in \sem{L}$ there exists an $n\in\Nat$ such that
  $P \kembed \lground{L}{n}$.
\end{lemma}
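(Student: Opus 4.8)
The plan is to reduce the statement to instances of $L$ and then argue by structural induction on~$L$. Since $\sem{L} = \dwd{\instn{L}}$, every $P \in \sem{L}$ satisfies $P \kembed P'$ for some $P' \in \instn{L}$, and $\kembed$ is transitive (immediate from the characterisation of $\kembed$ given in the proof of \cref{prop:kn-embed-vs-kn-congr}), so it suffices to prove the sharper claim: for every $P' \in \instn{L}$ there is an $n \in \Nat$ with $P' \kembed \lground{L}{n}$. I prove this by induction on the structure of~$L$, with cases $\zero$, $\out{M}$, $\p Q[\vec{M}]$, $\new x.L'$, $L_1 \parallel L_2$, and $B^\omega$. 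In the three base cases $\instn{L} = \set{L}$ and $\lground{L}{n} = L$ for every~$n$, so $n = 0$ works by reflexivity of $\kembed$.

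For the inductive cases the key is that grounding is monotone in~$n$ (\cref{lm:less-lground-kembed}), which lets me replace several witnesses coming from the induction hypothesis by a single common one. For $L = \new x.L'$, an instance is $\new x.P'$ with $P' \in \instn{L'}$; the induction hypothesis gives $n$ with $P' \kembed \lground{L'}{n}$, and $\kembed$ is preserved under prefixing with $\new x$, so $\new x.P' \kembed \new x.\lground{L'}{n} = \lground{\new x.L'}{n}$. For $L = L_1 \parallel L_2$, an instance is $P_1 \parallel P_2$ with $P_i \in \instn{L_i}$; the induction hypothesis gives $n_i$ with $P_i \kembed \lground{L_i}{n_i}$, and taking $n = \max(n_1, n_2)$ I get $P_i \kembed \lground{L_i}{n_i} \kembed \lground{L_i}{n}$ by \cref{lm:less-lground-kembed}, hence $P_1 \parallel P_2 \kembed \lground{L_1}{n} \parallel \lground{L_2}{n} = \lground{L_1 \parallel L_2}{n}$ using compatibility of $\kembed$ with parallel composition. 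For $L = B^\omega$, an instance is $P_1 \parallel \dots \parallel P_m$ with each $P_i \in \instn{B}$; applying the induction hypothesis to each $P_i$ gives $n_i$, and choosing $n$ with $n \geq m$ and $n \geq n_i$ for all $i \leq m$ I obtain $P_i \kembed \lground{B}{n}$ (again by \cref{lm:less-lground-kembed}). Then $P_1 \parallel \dots \parallel P_m \kembed (\lground{B}{n})^m \kembed (\lground{B}{n})^n = \lground{B^\omega}{n}$, where the second embedding holds because $m \leq n$ and $\zero \kembed Q$ for every~$Q$, so the surplus $n - m$ copies can be added.

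The induction skeleton is routine; the only real work is justifying the compatibility properties of $\kembed$ used above, namely that $P \kembed P'$ implies $\new x.P \kembed \new x.P'$ and $P \parallel R \kembed P' \parallel R$ (and hence, with transitivity, $P_1 \parallel P_2 \kembed P_1' \parallel P_2'$ whenever $P_i \kembed P_i'$). These all follow from the characterisation of $\kembed$ in the proof of \cref{prop:kn-embed-vs-kn-congr} --- sub-multiset inclusion of process calls together with the knowledge ordering --- merging knowledge with \cref{lm:kleq-add-same} and \cref{lm:kleq-add-kleq}. The one point requiring a little care is that $\out{\Gamma_1} \parallel \out{\Gamma_2}$ is only knowledge-congruent, not structurally congruent, to $\out{\Gamma_1 \union \Gamma_2}$ (\cref{lm:persistence-of-knowledge}); this is harmless, since $\kembed$ is closed under $\kcongr$ on both sides (by \cref{prop:kn-embed-vs-kn-congr} and transitivity), so the merges can be performed modulo $\kcongr$.
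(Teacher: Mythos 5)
Your proof is correct and takes essentially the same route as the paper, which disposes of this lemma (together with \cref{lm:lground-in-sem,lm:less-lground-kembed}) with the one-line remark that it follows by an easy structural induction on $L$; you have simply spelled out that induction, including the reduction from $\sem{L}$ to $\instn{L}$ and the padding argument in the $B^\omega$ case, and the compatibility properties of $\kembed$ you invoke are all justified as you describe.
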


\begin{proof}[Proof of \cref{lm:lground-in-sem,lm:less-lground-kembed,lm:lground-kembed}]
All claims follow by an easy induction on the structure of~$L$.
\end{proof}

\begin{lemma}
\label{lm:kembed-limit-mult}
  For any $L \in \Limits$, and $m, n \in \Nat$,
  we have
  $(\lground{L^\omega}{n})^{m} \kembed \lground{L^\omega}{m * n}$.
\end{lemma}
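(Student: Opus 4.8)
The plan is to peel off the outermost $\omega$ on both sides and then invoke the two facts already at our disposal: that grounding is monotone in its index (\cref{lm:less-lground-kembed}) and that $\kembed$ is preserved under parallel replication (\cref{lm:kembed-lground-proc}).

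First I would unfold the grounding of $L^\omega$. By the definition in \cref{fig:grounding-extension}, $\lground{L^\omega}{n} = (\lground{L}{n})^n$, i.e.\ the $n$-fold parallel composition of the single term $\lground{L}{n}$. Hence, using associativity and commutativity of $\parallel$,
\[
  (\lground{L^\omega}{n})^m \congr (\lground{L}{n})^{m\cdot n},
  \qquad
  \lground{L^\omega}{m\cdot n} = (\lground{L}{m\cdot n})^{m\cdot n}.
\]
If $m = 0$ or $n = 0$ the left-hand side is $\congr \zero$, and the claim holds since $\zero \kembed Q$ for every $Q$; so assume $m, n \geq 1$. Since $\kembed$ is defined modulo $\congr$, it then suffices to prove $(\lground{L}{n})^{m\cdot n} \kembed (\lground{L}{m\cdot n})^{m\cdot n}$.

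Finally, from $n \leq m\cdot n$ and \cref{lm:less-lground-kembed} we obtain $\lground{L}{n} \kembed \lground{L}{m\cdot n}$, and applying \cref{lm:kembed-lground-proc} to this embedding with replication exponent $m\cdot n$ yields exactly $(\lground{L}{n})^{m\cdot n} \kembed (\lground{L}{m\cdot n})^{m\cdot n}$, which closes the argument. I do not expect a genuine obstacle here: the only care required is the bookkeeping in unfolding the recursive definition of $\lground{\hole}{n}$ on $L^\omega$, checking that replicating the grounding $m$ times produces the same multiset of parallel components as grounding once at index $m\cdot n$ — up to the index at which the inner copies are themselves grounded, which is precisely the gap that \cref{lm:less-lground-kembed,lm:kembed-lground-proc} bridge.
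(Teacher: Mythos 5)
Your proof is correct and follows essentially the same route as the paper's: unfold $\lground{L^\omega}{n}$ to $(\lground{L}{n})^n$, regroup the parallel composition to exponent $m\cdot n$, then combine \cref{lm:less-lground-kembed} with \cref{lm:kembed-lground-proc}. If anything you are slightly more careful than the paper, which only excludes $n=0$ even though the step $\lground{L}{n}\kembed\lground{L}{m\cdot n}$ also needs $m\geq 1$ (handled by your explicit $m=0$ case).
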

\begin{proof}
We assume that $n > 0$ as the claim trivially holds if not.
\[ (\lground{L^\omega}{n})^{m}
                            = ((\lground{L}{n})^{n})^{m}
                            = (\lground{L}{n})^{m * n} \]
\[ \lground{L^\omega}{m * n}
                        = (\lground{L}{m * n})^{m * n}                  \]
By \cref{lm:less-lground-kembed}, we know that $\lground{L}{m} \kembed \lground{L}{m * n}$ as $n > 0$.
The claim follows by  \cref{lm:kembed-lground-proc}.
\end{proof}

\begin{lemma}
 \label{lm:lground-inc-indices}
$
  \sem{L_1} \subseteq \sem{L_2}
    \iff
  \forall n \in \Nat\st
    \exists m \in \Nat\st
     { \lground{L_1}{n} \kembed \lground{L_2}{m}}
$.
\end{lemma}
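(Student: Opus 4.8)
The plan is to read off both directions of the equivalence directly from the grounding lemmas, \cref{lm:lground-in-sem} and \cref{lm:lground-kembed}, together with two structural facts already available: $\kembed$ is a quasi-order (so reflexive and transitive), and $\sem{L} = \dwd{\instn{L}}$ is downward closed by construction. The conceptual point is that the denotation of a limit is at the same time \emph{generated} by the downward closure of its instances and \emph{cofinal} in the chain of its groundings $\lground{L}{0} \kembed \lground{L}{1} \kembed \cdots$; this makes the $\lground{\cdot}{n}$ a faithful ``skeleton'' through which inclusion can be tested.

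For the left-to-right direction, I would assume $\sem{L_1} \subseteq \sem{L_2}$ and fix $n \in \Nat$. By \cref{lm:lground-in-sem} we have $\lground{L_1}{n} \in \instn{L_1}$, hence, using reflexivity of $\kembed$ and the hypothesis, $\lground{L_1}{n} \in \sem{L_1} \subseteq \sem{L_2}$. Applying \cref{lm:lground-kembed} to the limit $L_2$ and to the process $\lground{L_1}{n} \in \sem{L_2}$ produces an $m \in \Nat$ with $\lground{L_1}{n} \kembed \lground{L_2}{m}$, which is exactly the required witness.

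For the converse, I would take an arbitrary $P \in \sem{L_1}$. By \cref{lm:lground-kembed} applied this time to $L_1$ there is an $n$ with $P \kembed \lground{L_1}{n}$; by the assumed hypothesis there is an $m$ with $\lground{L_1}{n} \kembed \lground{L_2}{m}$; hence $P \kembed \lground{L_2}{m}$ by transitivity of $\kembed$. Since $\lground{L_2}{m} \in \instn{L_2} \subseteq \sem{L_2}$ by \cref{lm:lground-in-sem}, and $\sem{L_2} = \dwd{\instn{L_2}}$ is downward closed, we get $P \in \sem{L_2}$. As $P$ was arbitrary, $\sem{L_1} \subseteq \sem{L_2}$.

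I do not anticipate a real obstacle here: the statement is essentially a repackaging of the grounding lemmas. The only points needing a little attention are invoking \cref{lm:lground-kembed} on the correct limit in each direction ($L_2$ in the forward direction, $L_1$ in the backward one) and explicitly using transitivity of $\kembed$ for the backward inclusion.
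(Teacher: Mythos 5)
Your proof is correct and follows essentially the same route as the paper's: the forward direction uses \cref{lm:lground-in-sem} to place $\lground{L_1}{n}$ in $\sem{L_1}\subseteq\sem{L_2}$ and then \cref{lm:lground-kembed} to extract the witness $m$, while the backward direction chains $P \kembed \lground{L_1}{n} \kembed \lground{L_2}{m}$ and closes with downward closure of $\sem{L_2}$. No gaps; this matches the paper's argument.
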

\begin{proof}
 First, let $n \in \Nat$. We compute the expansion $\lground{L_1}{n} \in \sem{L_1}$ by \cref{lm:lground-in-sem}. By assumption, $\lground{L_1}{n} \in \sem{L_2}$ and by Lemma \ref{lm:lground-kembed} there exists an $m$ such that $\lground{L_1}{n} \kembed \lground{L_2}{m}$ as required.
    Second, let $P \in \sem{L_1}$. We need to show that $P \in \sem{L_2}$. By Lemma \ref{lm:lground-kembed}, there exists an expansion for some $n$ with $P \kembed \lground{L_1}{n}$. By assumption, there is an $m$ such that $\lground{L_1}{n} \kembed \lground{L_2}{m}$. By transitivity of $\kembed$, we know that $P \kembed \lground{L_2}{m}$ which is in $\sem{L_2}$. By downward closure, the claim follows.
\end{proof}

\subsection{Properties of Extension}

\begin{lemma}
\label{lm:ext-same-sem}
For every $L \in \Limits$ and $n \in \Nat$,
$\sem{L} = \sem{\lext{L}{n}}$.
\end{lemma}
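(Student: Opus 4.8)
The plan is to do a structural induction on the limit $L$, proving the two inclusions $\sem{L} \subseteq \sem{\lext{L}{n}}$ and $\sem{\lext{L}{n}} \subseteq \sem{L}$ at once. Inspecting the definition of $\lext{\cdot}{\cdot}$ in \cref{fig:grounding-extension}, extension only changes the term in the case $L = B^\omega$, where $\lext{B^\omega}{n} = (\lext{B}{n})^n \parallel B^\omega$; when $L$ is sequential or $\zero$ we have $\lext{L}{n} = L$ and there is nothing to prove, and the cases $L = L_1 \parallel L_2$ and $L = \new x.L'$ reduce to the induction hypotheses once one observes that $\sem{\cdot}$ is compositional modulo downward closure, i.e.\ $\sem{L_1 \parallel L_2} = \dwd{\set{P_1 \parallel P_2 \mid P_1 \in \sem{L_1}, P_2 \in \sem{L_2}}}$ and $\sem{\new x.L'} = \dwd{\set{\new x.P \mid P \in \sem{L'}}}$. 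Both facts hold because every element of $\sem{L_i}$ (resp. $\sem{L'}$) embeds into an \emph{instance} of $L_i$ (resp. $L'$), and $\kembed$ is monotone with respect to $\parallel$ and $\new x$ — the former a routine generalisation of \cref{cor:kembed-add-mess,lm:kembed-lground-proc} obtained from \cref{lm:kleq-add-kleq}, the latter immediate from the definition of $\kembed$.

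For the base case $L = B^\omega$, I would first prove $\sem{\lext{B^\omega}{n}} \subseteq \sem{B^\omega}$. Any instance of $(\lext{B}{n})^n \parallel B^\omega$ is a parallel composition $R_1 \parallel \dots \parallel R_n \parallel Q_1 \parallel \dots \parallel Q_m$ with $R_i \in \instn{\lext{B}{n}}$ and $Q_j \in \instn{B}$. Since $\instn{\lext{B}{n}} \subseteq \sem{\lext{B}{n}}$ and the induction hypothesis for $B$ gives $\sem{\lext{B}{n}} = \sem{B} = \dwd{\instn{B}}$, each $R_i$ embeds into some $\hat{R}_i \in \instn{B}$; by monotonicity of $\kembed$ w.r.t.\ $\parallel$ the whole instance embeds into $\hat{R}_1 \parallel \dots \parallel \hat{R}_n \parallel Q_1 \parallel \dots \parallel Q_m \in \instn{B^\omega}$, so it lies in $\sem{B^\omega}$ by downward closure.

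For the reverse inclusion $\sem{B^\omega} \subseteq \sem{\lext{B^\omega}{n}}$, I would fix an arbitrary $W \in \instn{\lext{B}{n}}$ — this set is nonempty, as it contains $\lground{\lext{B}{n}}{0}$ by \cref{lm:lground-in-sem}. Given any $P \in \instn{B^\omega}$, the process $W^n \parallel P$ is an instance of $(\lext{B}{n})^n \parallel B^\omega$, and $P \congr \zero \parallel P \kembed W^n \parallel P$ because $\zero$ embeds into every process (the empty knowledge is $\kleq$-below every $\Gamma$ by \eqref{axiom:mon}) and $\kembed$ is monotone w.r.t.\ $\parallel$. Hence $P \in \sem{\lext{B^\omega}{n}}$, and downward closure of $\sem{\lext{B^\omega}{n}}$ closes the case.

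The statement itself is shallow, so the main work is assembling the handful of auxiliary facts cleanly: that $\kembed$ is a precongruence for $\parallel$ and $\new x$, that $\sem{\cdot}$ is compositional modulo downward closure, and that $\instn{L} \neq \emptyset$ for every $L$. None of these needs more than the definitions plus the knowledge lemmas (\cref{lm:kleq-add-same,lm:kleq-add-kleq}) and \cref{lm:lground-in-sem}; the only point requiring a little care is keeping the distinction between $\instn{L}$ and $\sem{L} = \dwd{\instn{L}}$ straight when the induction hypothesis is invoked in the $B^\omega$ case.
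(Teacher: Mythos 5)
Your proof is correct, but it follows a genuinely different route from the paper's. The paper dispatches the inclusion $\sem{L} \subseteq \sem{\lext{L}{n}}$ as immediate from the definition and proves the converse by passing through the grounding characterisation of semantic inclusion (\cref{lm:lground-inc-indices}): for each $n'$ it suffices to exhibit $m' = n \cdot n'$ with $\lground{\lext{L}{n}}{n'} \kembed \lground{L}{m'}$, i.e.\ the $n$ unfolded copies are re-absorbed once the remaining $\omega$ is itself ground at a large enough multiplicity, and the verification of that single embedding is left as a routine structural check. You instead induct on the structure of $L$ directly at the level of denotations, using compositionality of $\sem{\hole}$ modulo downward closure and the precongruence of $\kembed$ with respect to $\parallel$ and $\new x$; in the $B^\omega$ case the induction hypothesis $\sem{\lext{B}{n}} = \sem{B}$ lets each unfolded copy be replaced by an instance of $B$ and folded back under the $\omega$, while the padding argument with a fixed $W \in \instn{\lext{B}{n}}$ makes explicit the forward direction that the paper calls trivial. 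Both arguments are sound and well-founded (your ``base case'' $B^\omega$ is really an inductive case invoking the hypothesis on the strict subterm $B$, but the induction is on term size, so this is fine). The paper's route reuses machinery it has already built (\cref{lm:lground-in-sem,lm:less-lground-kembed,lm:lground-kembed}); yours is more self-contained but obliges you to state and prove the compositionality and precongruence facts, which the paper never isolates as lemmas --- they are, as you note, routine consequences of \cref{lm:kleq-add-same,lm:kleq-add-kleq} and the definition of $\kembed$, and the only place you still need the grounding apparatus is the nonemptiness of $\instn{\lext{B}{n}}$ via \cref{lm:lground-in-sem}.
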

\begin{proof}
We prove the fact by two inclusions.
The inclusion $\sem{L} \subseteq \sem{\lext{L}{n}}$ follows trivially by definition.
For the reverse inclusion, we apply \cref{lm:lground-inc-indices}.
Given an arbitrary $n'$,
we need to show that there is $m'$ such that
$\lground{\lext{L}{n}}{n'} \kembed \lground{L}{m'}$.
It is straightforward to verify that $m' = n \cdot {n'}$ suffices.
\end{proof}

\begin{lemma}
\label{lm:ext-n-ground-0-eq-ground-n}
  For every $L \in \Limits$ and $n \in \Nat$,
  $\lground{L}{n} = \lground{\lext{L}{n}}{0}$.
\end{lemma}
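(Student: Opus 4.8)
The plan is to prove the identity by structural induction on the limit~$L$, following exactly the case split that defines $\lground{\hole}{n}$ and $\lext{\hole}{n}$ in \cref{fig:grounding-extension}. Throughout, equality is understood up to the structural laws of parallel composition (associativity, commutativity, and $\zero$ as neutral element), which is standard for this calculus and is needed only to discard the spurious occurrences of $\zero$ produced by the convention $P^{0} \is \zero$ and by groundings at index~$0$; everything else is literal.

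First I would dispatch the base case: if $L$ is sequential or $\zero$, then all of $\lground{\hole}{n}$, $\lground{\hole}{0}$ and $\lext{\hole}{n}$ act as the identity on~$L$, so $\lground{\lext{L}{n}}{0} = \lground{L}{0} = L = \lground{L}{n}$. Next, the two easy inductive cases. If $L = L_1 \parallel L_2$, both operations distribute over~$\parallel$, hence $\lground{\lext{L}{n}}{0} = \lground{\lext{L_1}{n}}{0} \parallel \lground{\lext{L_2}{n}}{0}$, which by the induction hypothesis applied to $L_1$ and $L_2$ equals $\lground{L_1}{n} \parallel \lground{L_2}{n} = \lground{L}{n}$. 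The case $L = \new x.L'$ is analogous, since all three operations commute with name restriction.

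The only genuinely interesting case is $L = B^\omega$ (note that $B$ is itself a limit in the sense of \cref{def:limits}, and it is structurally smaller than $B^\omega$, so the induction hypothesis applies to it). Here $\lext{L}{n} = (\lext{B}{n})^{n} \parallel B^\omega$, so grounding at~$0$ distributes over the $n$-fold parallel composition and replaces $B^\omega$ by $(\lground{B}{0})^{0} = \zero$, yielding $\bigl(\lground{\lext{B}{n}}{0}\bigr)^{n} \parallel \zero$. By the induction hypothesis, $\lground{\lext{B}{n}}{0} = \lground{B}{n}$, so this is $(\lground{B}{n})^{n} \parallel \zero = (\lground{B}{n})^{n} = \lground{B^\omega}{n} = \lground{L}{n}$, the penultimate step dropping the neutral~$\zero$. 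I do not expect a real obstacle here: the proof is a routine structural induction, and the only point requiring a little care is precisely this bookkeeping of the $\zero$'s introduced by $\hole^{0}$ and by the encoding of iterated parallel composition, which disappears once limits are treated up to the monoid laws (exactly as $\congr$ and standard forms are treated elsewhere in the paper).
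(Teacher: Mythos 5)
Your proof is correct and is exactly the argument the paper intends: the paper's own proof is just the one-line remark ``straightforward induction on the structure of $L$'', and your case analysis (with the only nontrivial step being $L = B^\omega$, where the residual $B^\omega$ grounds to $\zero$ and the induction hypothesis applies to $\lext{B}{n}$) is the obvious way to fill it in. Your explicit remark that the trailing $\zero$'s must be discarded up to the monoid laws of $\parallel$ is a reasonable reading of the paper's ``$=$'' and not a gap.
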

\begin{proof}
Straightforward induction on the structure of~$L$.
\end{proof}

\section{Soundness and Completeness of Limits}
\label{app:wqo-limits}

\subsection{Depth-Bounded Processes are Well-Quasi-Ordered}

\subparagraph{Multisets}
Given a set $X$, the multisets over $X$,
are all the functions $\mu\from X \to \Nat$ such that the set
$
  \items(\mu) \is \set{(x, i) \in X \times \Nat | 0 < i \leq \mu(x)}
$
is finite.
We denote by $\multiset(X)$ the set of all multisets over $X$.
We write $\mset{x_1,\dots,x_n}$ for the multiset with elements $x_1,\dots,x_n$
and use $\emptyset$ for the empty multiset.
The union two multisets $\mu_1 \munion \mu_2$ is the pointwise lifting of addition.
Assume $(X, \qoleq)$ is a quasi order;
its \emph{multiset extension} is the quasi order $\membed$ over $\multiset(X)$
defined as follows;
for two multisets $\mu_1,\mu_2 \in \multiset(X)$,
  $\mu_1 \membed \mu_2$
    holds just if
  there is an injective function $f \from \items(\mu_1) \to \items(\mu_2)$
  such that for each $(x,i)\in \items(\mu_1)$,
  if $f(x,i) = (y,j)$ then $x \qoleq y$.
When $X$ is a set ordered by equality,
$\membed$ coincides with multiset inclusion.
If $(X, \qoleq)$ is a wqo
then $(\multiset(X), \membed)$ is a wqo~\cite{FinkelG09}.

\subparagraph{Forests}
We define a domain of forests $\Forests_{s,k}^X$
with sequential processes in $\basic{X}$ as leaves:
\begin{align*}
  \basic{X} &\is
  \mathrlap{
    \set{ \p Q[\vec{M}] | Q \in \ProcN,
                          \arity(\p{Q}) = \card{\vec{M}},
                          \vec{M} \subseteq \Msg_s^X }
    \union \Msg_s^X
  }
  \\
  \Forests_{s,0}^X &\is \multiset(\basic[s]{X})
  &\quad \Forests_{s,k+1}^X &\is
    \multiset\left(
      \basic[s]{X} \dunion
      \Forests[s,k]^{X \union \set{x_{k+1}}}
    \right)
    \qquad
    (\text{assuming } x_{k+1} \not\in X)
\end{align*}
The set $\basic{X}$ contains all sequential processes of
$\Depth[X]{s,k}$.
In the base case, forests of height~0,
$\Forests_{s,0}^X$ is simply a multiset of sequential processes.
Forests of height $k+1$ are represented in $\Forests_{s,k+1}^X$
by a multiset of sequential processes and of subforests of height $k$.

For any given $s \in \Nat$ and finite $X \subseteq \Names$,
$\basic[s]{X}$ is a finite set and thus forms a wqo with equality.
With $\fembed$ (forest embedding)
we denote the qo over $\Forests_{s,k}^X$ defined as follows:
on $\Forests_{s,0}^X$ it coincides with
  the multiset extension of equality on $\basic{X}$;
on $\Forests_{s,k+1}^X$ it coincides with
  the multiset extension of
  the disjoint union of equality on $\basic[s]{X}$, and the
  forest embedding on $\Forests_{s,k}^{X \union \set{x_{k+1}}}$.
By iterating the result on multisets, we obtain that
for any $s,k \in \Nat$ and finite $X\subseteq\Names$,
$(\Forests_{s,k}^X, \fembed)$ is a~wqo.

The function $\Enc[F]{\hole}_k$,
transforms a process $P$ with $ \nestr(P) \leq k $ into its forest encoding,
that is a forest that faithfully represents the structure of $P$.

\begin{definition}[Forest encoding]
\label{def:forest-enc}
  We define the function $\Enc[F]{\hole}_k$ which transforms a process~$P$
  with ${ \nestr(P) \leq k }$ into a forest:
\[
    \Enc[F]{P}_k \is
      \begin{cases}
        \emptyset           \CASE P = \zero               \\
        \mset{P}            \CASE P \text{ is sequential} \\
        \mset{ \Enc[F]{Q\subst{x -> x_k}}_{k-1} }
                            \CASE P = \new x.Q            \\
        \Enc[F]{Q_1}_k \munion \Enc[F]{Q_2}_k
                            \CASE P = Q_1 \parallel Q_2
      \end{cases}
  \]
  assuming
  $ \set{x_1,\dots, x_k} \inters
      (\boundnames(P) \union \freenames(P)) = \emptyset$;
  when this assumption is not met because of bound names,
  we implicitly \pre\alpha-rename the term before applying the definition.
  When $\nestr(P) > k$, $\Enc[F]{P}_k$ is undefined.
\end{definition}

\Cref{fig:forest-encoding} illustrates forest encodings.
The forests in $\Forests_{s,k}^Y$ have sequential processes as leaves
(with messages of size smaller than $s$),
and free names $Y \dunion \set{x_1,\dots,x_k}$.
\Cref{lm:name-reuse} allows us to use only~$k$ distinct bound names:
this makes the forests finitely labelled.

\begin{figure}[tb]
  \adjustfigure\footnotesize \begin{tikzpicture}[
  AST,
  forests distance=3,
  level 1/.style={sibling distance=1cm},
  level 2/.style={sibling distance=1cm},
  baseline=(n.south)
]

\path

  node (f1) {$x_2$}
    child { node {$x_1$}
      child { node (n) {\scalebox{.7}{$\enc{x_1}{k}$}}}
      child { node {\scalebox{.7}{$\p A[x_2,x_1]$}}}
    }
    child { node {\scalebox{.7}{$\p B[x_2,m]$}}}
    child { node {$x_1$}
      child { node {\scalebox{.7}{$\p C[x_2,x_1]$}}}
    }

  ++(2,0)
  node (f2) {$x_2$}
    child { node {\scalebox{.7}{$\enc{x_2}{k}$}} }

  ;
  \node[above=.5 of n,xshift=-5pt] {$\Enc[F]{P}_2$};

\end{tikzpicture}   \begin{minipage}[b]{.5\textwidth}
    \begin{align*}
      P &= P_1 \parallel P_2 &
      P_1 &= \new a.(
          P_3 \parallel
          \p B[a,m]
          \parallel
          P_4
        )\\
      P_2 &= \new b.\out{\enc{b}{k}}&
      P_3 &= \new c.(
            \out{\enc{c}{k}} \parallel
            \p A[a,c]
          )\\
      P_4 &=\new d.\p C[a,d] &
      \mathrlap{\Enc[F]{P}_2 \in \Forests_{2,2}^{\set{m,k}}} &
    \end{align*}
  \end{minipage}
\caption{Forest encoding of $P$.
    Restrictions at nesting level $k$ are renamed canonically to $x_{2-k}$.
  }
  \label{fig:forest-encoding}
\end{figure}

\begin{lemma}
  \label{lm:fenc-embeds}
  If
    $Q \in \Size[Y]{s}$ and
    $\nestr(Q) \leq k$,
  then
    $\Enc[F]{Q}_k \in \Forests_{s,k}^Y$.
\end{lemma}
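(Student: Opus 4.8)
The plan is to prove the claim by induction on the number of syntactic nodes of $Q$. I use this measure rather than the plain subterm ordering because $\Enc[F]{\hole}_k$ involves an implicit $\alpha$-renaming of $Q$ and, in the restriction case, the substitution $Q'\subst{x -> x_k}$; neither is a literal subterm of $Q$, but both have strictly fewer nodes than $\new x.Q'$ and the node count is invariant under $\alpha$-renaming and under renaming one name to another. Throughout I keep the standing convention that $Y$ is disjoint from the canonical names $\set{x_1, x_2, \dots}$, and I use the two unfolded forms $\Forests_{s,0}^Y = \multiset(\basic[s]{Y})$ and $\Forests_{s,k}^Y = \multiset\bigl(\basic[s]{Y} \dunion \Forests_{s,k-1}^{Y \union \set{x_k}}\bigr)$ for $k \geq 1$. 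In each case I check that the hypotheses of \cref{def:forest-enc} are met (so the encoding is defined, using $\nestr(Q) \leq k$) and that the resulting forest lands in $\Forests_{s,k}^Y$.

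The base and composite cases are immediate. If $Q = \zero$, then $\Enc[F]{Q}_k = \emptyset$, which lies in $\multiset(-)$ trivially. If $Q$ is sequential, i.e.\ $Q = \out{M}$ or $Q = \p{R}[\vec{M}]$, then $Q \in \Size[Y]{s}$ gives $\names$ contained in $Y$ and $\size \leq s$ for every message occurring in $Q$, hence $Q \in \basic[s]{Y}$; so $\Enc[F]{Q}_k = \mset{Q}$ is a singleton multiset over $\basic[s]{Y}$, which for every $k \geq 0$ sits in the left summand of the disjoint union and hence in $\Forests_{s,k}^Y$. If $Q = Q_1 \parallel Q_2$, then $Q_i \in \Size[Y]{s}$ and $\nestr(Q_i) \leq \nestr(Q) \leq k$, so the induction hypothesis gives $\Enc[F]{Q_i}_k \in \Forests_{s,k}^Y$ for $i = 1,2$; since $\Forests_{s,k}^Y$ is a set of multisets it is closed under $\munion$, so $\Enc[F]{Q}_k = \Enc[F]{Q_1}_k \munion \Enc[F]{Q_2}_k \in \Forests_{s,k}^Y$.

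The only case with content is $Q = \new x.Q'$. Here $\nestr(Q) = 1 + \nestr(Q') \leq k$ forces $k \geq 1$ and $\nestr(Q') \leq k-1$; after the implicit $\alpha$-renaming of \cref{def:forest-enc} we may assume $\Enc[F]{Q}_k = \mset{\Enc[F]{Q'\subst{x -> x_k}}_{k-1}}$. I would then observe that $Q'\subst{x -> x_k}$ has the same message sizes as $Q'$ (renaming a name to a name preserves $\size$) and $\freenames(Q'\subst{x -> x_k}) \subseteq (\freenames(Q')\setminus\set{x}) \union \set{x_k} \subseteq Y \union \set{x_k}$, so $Q'\subst{x -> x_k} \in \Size[Y \union \set{x_k}]{s}$ with $\nestr \leq k-1$. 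Applying the induction hypothesis (legitimate, since $Q'\subst{x -> x_k}$ has strictly fewer nodes than $Q$, with the instance $k-1$ and alphabet $Y \union \set{x_k}$) yields $\Enc[F]{Q'\subst{x -> x_k}}_{k-1} \in \Forests_{s,k-1}^{Y \union \set{x_k}}$, whence the singleton multiset of this element lies in $\multiset\bigl(\basic[s]{Y} \dunion \Forests_{s,k-1}^{Y \union \set{x_k}}\bigr) = \Forests_{s,k}^Y$ via the right summand. I expect no genuine obstacle: the lemma is a routine structural induction, and the only points that demand care are the bookkeeping of the name alphabet $Y \union \set{x_k}$ across the recursion and the choice of a well-founded measure (node count) that is stable under the renamings hidden inside $\Enc[F]{\hole}_k$.
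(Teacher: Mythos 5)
Your proof is correct and is exactly the argument the paper has in mind: the paper's own proof is the one-liner ``by easy induction on the structure of $Q$'', and your case analysis (with the node-count measure to absorb the $\alpha$-renaming and the substitution $x \mapsto x_k$, and the alphabet bookkeeping $Y \cup \set{x_k}$ in the restriction case) is just that induction written out in full.
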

\begin{proof}
  By easy induction on the structure of $Q$.
\end{proof}

\begin{lemma}
  \label{lm:fenc-pres-order}
  Assume
    $Q_1,Q_2 \in \Size[Y]{s}$ with
    $\nestr(Q_1) \leq k$ and $\nestr(Q_2) \leq k$.
  Then
    $\Enc[F]{Q_1}_k \fembed \Enc[F]{Q_2}_k$
  implies
    $Q_1 \kembed Q_2$.
\end{lemma}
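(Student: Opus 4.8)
The plan is to prove the statement by induction on $k$, peeling off one level of restriction nesting at a time and translating the forest embedding into a knowledge embedding. Before starting the induction I would record a short list of closure properties of $\kembed$, all readily obtained from the standard-form characterisation derived in the proof of \cref{prop:kn-embed-vs-kn-congr} (namely $P_1 \kembed P_2$ iff, writing $\stdf(P_i) \aeq \new\vec{x}_i.(\out{\Gamma_i}\parallel Q_i)$, we have $\vec{x}_2 = \vec{x}_1\vec{y}$ for some $\vec{y}$, $Q_2 = Q_1 \parallel Q'$ for some parallel composition of process calls $Q'$, and $\Gamma_1 \kleq \Gamma_2$): (i) $\congr\;\subseteq\;\kembed$ and $\kembed$ is reflexive and transitive; (ii) $P \kembed P \parallel P'$ for all $P,P'$; (iii) $\kembed$ is a congruence for $\parallel$, i.e.\ $A_1\kembed B_1$ and $A_2\kembed B_2$ imply $A_1\parallel A_2 \kembed B_1\parallel B_2$ --- here the knowledge component uses \cref{lm:kleq-add-kleq} after $\alpha$-renaming the bound names of the four processes pairwise apart; (iv) $A \kembed B$ implies $\new z.A \kembed \new z.B$; and (v) $\Gamma_1 \subseteq \Gamma_2$ implies $\out{\Gamma_1} \kembed \out{\Gamma_2}$, since $\Gamma_1\subseteq\Gamma_2$ gives $\Gamma_1\kleq\Gamma_2$ by \cref{lm:locality}.

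In the induction I would read off from $Q_1$ and $Q_2$ their top-level parallel structure: up to associativity, commutativity and unit of $\parallel$, write $Q_\ell \congr \out{\vec{M}_\ell} \parallel Q_\ell^{\mathrm{pc}} \parallel \Parallel_{i} \bigl(\new z_i^\ell.R_i^\ell\bigr)$ (for $\ell\in\set{1,2}$), where $\out{\vec{M}_\ell}$ gathers the top-level outputs, $Q_\ell^{\mathrm{pc}}$ the top-level process calls, and the $\new z_i^\ell.R_i^\ell$ the top-level restrictions. By \cref{def:forest-enc}, $\Enc[F]{Q_\ell}_k$ is precisely the multiset whose basic items (in $\basic[s]{Y}$) are these outputs and process calls and whose remaining items are the subforests $\Enc[F]{R_i^\ell\subst{z_i^\ell -> x_k}}_{k-1}$. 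Since $\fembed$ on $\Forests_{s,k}^Y$ is the multiset extension of the \emph{disjoint} union of equality on $\basic[s]{Y}$ with $\fembed$ on $\Forests_{s,k-1}^{Y\cup\set{x_k}}$, any witnessing injection must send basic items to equal basic items and subforest items to $\fembed$-larger subforest items. Hence each top-level output of $Q_1$ occurs among those of $Q_2$, each top-level process call of $Q_1$ occurs (with multiplicity) among those of $Q_2$, and there is an injection $\sigma$ from the top-level restrictions of $Q_1$ to those of $Q_2$ with $\Enc[F]{R_i^1\subst{z_i^1 -> x_k}}_{k-1} \fembed \Enc[F]{R_{\sigma(i)}^2\subst{z_{\sigma(i)}^2 -> x_k}}_{k-1}$. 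Applying the induction hypothesis with parameters $k-1$ and $Y\cup\set{x_k}$ (the bodies have free names in $Y\cup\set{x_k}$, size $\leq s$ and nesting $\leq k-1$) gives $R_i^1\subst{z_i^1 -> x_k}\kembed R_{\sigma(i)}^2\subst{z_{\sigma(i)}^2 -> x_k}$, whence $\new z_i^1.R_i^1 \kembed \new z_{\sigma(i)}^2.R_{\sigma(i)}^2$ by (iv) and $\alpha$-equivalence. The base case $k=0$ has no restrictions and only the output / process-call comparison is needed.

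To finish, I would glue the componentwise embeddings. Property (iii), applied to $\out{\vec{M}_1}\kembed\out{\vec{M}_2}$ (from (v), noting that in standard form the output multisets collapse to sets so multiset inclusion suffices), to $Q_1^{\mathrm{pc}} \kembed Q_1^{\mathrm{pc}}$, and to the embeddings $\new z_i^1.R_i^1 \kembed \new z_{\sigma(i)}^2.R_{\sigma(i)}^2$, yields $Q_1 \kembed \out{\vec{M}_2} \parallel Q_1^{\mathrm{pc}} \parallel \Parallel_{i} \bigl(\new z_{\sigma(i)}^2.R_{\sigma(i)}^2\bigr)$. One further application of (ii) adds back $Q_2$'s extra process calls and the top-level restrictions of $Q_2$ outside the image of $\sigma$; by injectivity of $\sigma$ the result is $\congr Q_2$, and (i) then gives $Q_1 \kembed Q_2$.

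I expect the main obstacle to be the bookkeeping at the seam between the two orders: making precise that the multiset injection witnessing $\Enc[F]{Q_1}_k \fembed \Enc[F]{Q_2}_k$ decomposes into a part on basic leaves (forced to be equalities) and a part on subforests (forced to be instances of $\fembed$), which relies on the disjoint-union structure of $\basic[s]{Y} \dunion \Forests_{s,k-1}^{Y\cup\set{x_k}}$, and correctly threading the encoding's canonical renaming of the $i$-th level of restrictions to $x_k$ so that the induction hypothesis applies verbatim. The closure properties of $\kembed$ are individually routine, but it pays to isolate them first rather than rediscover them mid-induction.
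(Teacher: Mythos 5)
Your proof is correct, and its skeleton --- induction on $k$, splitting the multiset injection witnessing $\Enc[F]{Q_1}_k \fembed \Enc[F]{Q_2}_k$ into a part on basic leaves (forced to be equalities) and a part on subforests (discharged by the induction hypothesis via the disjoint-union structure) --- is the same as the paper's. The one genuine difference is in how the inductive results are reassembled. The paper strengthens the induction hypothesis to a purely syntactic claim: $\stdf(Q_1) = \new \vec{y}. R$ and $\stdf(Q_2) = \new \vec{y}.\new \vec{z}. (R \parallel R')$, i.e.\ the standard form of $Q_1$ literally occurs inside that of $Q_2$ up to renaming. With this, the final gluing is a single structural-congruence computation (scope extrusion plus \pre\alpha-renaming), and the knowledge comparison $\Gamma_1 \kleq \Gamma_2$ falls out only at the very end, from plain set inclusion of the active messages. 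You instead keep $\kembed$ itself as the induction hypothesis and compensate by first establishing that $\kembed$ is a precongruence for $\parallel$ and $\new$; the $\parallel$ case is where the real content sits, since it needs \cref{lm:kleq-add-kleq} to combine the two knowledge orderings after renaming bound names apart. Both routes are sound: the paper's strengthening buys a gluing step that needs no properties of $\kleq$ at all, while your version buys reusable compositionality lemmas for $\kembed$ at the cost of a weaker inductive invariant and exactly the bookkeeping at the seam between $\fembed$ and $\kembed$ that you correctly flag as the delicate point.
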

\begin{proof}
  First we strengthen the statement:
we can prove that if
  $\Enc[F]{Q_1}_k \fembed \Enc[F]{Q_2}_k$
then $\stdf(Q_1) = \new \vec{y}. R$ and
     $\stdf(Q_2) = \new \vec{y}.\new \vec{z}. (R \parallel R')$,
which clearly implies $Q_1 \kembed Q_2$.

We proceed by induction on $k$.
The base case is a special case of the induction step, so let us consider the latter first.
Assume
  $\nestr(Q_1), \nestr(Q_2) \leq k+1$ and
  $ \phi_1 = \Enc[F]{Q_1}_{k+1} \fembed \Enc[F]{Q_2}_{k+1} = \phi_2 $,
and let
  $ B = \items(\phi_1) \inters (\basic[Y]{s} \times \Nat) $ and
  $ C = \items(\phi_1) \setminus B $
  (note that $C\subseteq (\Forests_{s,k}^{Y \union \set{x_{k+1}}} \times \Nat)$).
Then, by definition, there is an injective function
  $ f \from \items(\phi_1) \to \items(\phi_2) $
such that for each $(P, i) \in B$,
  $ f(P,i) = (P,j) $ for some $j$;
moreover, for each $(\phi, i) \in C$,
  if $ f(\phi, i) = (\phi',j) $ then $ \phi \fembed \phi' $.
By definition of $\Enc[F]{\hole}$ we know that
for each ${(\phi, i) \in C}$,
there exist a subterm of $Q_1$ \pre\alpha-equivalent to $\new x_{k+1}.P_{\phi}$
for some $P_{\phi}$ with ${\nestr(P_{\phi}) \leq \nestr(Q_1)-1 \leq k}$
and $\phi = \Enc[F]{P_{\phi}}_k $.
Similarly for $Q_2$, we have
that if $f(\phi,i) = (\phi',j)$ then there is
a subterm of $Q_2$ \pre\alpha-equivalent to $\new x_{k+1}.P_{f(\phi,i)}$
for some $P_{f(\phi,i)}$ with
$\Enc[F]{P_{f(\phi,i)}}_k = \phi'$ and
$\nestr(P_{f(\phi,i)}) \leq \nestr(Q_2)-1 \leq k$.

We can therefore apply the induction hypothesis and get
$ \phi \fembed \phi' $ implies that
  $\stdf(P_{\phi}) =
    \new \vec{y}_{\phi}. R_{\phi}$ and
  $\stdf(P_{f(\phi,i)}) =
    \new \vec{y}_{\phi}.\new\vec{z}_{\phi}.(R_{\phi} \parallel R_{f(\phi,i)})$.
As a consequence we have
\begin{align*}Q_1 &\kcongr \bigl(
    \Parallel_{(P,i) \in B} P \parallel
    \Parallel_{(\phi,i) \in C}
      (\new x_{k+1}.\new \vec{y}_{\phi}.R_{\phi})
  \bigr)
  \\
  Q_2 &\kcongr \bigl(
    \Parallel_{(P,i) \in B} P \parallel
    \Parallel_{(\phi,i) \in C}
      (\new x_{k+1}.\new \vec{y}_{\phi}.\new \vec{z}_{\phi}.
        (R_{\phi} \parallel R_{f(\phi,i)}))
    \parallel R'
  \bigr)
\end{align*}
which clearly entails the claim,
  by application of \pre\alpha-renaming and scope extrusion
  to get~the~two standard forms.
In the base case, $C = \emptyset$ from which the claim follows straightforwardly. \end{proof}

\begin{theorem}
  \label{th:bounded-proc-wqo}
  For every
    finite $Y \subseteq \Names$ and
    $s,k\in \Nat$,
  $(\Depth[Y]{s,k}, \kembed)$
  is a wqo.
\end{theorem}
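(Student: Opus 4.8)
The plan is to derive the statement from the fact, already established above, that $(\Forests_{s,k}^Y,\fembed)$ is a wqo, by transporting good sequences along the forest encoding $\Enc[F]{\hole}_k$ of \cref{def:forest-enc}. Since $\kembed$ is reflexive and transitive straight from its definition, $(\Depth[Y]{s,k},\kembed)$ is a quasi-order, and it remains to show that every infinite sequence in it is good.

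First I would fix an arbitrary infinite sequence $P_0,P_1,\dots$ of processes in $\Depth[Y]{s,k}$. By the definition of $\Depth[Y]{s,k}$, each $P_i$ admits a witness $Q_i\in\Size[Y]{s}$ with $Q_i\kcongr P_i$ and $\nestr(Q_i)\leq k$ (the witness can be taken with free names in $Y$, since knowledge congruence does not enlarge the set of free names). Next I would encode each witness: by \cref{lm:fenc-embeds}, $\Enc[F]{Q_i}_k$ is a well-defined element of $\Forests_{s,k}^Y$. The point where finiteness of the forest labelling matters is exactly here: \cref{def:forest-enc} renames restricted names canonically to $x_1,\dots,x_k$, so by \cref{lm:name-reuse} only the finitely many labels of $\basic[s]{Y}$ occur, which is what makes $(\Forests_{s,k}^Y,\fembed)$ a wqo in the first place. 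We thus obtain an infinite sequence $\bigl(\Enc[F]{Q_i}_k\bigr)_{i\in\Nat}$ over $\Forests_{s,k}^Y$.

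Then I would invoke the wqo property of the forest domain: there exist indices $i<j$ with $\Enc[F]{Q_i}_k\fembed\Enc[F]{Q_j}_k$, and \cref{lm:fenc-pres-order} immediately upgrades this to $Q_i\kembed Q_j$. It remains to pull this back to $P_i$ and $P_j$: from $P_i\kcongr Q_i$ and $P_j\kcongr Q_j$, \cref{prop:kn-embed-vs-kn-congr} gives $P_i\kembed Q_i$ and $Q_j\kembed P_j$, so transitivity of $\kembed$ yields $P_i\kembed P_j$. Hence the sequence is good; as it was arbitrary, $(\Depth[Y]{s,k},\kembed)$ is a wqo.

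I do not expect a genuine obstacle here: all the substantive work — finiteness of the leaf labels (the base case of the wqo iteration), closure of wqos under the multiset extension, and the order-reflection property of the encoding — has been packaged into the lemmas above, so this proof is essentially their composition. The only thing to be slightly careful about is choosing the representatives $Q_i$ uniformly (same size bound $s$, same nesting bound $k$, free names in the same finite set $Y$), which is precisely what the definition of $\Depth[Y]{s,k}$ together with \cref{prop:kn-embed-vs-kn-congr} supplies.
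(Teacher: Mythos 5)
Your proof is correct and follows essentially the same route as the paper's: pass from each $P_i$ to a bounded-nesting witness $Q_i$, encode via $\Enc[F]{\hole}_k$ into the wqo $(\Forests_{s,k}^Y,\fembed)$, extract a good pair, and pull it back with \cref{lm:fenc-pres-order}. The only difference is that you spell out the final transfer from $Q_i\kembed Q_j$ to $P_i\kembed P_j$ via \cref{prop:kn-embed-vs-kn-congr}, which the paper leaves implicit.
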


\begin{proof}
  Let $P_1,P_2,\dots$ be an infinite sequence of processes
  in $\Depth[Y]{s,k}$.
  By definition,
  for all $i\in\Nat$ there exists a process $Q_i \in \Size{s}$
  such that $Q_i \kcongr P_i$ and $\nestr(Q_i) \leq k$,
  which implies by \cref{lm:fenc-embeds} that
  ${\Enc[F]{Q_i}_k \in \Forests_{s,k}^Y}$.
  The sequence $\Enc[F]{Q_1}_k, \Enc[F]{Q_2}_k,\dots$
  must then be a good sequence since $(\Forests_{s,k}^Y, \fembed)$ is a wqo.
  Therefore, there are $i,j\in\Nat$ with $i<j$, such that
  $\Enc[F]{Q_i}_k \fembed \Enc[F]{Q_j}_k$,
  which by \cref{lm:fenc-pres-order} implies $Q_i \kembed Q_j$.
  We therefore have $P_i \kembed P_j$ which proves
  that the sequence $P_1,P_2,\dots$ is good.
\end{proof}

\subsection{Soundness and Completeness (Theorem~\ref{th:limit-sound-complete})}

We prove the two directions of \cref{th:limit-sound-complete}
independently.

\begin{theorem}
  \label{th:limit-is-idl}
  For every $L \in \Limits_{s,k}$,
  the set $\sem{L}$ is an ideal of $(\Depth{s,k}, \kembed)$.
\end{theorem}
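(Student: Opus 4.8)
The goal is to show that for every limit $L \in \Limits_{s,k}$, the set $\sem{L}$ is an ideal of $(\Depth{s,k},\kembed)$, i.e.\ it is nonempty, downward-closed, and directed. Downward-closure is immediate since $\sem{L} = \dwd{\instn{L}}$ by definition, and nonemptiness follows from $\lground{L}{0} \in \instn{L}$ (\cref{lm:lground-in-sem}). Also, each $P\in\sem{L}$ has $\depth(P)\leq\nestr(L)\leq k$ and messages of size at most $s$, so $\sem{L}\subseteq\Depth{s,k}$. The real content is directedness: given $P_1, P_2 \in \sem{L}$ we must produce $P_3 \in \sem{L}$ with $P_1 \kembed P_3$ and $P_2 \kembed P_3$.

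The plan is to use the grounding operation as a canonical cofinal chain inside $\sem{L}$. By \cref{lm:lground-kembed}, for each $P_i$ there is an $n_i$ with $P_i \kembed \lground{L}{n_i}$; set $n = \max(n_1,n_2)$. By \cref{lm:less-lground-kembed}, $\lground{L}{n_i} \kembed \lground{L}{n}$, so by transitivity $P_1 \kembed \lground{L}{n}$ and $P_2 \kembed \lground{L}{n}$. Finally $\lground{L}{n} \in \instn{L} \subseteq \sem{L}$ by \cref{lm:lground-in-sem}. Taking $P_3 = \lground{L}{n}$ witnesses directedness. So the whole argument is essentially a three-line assembly of the grounding lemmas already proved in the appendix; the groundings $\{\lground{L}{n}\}_{n\in\Nat}$ form a $\kembed$-directed cofinal subset of $\sem{L}$, which is exactly what is needed.

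The one step that requires a little care is confirming $\sem{L}\subseteq\Depth{s,k}$, i.e.\ that instances genuinely stay within the size and depth bounds: size is preserved because $L\in\Limits_{s,k}$ contains only messages of size $\leq s$ and instantiation introduces no new messages, and depth is controlled via the remark in \cref{def:limits-sem} that $\depth(P)\leq\nestr(L)$ for $P\in\sem{L}$, together with the fact that a ground instance uses boundedly many names. The downward closure step also needs that $\Depth{s,k}$ is itself downward-closed under $\kembed$ so that $\dwd{\instn{L}}$ computed in $\Proc$ agrees with the one in $\Depth{s,k}$ — but this is clear since $P\kembed Q$ with $Q\in\Depth{s,k}$ forces $P$ to have no larger messages and (since $\kembed$ only drops parallel components and restrictions) no larger depth. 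I do not anticipate a genuine obstacle here; the lemma is the routine half of \cref{th:limit-sound-complete}, with the substantive direction (every ideal arises as some $\sem{L}$) handled separately.
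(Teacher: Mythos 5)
Your proof is correct and takes essentially the same route as the paper's: downward closure holds by definition, containment in $\Depth{s,k}$ follows from $\nestr(L)\leq k$ together with the size bound on the messages of $L$, and directedness is witnessed by $\lground{L}{\max(n_1,n_2)}$ exactly as in the paper, via \cref{lm:lground-kembed,lm:less-lground-kembed,lm:lground-in-sem}. The only place you diverge is in explicitly worrying that $\dwd{\instn{L}}$ stays inside $\Depth{s,k}$ --- a point the paper silently assumes --- though your justification (``$\kembed$ only drops parallel components'') is slightly off for the knowledge component, where the comparison is via $\kleq$ rather than syntactic inclusion; this does not change the substance of the argument.
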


\begin{proof}
We want to prove that
for every $L \in \Limits_{s,k}$, we have
$\sem{L} \subseteq \Depth{s,k}$
and
$\sem{L}$ is \pre\kembed-downward closed and directed.

Clearly  $\sem{L} \subseteq \Depth{s,k}$
since every $P \in \instn{L}$ has $\nestr(P) \leq \nestr(L) \leq k$ and its active messages are the messages of $L$ up to renaming (which does not alter size).

$\sem{L}$ is downward closed by definition.
Proving that $\sem{L}$ is directed requires showing that
for all $P_1,P_2 \in \sem{L}$, there is $Q \in \sem{L}$ such that
$P_1 \kembed Q$ and $P_2 \kembed Q$.
Such $Q$ can be constructed by applying \cref{lm:lground-kembed} to $P_1$ and $P_2$, obtaining $n_1$ and $n_2$ such that $P_i \kembed \lground{L}{n_i}$, for $i=1,2$.
Then $Q = \lground{L}{\max(n_1,n_2)}$ is the required process by \cref{lm:less-lground-kembed}. \end{proof}

\begin{theorem}
  \label{th:idl-is-limit}
  Every ideal in $\Idl{\Depth{s,k}}$ is the denotation
  of some limit $L \in \Limits_{s,k}$.
\end{theorem}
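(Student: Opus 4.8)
The plan is to reduce the statement to a structural characterisation of the ideals of the forest wqo $\Forests_{s,k}^X$, obtained via the encoding $\Enc[F]{\hole}_k$ of \cref{def:forest-enc}, and then to read a limit off such an ideal. Given $I \in \Idl{\Depth{s,k}}$: since $I$ is downward-closed it is closed under $\kcongr$, so I would choose, for each $\kcongr$-class contained in $I$, a representative in $\Size{s}$ of restriction-nesting at most $k$; let $I^\flat$ collect these representatives and put $\mathcal{I} \is \set{\Enc[F]{Q}_k | Q \in I^\flat}$. The key lemma is that $\mathcal{I}$ is an ideal of $(\Forests_{s,k}^X, \fembed)$. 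Downward-closure is immediate: any $\phi \fembed \Enc[F]{Q}_k$ is the encoding $\Enc[F]{Q'}_k$ of a process $Q'$, and by \cref{lm:fenc-pres-order} $Q' \kembed Q$, so $Q' \in I$ and $\phi \in \mathcal{I}$. Directedness is the delicate point: from a common $\kembed$-upper bound $R \in I$ of two given processes one must produce a $\kcongr$-variant of $R$ --- still in $I$, since $I$ is $\kcongr$-closed --- whose forest encoding $\fembed$-dominates both of the given forests, essentially by rewriting $R$ so that its syntactic shape explicitly exhibits the shapes of the two processes.

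Next I would characterise $\Idl{\Forests_{s,k}^X}$. Since $\Forests_{s,0}^X = \multiset(\basic[s]{X})$ with $\basic[s]{X}$ finite and $\Forests_{s,k+1}^X = \multiset(\basic[s]{X} \dunion \Forests_{s,k}^{X \union \set{x_{k+1}}})$, an induction on $k$ using the standard descriptions of $\Idl{\multiset(\cdot)}$, of $\Idl{\cdot \dunion \cdot}$, and of ideals of finite quasi-orders (cf.~\cite{FinkelG09,BlondinFM18}) shows that every ideal of $\Forests_{s,k}^X$ consists precisely of the forests that $\fembed$-embed into an instantiation of a canonical \emph{limit forest}: a finite nested multiset of leaves and subtrees in which certain leaves and subtrees carry multiplicity $\omega$. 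Decoding a limit forest syntactically --- turning an $\omega$-flagged leaf into $\out{M}^\omega$ or $\p Q[\vec{M}]^\omega$, an $\omega$-flagged subtree rooted at a restriction into $(\new x.L')^\omega$, and keeping plain parallel composition and restriction otherwise --- yields a limit $L \in \Limits_{s,k}^X$ with $\nestr(L) \leq k$; the congruence law $\out{M}^\omega \congr \out{M}$ is exactly what makes this decoding coherent on message leaves (mirroring \cref{lm:persistence-of-knowledge}).

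Finally I would verify $\sem{L} = I$. For $I \subseteq \sem{L}$: given $P \in I$, pick $Q \in I^\flat$ with $Q \kcongr P$; then $\Enc[F]{Q}_k \in \mathcal{I}$, hence $\Enc[F]{Q}_k \fembed \Enc[F]{Q_0}_k$ for some $Q_0 \in \instn{L}$, and \cref{lm:fenc-pres-order} gives $Q \kembed Q_0$, so $P \kcongr Q \kembed Q_0 \in \instn{L}$ and $P \in \sem{L}$. For $\sem{L} \subseteq I$: each $Q_0 \in \instn{L}$ has $\Enc[F]{Q_0}_k \in \mathcal{I}$, so $Q_0 \congr Q$ for some $Q \in I^\flat$ (the encoding is injective up to $\congr$), whence $Q_0 \in I$; since $\sem{L} = \dwd{\instn{L}}$ and $I$ is downward-closed, $\sem{L} \subseteq I$. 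The main obstacle is the interaction between the forest encoding --- which is sensitive to the exact restriction structure of a process --- and knowledge and knowledge congruence, which are not: establishing directedness of $\mathcal{I}$ (and the inclusion $I \subseteq \sem{L}$) needs a partial converse to \cref{lm:fenc-pres-order} obtained by carefully normalising representatives. The boundedness of depth and message size is what keeps $\basic[s]{X}$ and the per-level knowledge quasi-orders finite, so that limit forests exist and are finite objects.
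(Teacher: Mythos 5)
Your overall skeleton (encode into forests, use the known description of multiset ideals, decode back to a limit) is the same as the paper's, but your central lemma --- that $\mathcal{I}=\set{\Enc[F]{Q}_k \mid Q\in I,\ \nestr(Q)\le k}$ is an \emph{ideal} of $(\Forests_{s,k},\fembed)$ --- is false. The image is downward-closed (this is exactly the paper's \cref{lm:fenc-pres-dwcl}), but it is not directed, and no normalisation of the common $\kembed$-upper bound can repair this. Concretely, take $k=2$, $P_1=\new a.\new b.(\p A[a]\parallel \p B[b])$ and the principal ideal $I=\dwd{\set{P_1}}$. Then $I$ contains the two congruent processes $P_1$ and $P_1'=\new a.\p A[a]\parallel\new b.\p B[b]$, with encodings $\phi_1=\mset{\mset{\mset{\p A[x_2],\p B[x_1]}}}$ and $\phi_2=\mset{\mset{\p A[x_2]},\mset{\p B[x_2]}}$. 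Every $R\in I$ satisfies $R\kembed P_1$, hence has at most two process calls and no messages (nothing is derivable from the empty knowledge), so $\Enc[F]{R}_2$ has at most two leaves. A common $\fembed$-upper bound of $\phi_1$ and $\phi_2$ would need two leaves sitting at depth two inside a single top-level subtree (to dominate $\phi_1$) \emph{and}, injectively in addition, one leaf at depth one in each of two distinct top-level subtrees (to dominate $\phi_2$): at least four leaf occurrences, which is impossible. The obstruction is a counting argument, not a matter of choosing better representatives, so the "delicate point" you flag cannot be discharged.

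The paper avoids needing directedness of the image altogether: it shows only that $\Enc[F]{I}_k$ is downward-closed, decomposes it --- like any downward-closed subset of a wqo --- into a \emph{finite union} of forest ideals $\sem{p_1},\dots,\sem{p_n}$ given by $\miter$-products (with the induction on $k$ supplying limits for the lower-level ideals, as in your plan), converts each $p_i$ into a limit $L_{p_i}$, proves $I=\sem{L_{p_1}}\cup\dots\cup\sem{L_{p_n}}$, and then invokes the irreducibility property of ideals \emph{in $(\Depth{s,k},\kembed)$} --- an ideal equal to a finite union of downward-closed sets equals one of them --- to conclude $I=\sem{L_{p_j}}$ for some $j$. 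Your final verification step would also need to be reworked along these lines, since it presupposes that $\mathcal{I}$ is a single forest ideal with a single canonical "limit forest". The remaining ingredients of your proposal (the characterisation of $\Idl{\multiset(\cdot)}$, the decoding of $\miter$ and $?$ into $\omega$ and plain components, and the use of \cref{lm:fenc-pres-order} for the inclusion $I\subseteq\sem{L}$) do match the paper.
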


\begin{proof}
The idea of the proof is to use the forest encoding $\Enc[F]{\hole}$
to apply known facts about the ideal completion for multisets to our setting.
We take an ideal of $\Depth{s,k}$ and show it
corresponds to a downward closed set of $\Forests_{s,k}$, via forest encoding.
For multisets we know exactly how to represent ideals~\cite{FinkelG09}, i.e.~\pre\miter-products. These products are expressions that can be readily seen as limits.
Let us recall the result on multisets first and then proceed with the proof.

Let $X$ be a wqo with a finite representation of ideals $\mathbb{I}(X)$.
Then we can finitely represent ideals of $\multiset(X)$ by
expressions of the form
\[
  p =
    I_1^\miter \odot \cdots \odot I_n^\miter \odot
    J_1^? \odot \cdots \odot J_m^?
\]
where $I_i,J_j \in \mathbb{I}(X)$ for all $1\leq i\leq n$, $1\leq j\leq m$.
When $n,m=0$ we write $p = \epsilon$.
Such expressions are called \pre\miter-products.
Their denotations are the ideals of $\multiset(X)$, via the map
\begin{align*}
  \sem{\epsilon} &\is \set{\emptyset} &
  \sem{p_1 \odot p_2} &\is
    \set{ \mu_1 \munion \mu_2 | \mu_1 \in \sem{p_1}, \mu_2 \in \sem{p_2}} \\
  \sem{I^\miter} & \is \multiset(\sem{I}) &
  \sem{I^?} & \is \Set{ \mset{x} | x \in \sem{I} } \union \set{\emptyset}
\end{align*}

In a wqo, downward closed sets are finite unions of ideals,
so any downward closed set of $\multiset(X)$ can be described
as the union of the denotation of finitely many \pre\miter-products.
For the details\footnote{here we use a variation of \pre\miter-products that can be seen equivalent to the one in~\cite{FinkelG09} by using $ (C_1 \union C_2)^\miter = C_1^\miter \odot C_2^\miter $.}
see~\cite{FinkelG09}.

Another well-known property of ideals we are going to use is the following.
Let $D$ be a downward closed set, then the following are equivalent:
\begin{enumerate}[(1)]
  \item $D$ is an ideal;
  \item for all downward closed sets $D_1, D_2 \subseteq X$,\\
        if $D \subseteq D_1 \union D_2$
        then $D \subseteq D_1$ or $D \subseteq D_2$;
  \item for all downward closed sets $D_1, D_2 \subseteq X$,\\
        if $D = D_1 \union D_2$ then $D = D_1$ or $D = D_2$.
\end{enumerate}

\noindent Now we turn to our theorem.
Let
  $D \subseteq \Depth{s,k}$, we define
\[
  \Enc[F]{D}_k \is \set{\Enc[F]{P} | P \in D, \nestr(P) \leq k}.
\]
By definition, $\Enc[F]{D}_k \subseteq \Forests_{s,k}$.

We also define the opposite mapping,
from forests to processes in the expected way.
Let $\phi \in \Forests_{s,k}^X$,
    $B(\phi) \is \supp(\phi) \inters \basic[s]{X}$ and
    $C(\phi) \is \supp(\phi) \setminus B(\phi)$.
Then
\[
  \Enc[P]{\phi}_k \is
    \left(
      \Parallel_{P \in B(\phi)} P^{\phi(P)}
      \parallel
      \Parallel_{\psi \in C(\phi)} (\new x_k.\Enc[P]{\psi}_{k-1})^{\phi(\psi)}
    \right)
\]
Since when $k=0$, $C(\phi)=\emptyset$, the expression is well-defined.
We have $\Enc[P]{\phi}_k \subseteq \Depth{s,k}$
and\footnote{Note the implicit \pre\alpha-renaming in the application of $\Enc[F]{\hole}$.}
$\Enc*[\big]{F}{\Enc[P]{\phi}_k}_k = \phi$ and
$\Enc*[\big]{P}{\Enc[F]{Q}_k}_k \congr Q$ (when defined).

\begin{lemma}
  \label{lm:fenc-pres-dwcl}
  For any \pre\kembed-downward closed
  $D \subseteq \Depth{s,k}$,
  $\Enc[F]{D}_k$ is \pre\fembed-downward closed.
\end{lemma}
\begin{proof}
  First note that for each $P \in D$ there is a $Q \kcongr P$ in $D$
  such that $\Enc[F]{Q} \in \Enc[F]{D}_k$.

  Towards a contradiction, assume there is a forest $\phi \in \Forests_{s,k}$
  such that there is a $\psi \in \Enc[F]{D}_k$ with $\phi \fembed \psi$
  but $\phi \not\in \Enc[F]{D}_k$.
  We know that $\psi = \Enc[F]{P}_k$ for some $P \in D$.
  From \cref{lm:fenc-pres-order} we can derive that
  $\Enc[P]{\phi}_k \kembed P$.
  But since $D$ is downward closed, $\Enc[P]{\phi}_k \in D$,
  which contradicts the assumption that $\phi \not\in \Enc[F]{D}_k$.
\end{proof}

\subparagraph{Proof of~\cref{th:idl-is-limit}}
We prove, by induction on $k$,
that $\Idl{\Depth[X]{s,k}} \subseteq \sem{\Limits_{s,k}}$.
Again, the base case is a special case of the induction step.
Let $k>0$ and $D \in \Idl{\Depth[X]{s,k}}$;
By induction hypothesis, we can assume that the ideals of
$\Depth[X]{s,k-1}$ are described
by limits in $\Limits^X_{s,k-1}$.
Since $D$ is downward closed, by \cref{lm:fenc-pres-dwcl},
$\Enc[F]{D}_k \subseteq \Forests_{s,k}$ is downward closed as well,
and thus is the finite union of some ideals of $\Forests_{s,k}^X$.
Now we have,
\[
  \Idl*{\Forests_{s,k}^X} =
  \Idl*{\multiset\left(
    \basic[s]{X} \dunion
    \Forests_{s,k}^{X \union \set{x_{k+1}}}
  \right)}
\]
so we can represent $\Idl{\Forests_{s,k}^X}$
with \pre\miter-products over
$
  \basic[s]{X} \dunion
  \Limits_{s,k}^{X \union \set{x_{k+1}}}.
$
Let $p_1,\dots,p_n$ be such that
$\Union_{i=1}^n \sem{p_i} = \Enc[F]{D}_k$.
From each $p_i$ we can obtain a limit $L_{p_i}$ by replacing:
\begin{enumerate}
  \item $\odot$ with $\parallel$;
  \item each $B^?$ with $B$ and
        each $B^\miter$ with $B^\omega$,
        if $B \in \basic[s]{X}$;
  \item each $L^?$ with $\new x_k.L$ and
        each $L^\miter$ with $(\new x_k.L)^\omega$,
        if $L \in \Limits_{s,k}^{X \union \set{x_{k+1}}}$.
\end{enumerate}
We now show that $D = \Union_{i=1}^n \sem{L_{p_i}}$.
It is important to remember that $\sem{p_i}$ are \pre\fembed-ideals
while $\sem{L_{p_i}}$ are \pre\kembed-ideals.
One can show, by induction on the structure of $p_i$, that
  $ Q \in \instn{L_{p_i}} $
if and only if
  $ Q \congr Q' $
  for some $ Q' \in \Enc[P]{p_i} $.
It then follows that
$
  \sem{L_{p_i}} =
    \set{ Q | \exists \phi \in \sem{p_i}: Q \kembed \Enc[P]{\phi} }
$
and therefore
\begin{align*}
  D &= \set{ Q | \exists \phi \in \Enc[F]{D}: Q \kembed \Enc[P]{\phi} } \\
    &= \Union_{i=1}^n
          \set{ Q | \exists \phi \in \sem{p_i}: Q \kembed \Enc[P]{\phi} }
    = \Union_{i=1}^n \sem{L_{p_i}}
\end{align*}

We thus established that $D = \Union_{i=1}^n \sem{L_{p_i}}$.
Since $D$ is an ideal and each $\sem{L_{p_i}}$ is downward closed,
we have that $D = \sem{L_{p_j}}$ for some $1\leq j \leq n$.
 \end{proof}

\section{Proof of Absorption for $\IntrSymm$ (Lemma~\ref{lm:symm-absorbing})}
Let
  $\vec{x}$ and $\vec{y}$ be two lists of pairwise distinct names,
  $\Gamma \subseteq \Msg^{\SigSymm}$ be a finite set of messages, and
  $\Gamma' = \Gamma\subst{\vec{x} -> \vec{y}}$.
Moreover, assume that $ \names(\Gamma) \inters \vec{y} = \emptyset $.
We have to show that,
for all messages $M$ with $\names(M) \subseteq \names(\Gamma)$,
we have that
  $\Gamma,\Gamma' \deriv M$
    if and only if
  $\Gamma \deriv M$.

The $\Leftarrow$ direction is a direct consequence of~\eqref{axiom:mon}.
For the other direction,
let $\sigma = \subst{\vec{x} -> \vec{y}}$
and $\sigma' = \subst{\vec{y} -> \vec{x}}$,
so $N\sigma' \in \Gamma$ if and only if $N \in \Gamma'$.
Note that ${\Gamma \sigma' = \Gamma' \sigma' = \Gamma}$.
Moreover, if $\names(M) \subseteq \names(\Gamma)$,
$M = M\sigma'$ because, by assumption,
$ \names(\Gamma) \inters \vec{y} = \emptyset $.

We proceed by induction on the depth of the derivation
for $\Gamma,\Gamma' \deriv M$.
We assume the statement holds for derivations of depth $d$ and
do a case analysis on the last rule applied in the
\pre (d+1)-deep derivation for $\Gamma,\Gamma' \deriv M$.

\begin{description}
  \item[\cref{rule:id}:]
    $M \in \Gamma\union\Gamma'$.
    If $M \in \Gamma$ we are done.
    Otherwise, if $M\in\Gamma'$, we have
    $ M\sigma' \in \Gamma $ but $M = M\sigma'$ so $M \in \Gamma$.

  \item[\cref{rule:pl}:]
    $(N_1, N_2) \in \Gamma\union\Gamma'$ and
    $
      \inferrule
        {\Gamma,\Gamma', N_1, N_2 \deriv M}
        {\Gamma,\Gamma' \deriv M}.
    $\\
    We then have two cases:
    \begin{description}

      \item[Case] $(N_1,N_2) \in \Gamma$.
        By \eqref{axiom:mon} we can transform
        the \pre d-deep derivation for
        ${\Gamma,\Gamma', N_1, N_2 \deriv M}$ into a derivation for
        ${\Gamma, N_1, N_2, \Gamma', N_1\sigma, N_2\sigma \deriv M}$
        of the same depth.
        Since $M \subseteq \names(\Gamma, N_1, N_2) = \names(\Gamma)$
        and by definition
        ${\Gamma', N_1\sigma, N_2\sigma = (\Gamma, N_1, N_2)\sigma}$,
        we can apply the induction hypothesis,
        obtaining $ \Gamma, N_1, N_2 \deriv M $
        which proves $\Gamma\deriv M$ by application of \cref{rule:pl}.

      \item[Case] $(N_1,N_2) \in \Gamma'$.
          From ${\Gamma,\Gamma', N_1, N_2 \deriv M}$ we get
          $ {(\Gamma,\Gamma', N_1, N_2)\sigma' \deriv M\sigma'} $
          by \eqref{axiom:alpha}.
        Since we have
          $\Gamma\sigma' = \Gamma'\sigma' = \Gamma$ and $M\sigma' = M$,
        we can infer that
          $ \Gamma, N_1\sigma',N_2\sigma' \deriv M $
        which proves, together with $(N_1\sigma', N_2\sigma') \in \Gamma$,
        that $\Gamma\deriv M$ by application of \cref{rule:pl}.

    \end{description}

  \item[\cref{rule:encl}:]
    $ \enc{N}{K} \in \Gamma\union\Gamma' $ and
    $
      \inferrule
        {\Gamma,\Gamma' \deriv K   \and   \Gamma,\Gamma', N, K \deriv M}
        {\Gamma,\Gamma' \deriv M}.
    $\\
    We distinguish two cases:
    \begin{description}

      \item[Case] $ \enc{N}{K} \in \Gamma $.
        Then $ \names(K) \subseteq \names(\Gamma) $
        and by induction hypothesis we have $\Gamma \deriv K$.
        We also have
          $ \names(N) \subseteq \names(\Gamma) $,
          by definition
          $ (\Gamma,N,K)\sigma = \Gamma', N\sigma, K\sigma $ and
          $ \names(M) \subseteq \names(\Gamma,N,K) $.
        We can transform the \pre d-deep derivation for
        ${\Gamma,\Gamma', N, K \deriv M}$ into a derivation for
        $\Gamma, N, K,\Gamma', N\sigma, K\sigma \deriv M$
        with the same depth,
        so by induction hypothesis we obtain
        ${\Gamma, N, K \deriv M}$.
        By applying \cref{rule:encl} to
        $\Gamma \deriv K$ and $\Gamma, N, K \deriv M$
        we obtain $\Gamma \deriv M$.

      \item[Case] $ \enc{N}{K} \in \Gamma' $.
        Then by \eqref{axiom:alpha},
        $\Gamma,\Gamma' \deriv K$ implies
        $\Gamma\sigma',\Gamma'\sigma' \deriv K\sigma'$.
        Since $\Gamma\sigma' = \Gamma$ and
              $\Gamma'\sigma' = \Gamma$ we have
        $\Gamma \deriv K\sigma'$.
        Similarly,
        $ {\Gamma,\Gamma',N,K \deriv M} $ implies
        $ \Gamma,N\sigma',K\sigma' \deriv M\sigma' $.
        Since $M \sigma' = M$, we have
          $ \Gamma, N\sigma', K\sigma' \deriv M $,
        which in conjuction with
          $\Gamma \deriv K\sigma'$ and
          $\enc{N\sigma'}{K\sigma'} \in \Gamma$,
        proves $\Gamma \deriv M$ by application of \cref{rule:encl}.

    \end{description}

  \item[\cref{rule:aencl}:]
    Analogous to the case of \cref{rule:encl}.

  \item[\cref{rule:pr}:]
    $M = (N_1, N_2)$ and
    $
      \inferrule
        {\Gamma,\Gamma' \deriv N_1   \and   \Gamma,\Gamma' \deriv N_2}
        {\Gamma,\Gamma' \deriv M}.
    $\\
    We have $\names(N_1),\names(N_1) \subseteq \names(\Gamma)$
    so we can apply the induction hypothesis to the two premises and
    get $ \Gamma \deriv N_1 $ and $ \Gamma \deriv N_2 $,
    which proves $ \Gamma \deriv M $ by \cref{rule:pr}.

  \item[\cref{rule:encr}:]
    Analogous to the case of \cref{rule:pr}.

  \item[\cref{rule:aencr}:]
    Analogous to the case of \cref{rule:pr}.
    \qedhere

\end{description}

\section{Proof of Characterisation of Limits Inclusion}

For any limit $L$ with standard form
$
 \new \vec{x}.
            (\out{\Gamma} \parallel
            Q \parallel
            \Parallel_{i\in I} B_i^\omega)
$, we call $\Gamma$ and $Q$ non-iterated, also fixed, components while
$B_i$ for $i \in I$ are iterated components.

We recall the statement:
 Let $L_1$ and $L_2$ be two limits,
 with
 $
   \stdf(L_1) \aeq \new\vec{x}_1.(
     \out{\Gamma_1} \parallel
     Q_1 \parallel
     \Parallel_{i\in I} B_i^\omega
   )
 $,
 and let $n = \card{\vec{x}_1} + \card{Q_1} + 1 $.
 Then:
 \[
   \sem{L_1} \subseteq \sem{L_2}
     \iff
   \begin{cases}
      \stdf(\lext{L_2}{n}) \aeq \new\vec{x}_1,\vec{x}_2.(
         \out{\Gamma_2} \parallel
         Q_1 \parallel
         Q_2 \parallel
         R_2
       )
     \;\text{ and }\;
     \Gamma_1 \kleq \Gamma_2 &
     \cond{A}\\
     \sem{\out{\Gamma_1} \parallel \Parallel_{i\in I} B_i}
       \subseteq
     \sem{\out{\Gamma_2} \parallel R_2}
     & \cond{B}
   \end{cases}
 \]

\subsection{Conditions \cond{A} and \cond{B} Are Sound for Inclusion}

\begin{lemma}[Soundness]
 If conditions \cond{A} and \cond{B} hold, then $\sem{L_1} \subseteq \sem{L_2}$.
\end{lemma}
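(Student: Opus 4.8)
The plan is to show that every instance of $\stdf(L_1)$ lies in $\sem{L_2}$. Since $\sem{L_1}=\sem{\stdf(L_1)}$ and $\sem{L_2}=\sem{\lext{L_2}{n}}$ by \cref{lm:ext-same-sem}, and $\sem{\lext{L_2}{n}}$ is downward closed, this gives $\sem{L_1}\subseteq\sem{L_2}$. Working with $\lext{L_2}{n}$ is convenient because condition~\cond{A} hands us its standard form, $\stdf(\lext{L_2}{n})\aeq\new\vec{x}_1,\vec{x}_2.(\out{\Gamma_2}\parallel Q_1\parallel Q_2\parallel R_2)$ with $\Gamma_1\kleq\Gamma_2$, where $R_2$ is a parallel composition of $\omega$-iterated limits. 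Fix an instance of $\stdf(L_1)=\new\vec{x}_1.(\out{\Gamma_1}\parallel Q_1\parallel\Parallel_{i\in I}B_i^\omega)$; it has the form $\new\vec{x}_1.(\out{\Gamma_1}\parallel Q_1\parallel P_\omega)$ with $P_\omega\in\instn{\Parallel_{i\in I}B_i^\omega}$, and the task is to place it into $\sem{\lext{L_2}{n}}$.

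First I would slice the iterated part. The process $P_\omega$ is a parallel composition of finitely many instances of the $B_i$; padding each $B_i$ with extra instances (which only adds components, hence is a $\kembed$-step) and reassociating, we obtain $P_\omega\kembed P'_1\parallel\dots\parallel P'_N$ with each $P'_m\in\instn{\Parallel_{i\in I}B_i}$. Next, using \cref{lm:persistence-of-knowledge} to turn $\out{\Gamma_1}$ into $N$ copies of itself under $\new\vec{x}_1$ and reassociating, the process becomes knowledge-congruent to $\new\vec{x}_1.\bigl(Q_1\parallel\Parallel_{m=1}^{N}(\out{\Gamma_1}\parallel P'_m)\bigr)$. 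Each slice $\out{\Gamma_1}\parallel P'_m$ is an instance of $\out{\Gamma_1}\parallel\Parallel_{i\in I}B_i$, so by condition~\cond{B} it is $\kembed$-below some $T_m\in\instn{\out{\Gamma_2}\parallel R_2}$; after $\alpha$-renaming we may assume the $T_m$ agree on their free names (in particular on $\vec{x}_1$) and are otherwise bound-name disjoint.

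Then I would recombine. From the $N$ embeddings $\out{\Gamma_1}\parallel P'_m\kembed T_m$, a matching argument like the one in the proof of \cref{lm:kembed-lground-proc}, together with \cref{lm:kleq-add-kleq} for the knowledge component, yields $\new\vec{x}_1.\bigl(Q_1\parallel\Parallel_m(\out{\Gamma_1}\parallel P'_m)\bigr)\kembed\new\vec{x}_1.\bigl(Q_1\parallel\Parallel_m T_m\bigr)$. The key observation is that $\Parallel_m T_m$ is again an instance of $\out{\Gamma_2}\parallel R_2$: the $N$ copies of $\out{\Gamma_2}$ collapse to one by \cref{lm:persistence-of-knowledge}, and $\instn{R_2}$ is closed under parallel composition precisely because $R_2$ is a parallel of $\omega$-iterated limits (concatenating the unfoldings of two instances of the same $B^\omega$ gives another instance). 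Pulling the fresh restrictions of $\Parallel_m T_m$ to the front and using $\Gamma_1\kleq\Gamma_2$ from~\cond{A}, the resulting process is $\kembed$-below the instance of $\lext{L_2}{n}$ that uses that same instance of $R_2$ (the extra process calls $Q_2$ and any unused names $\vec{x}_2$ only need to be matched on the right, which $\kembed$ permits). Chaining the $\kembed$-steps and using downward closure of $\sem{\lext{L_2}{n}}=\sem{L_2}$ finishes the argument.

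The main obstacle is the scoping bookkeeping around the shared fixed names. The names $\vec{x}_1$, and the additional fixed names $\vec{x}_2$ that the $n$-fold extension exposes, are shared across all $N$ slices rather than duplicated, so one must make sure the instances $T_m$ produced by condition~\cond{B} are mutually compatible on exactly these names while keeping their remaining bound names disjoint, and must confirm that unboundedly many slices genuinely fit into a single right-hand instance — which is where the closure of $\instn{R_2}$ under parallel composition does the real work. It is worth noting that the absorption axiom plays no role in this direction; it is needed only for the converse, where it justifies cutting off the extension of $L_2$ at depth $n$.
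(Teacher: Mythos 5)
Your proof is correct and follows essentially the same strategy as the paper's: handle the fixed part via \cond{A}, replicate the knowledge $\out{\Gamma_1}$ using \cref{lm:persistence-of-knowledge}, apply \cond{B} once per copy of the iterated part, and recombine using the fact that instances of $\omega$-iterated limits are closed under parallel composition (the paper's \cref{lm:kembed-limit-mult}) together with \cref{lm:kleq-add-kleq}. The only difference is presentational — the paper routes everything through groundings via \cref{lm:lground-inc-indices} and raises a single \cond{B}-witness to the power $n_1$, whereas you slice an arbitrary instance and pick a witness $T_m$ per slice — and your closing remark that absorption is only needed for completeness matches the paper.
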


\label{lm:suff-conds-incl}
  \begin{proof}
  Given that \cond{A} and \cond{B} hold,
  it suffices to show that
  \[\forall n_1 \in \Nat, \exists n_2 \in \Nat.
    \lground{L_1}{n_1} \kembed \lground{\lext{L_2}{n}}{n_2}\]
  by \cref{lm:ext-same-sem,lm:lground-inc-indices}.
  Let $n_1$ be an arbitrary number.
  With $R_1 := \Parallel_{i\in I} B_i^\omega$, we have
  $
     \lground{\stdf(L_1)}{n_1} =
     \new \vec{x}_1.
            (\out{\Gamma_1} \parallel
            Q_1 \parallel
            \lground{R_1}{n_1})
  $ and $
     \stdf(\lground{\lext{L_2}{n}}{0}) \aeq
     \new \vec{x}_1,\vec{x}_2.
            (\out{\Gamma_2} \parallel Q_1 \parallel Q_2)
  $
  by \cond{A}.

  We show that
  ${\new \vec{x}_1.
    (\out{\Gamma_1} \parallel Q_1 \parallel \lground{R_1}{n_1})
   \kembed
   \new \vec{x}_1, \vec{x}_2.
    (\out{\Gamma_2} \parallel Q_1 \parallel Q_2 \parallel \lground{R_2}{n_2})}$
  for some~$n_2$.
  Condition \cond{A} induces a knowledge order for both fixed parts.
    Condition \cond{B} gives rise to some relation between the iterated components we will exploit.
  This relation is captured in the following fact.
  For some \pre\alpha-renaming in \cond{A} and some $n_2$:
  \begin{equation}
   \out{\Gamma_1} \parallel \lground{R_1}{n_1}
   \kembed
   \out{\Gamma_2} \parallel \lground{R_2}{n_2}
   \tag{$*$}
   \label{renamed-kembed}
 \end{equation}
  Prior to proving \eqref{renamed-kembed}, we show that this fact and \cond{A} suffice to prove our goal.
  With $\stdf(\lground{R_i}{n_i}) =
        \new\vec{y_i}.(\Gamma_{R_i^{n_i}} \parallel
                       Q'_i)$,
  we call $\Gamma_{R_i^{n_i}}$ the knowledge of $\lground{R_i}{n_i}$ for $i \in \set{1, 2}$.
  Because of \eqref{renamed-kembed}, we can choose $\vec{y}_2$ so that
  $\vec{y}_2 = \vec{y}_1, \vec{z}_2$ and there is no need to rename for the embedding anymore.
  Equipped with this abbreviation, the knowledge of the left hand side
  $(\new \vec{x}_1. (\out{\Gamma_1} \parallel Q_1 \parallel \lground{R_1}{n_1})$
  is $(\Gamma_1 , \Gamma_{R_1^{n_1}})$
  while $\Gamma_2, \Gamma_{R_2^{n_2}}$ is the knowledge of the right hand side
  $\new \vec{x}_1, \vec{x}_2.(\out{\Gamma_2} \parallel Q_1 \parallel Q_2 \parallel \lground{R_2}{n_2})$.
  This is also exactly the knowledge contained in
  $(\out{\Gamma_1} \parallel \lground{R_1}{n_1})$ respectively
  $(\out{\Gamma_2} \parallel \lground{R_2}{n_2})$, hence
  $\Gamma_1, \Gamma_{R_1^{n_1}} \kleq \Gamma_2, \Gamma_{R_2^{n_2}}$ by \eqref{renamed-kembed}.
  It remains to take care of names and process calls for the embedding.
  Condition \cond{A} already takes care of non-iterated names and process calls.
  With our assumption \eqref{renamed-kembed}, we obtain
  \[
   \new \vec{x}_1.
     (\out{\Gamma_1} \parallel Q_1 \parallel \out{\Gamma_1} \parallel \lground{R_1}{n_1})
   \kembed
   \new \vec{x}_1, \vec{x}_2.
     (\out{\Gamma_2} \parallel Q_1 \parallel Q_2 \parallel \out{\Gamma_2} \parallel \lground{R_2}{n_2}).
  \]
  Both parts are knowledge congruent to our goal's sides by
  \cref{lm:persistence-of-knowledge} which proves the goal. \\

  It remains to show that \eqref{renamed-kembed}, i.e.~$
   (\out{\Gamma_1} \parallel \lground{R_1}{n_1})
    \kembed
   (\out{\Gamma_2} \parallel \lground{R_2}{n_2})$ for some~$n_2
  $.
  We start with \cond{B}
  and apply \cref{lm:lground-inc-indices} so that we know that
  for every $n_1 \in \Nat$, there is a $m \in \Nat$ such that
  $
   \out{\Gamma_1} \parallel \lground{\Parallel_{i\in I} B_1}{n_1}
   \kembed
   \out{\Gamma_2} \parallel \lground{R_2}{m}
  $.
  By \cref{lm:kembed-lground-proc}, we multiply on both sides:
  \[
    (\out{\Gamma_1} \parallel
    \lground{\Parallel_{i\in I} B_1}{n_1})^{n_1}
     \kembed
    (\out{\Gamma_2} \parallel \lground{R_2}{m})^{n_1}.
  \]
  By \cref{lm:persistence-of-knowledge}, we pull the messages out of both replications:
  \[
    \out{\Gamma_1} \parallel (\lground{\Parallel_{i\in I} B_1}{n_1})^{n_1}
     \kembed
    \out{\Gamma_2} \parallel (\lground{R_2}{m})^{n_1}.
  \]
  We continue on both sides individually.
  On the left, we omit messages for simplicity as we already know that $\Gamma_1 \kleq \Gamma_2$.
  \[
    (\lground{\Parallel_{i\in I} B_1}{n_1})^{n_1}
=
    (\Parallel_{i\in I}{\lground{B_1}{n_1}})^{n_1}
    =
    \lground{R_1}{n_1}.
  \]
  While we use \cref{lm:kembed-limit-mult} and
  \cref{cor:kembed-add-mess} on the right in order to obtain that:
  \[
    \out{\Gamma_2} \parallel (\lground{R_2}{m})^{n_1}
     \kembed
    \out{\Gamma_2} \parallel \lground{R_2}{n_1 \cdot m}
  \]
  Combining both paths leads to:
  $
   \out{\Gamma_1} \parallel \lground{R_1}{n_1}
    \kembed
   \out{\Gamma_2} \parallel \lground{R_2}{n_1 \cdot m}
  $.
  By choosing $n_2 = n_1 \cdot m$, the claim follows.
  \end{proof}

\subsection{Conditions \cond{A} and \cond{B} Are Complete for Inclusion}

\begin{lemma}[Completeness]
 If $\sem{L_1} \subseteq \sem{L_2}$, then conditions \cond{A} and \cond{B} hold.
\end{lemma}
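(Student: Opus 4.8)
Write $\stdf(L_1)\aeq\new\vec x_1.(\out{\Gamma_1}\parallel Q_1\parallel\Parallel_{i\in I}B_i^\omega)$ as in the statement, and let $F_1\is\new\vec x_1.(\out{\Gamma_1}\parallel Q_1)$ be its fixed part, so that $F_1=\lground{L_1}{0}$ (up to $\congr$). The plan is to derive \cond{A} and then \cond{B} from the hypothesis $\sem{L_1}\subseteq\sem{L_2}$, using the grounding lemmas (\cref{lm:lground-in-sem,lm:less-lground-kembed,lm:lground-kembed,lm:lground-inc-indices,lm:ext-same-sem}) to pass between limits and processes, the structural reading of $\kembed$ behind \cref{prop:kn-embed-vs-kn-congr}, and the absorption axiom (\cref{def:absorption}) to tame knowledge under replication. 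Throughout, recall $\lext{B^\omega}{n}=(\lext{B}{n})^n\parallel B^\omega$, so extension places $n$ \emph{fixed} copies of each iterated component — which is what the threshold $n=\card{\vec x_1}+\card{Q_1}+1$ is for.

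For \cond{A}: since $F_1=\lground{L_1}{0}\in\sem{L_1}\subseteq\sem{L_2}=\sem{\lext{L_2}{n}}$, \cref{lm:lground-kembed} gives an $m$ with $F_1\kembed\lground{\lext{L_2}{n}}{m}$. Structurally, such an embedding identifies $\vec x_1$ with a sublist of the restrictions and $Q_1$ with a sub-multiset of the process calls of $\lground{\lext{L_2}{n}}{m}$, and gives $\Gamma_1\kleq\Xi$, where $\Xi$ is its knowledge. The crucial step is a pigeonhole/re-routing argument: the $\card{\vec x_1}+\card{Q_1}=n-1$ structural items of $F_1$ touch at most $n-1$ of the pairwise $\aeq$ copies of iterated components present in $\lground{\lext{L_2}{n}}{m}$, whereas $\stdf(\lext{L_2}{n})$ already carries $n$ such copies in its \emph{fixed} part; permuting copies (a structural-congruence step on the target, hence harmless for $\kembed$) we may assume the image of $F_1$ lies wholly inside the fixed part of $\stdf(\lext{L_2}{n})$. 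Naming the unmatched fixed process calls $Q_2$ and the iterated components $R_2$ gives the structural half of \cond{A}. For $\Gamma_1\kleq\Gamma_2$: after the re-routing the image of $\Gamma_1$ has names in $\names(\Gamma_2)$, whereas $\Xi$ arises from $\Gamma_2$ by adjoining finitely many copies $\Gamma_2\tau_1,\dots,\Gamma_2\tau_r$, each $\tau_\ell$ fixing the free names and sending the bound names of the relevant iterated components to pairwise fresh names; iterating \cref{def:absorption} then yields $\Xi\deriv M\iff\Gamma_2\deriv M$ for all $M$ with $\names(M)\subseteq\names(\Gamma_2)$, hence $\Gamma_1\kleq\Gamma_2$.

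For \cond{B}: by downward closure and \cref{lm:lground-kembed,lm:less-lground-kembed} it suffices to show $\out{\Gamma_1}\parallel\Parallel_{i\in I}\lground{B_i}{N}\in\sem{\out{\Gamma_2}\parallel R_2}$ for every $N$. Fix $N$, pick $N'\ge N$ large, and use $\lground{L_1}{N'}=\new\vec x_1.(\out{\Gamma_1}\parallel Q_1\parallel\Parallel_i(\lground{B_i}{N'})^{N'})\in\sem{L_2}=\sem{\lext{L_2}{n}}$ to obtain, via \cref{lm:lground-kembed} and \cond{A}, an $m$ with $\lground{L_1}{N'}\kembed\lground{\lext{L_2}{n}}{m}\kcongr\new\vec x_1,\vec x_2.(\out{\Gamma_2}\parallel Q_1\parallel Q_2\parallel\lground{R_2}{m})$. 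Re-routing as before so that the designated $\vec x_1,Q_1$ on the left meet the designated $\vec x_1,Q_1$ on the right, the $N'$ copies of each $\lground{B_i}{N'}$ must embed into $\out{\Gamma_2}\parallel Q_2\parallel\lground{R_2}{m}$; since $\out{\Gamma_2}\parallel Q_2$ is a \emph{fixed} finite limit, a pigeonhole over its restrictions and process calls shows that only boundedly many copies can have an item mapped outside $\lground{R_2}{m}$, so for $N'$ large enough one copy of each $B_i$ has its whole image inside $\lground{R_2}{m}$. Collecting those copies yields $\out{\Gamma_1}\parallel\Parallel_i\lground{B_i}{N'}\kembed\out{\Gamma_2}\parallel\lground{R_2}{m}$: the structural half is the restriction of the embedding, and the knowledge half holds because the full embedding gives $\Gamma_1\kleq\Gamma'$, where $\Gamma'$ is the knowledge of $\lground{\lext{L_2}{n}}{m}$, which is also the knowledge of $\out{\Gamma_2}\parallel\lground{R_2}{m}$. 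As $\out{\Gamma_2}\parallel\lground{R_2}{m}=\lground{\out{\Gamma_2}\parallel R_2}{m}\in\sem{\out{\Gamma_2}\parallel R_2}$ and $\lground{B_i}{N}\kembed\lground{B_i}{N'}$, downward closure gives $\out{\Gamma_1}\parallel\Parallel_i\lground{B_i}{N}\in\sem{\out{\Gamma_2}\parallel R_2}$, completing \cond{B}.

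The main obstacle is making the pigeonhole/re-routing rigorous when $L_2$ contains \emph{nested} $\omega$'s: this needs an induction on $\nestr(L_2)$ (equivalently on the structure of $\stdf(\lext{L_2}{n})$), with care that permuting copies at one nesting level does not disturb material already matched at outer levels and that the cascade of implicit $\aeq$-renamings keeps $\vec x_1$ and $Q_1$ aligned on both sides. A second delicate point is the iterated application of \cref{def:absorption} for the knowledge components: each step must be brought into its exact shape (a single renamed copy $\Gamma\subst{\vec{x} -> \vec{y}}$ with $\vec y$ disjoint from $\names(\Gamma)$), which forces one to adjoin auxiliary renamed copies so that the fresh-name bookkeeping closes up.
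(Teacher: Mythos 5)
Your overall strategy is the paper's: \cond{A} is obtained by embedding the fixed part $\lground{L_1}{0}$ into some grounding of $L_2$ and then arguing, by a pigeonhole of the $n-1$ structural items of $L_1$ against the $n$ unfolded copies in $\lext{L_2}{n}$, that the image can be confined to the fixed part, with iterated absorption taming the knowledge; \cond{B} is a second pigeonhole of the $N'$ copies of each $B_i$ against the finite fixed part (the paper runs this step as a contradiction, you run it directly --- a cosmetic difference). The induction you defer at the end is exactly what the paper's step-indexed grounding and folding function supply (\cref{lm:fold-sound-kembed}). Two points in your sketch would, however, fail as written. First, ``permuting copies is a structural-congruence step'' is false once $L_2$ has nested $\omega$'s: in $\lground{\lext{L_2}{n}}{m}$ the $n$ copies contributed by the extension are groundings of $\lext{B}{n}$ while the $m$ residual copies are groundings of $B$, and these are only $\kembed$-related, not congruent; the re-routing is genuine surgery on the embedding, which is precisely why the paper descends level by level. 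For the same reason $\Xi$ is not $\Gamma_2$ together with whole renamed copies $\Gamma_2\tau_\ell$: the surplus knowledge sitting under a \emph{fixed} outer copy is a renaming of only an inner fragment with the outer names held fixed, and the surplus of an extra outer copy is a renaming of grounded (not merely fixed) knowledge --- hence the paper's \cref{cor:deriv-variant-idemp} (absorption relative to a context) applied one copy at a time in the right order.

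Second, and more substantively, \cond{A} and \cond{B} share the decomposition $\stdf(\lext{L_2}{n}) \aeq \new\vec{x}_1,\vec{x}_2.(\out{\Gamma_2}\parallel Q_1\parallel Q_2\parallel R_2)$, so a \emph{single} $\alpha$-matching must witness both. You fix the matching in \cond{A} from an embedding of $F_1$ alone, and in \cond{B} you then assert that the embedding of $\lground{L_1}{N'}$ can be ``re-routed so that the designated $\vec{x}_1,Q_1$ meet the designated $\vec{x}_1,Q_1$''. That re-routing can fail: a matching that witnesses \cond{A} need not be one for which \cond{B} holds, and nothing forces the restriction of the large embedding to $F_1$ to agree with the matching you committed to earlier. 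The repair is to extract the matching from the embeddings of the large groundings (their restriction to $F_1$ still witnesses \cond{A}) and then use that there are finitely many matchings together with monotonicity of groundings (\cref{lm:less-lground-kembed}) to pick one that works for infinitely many, hence all, $N$. The paper sidesteps this by phrasing the \cond{B} argument as a contradiction quantified over \emph{all} renamings satisfying \cond{A}.
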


For this proof we are assuming the intruder model is absorbing.
Before we give the proof,
 we need some auxiliary definitions and results.

  \begin{corollary}
  \label{cor:deriv-variant-idemp}
  Let
      $\vec{x}$ and $\vec{y}$ be two lists of pairwise distinct names,
      $\Gamma, \Gamma_1$ be two finite sets of messages, and
      $\Gamma_2 = \Gamma_1\subst{\vec{x} -> \vec{y}}$.
    Moreover, assume that $ \names(\Gamma_1) \inters \vec{y} = \emptyset$ and
    $\names(\Gamma) \inters \vec{y} = \emptyset = \names(\Gamma) \inters \vec{x}$.
    Then, for all messages $M$ with $\names(M) \subseteq \names(\Gamma, \Gamma_1)$,
    we have that
      $\Gamma,\Gamma_1, \Gamma_2 \deriv M$
        if and only if
      $\Gamma, \Gamma_1 \deriv M$.
  \end{corollary}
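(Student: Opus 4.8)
The plan is to obtain this directly from the absorption axiom (\cref{def:absorption}), which we are assuming holds of the fixed intruder model, by instantiating it with the \emph{union} $\Delta \is \Gamma \union \Gamma_1$ in the role played by the set ``$\Gamma$'' in that definition, and with the same renaming $\sigma \is \subst{\vec{x} -> \vec{y}}$. First I would check that the side conditions of \cref{def:absorption} are met for $\Delta$: the lists $\vec{x},\vec{y}$ are pairwise distinct by hypothesis, $\Delta$ is a finite set of messages, and $\names(\Delta) \inters \vec{y} = (\names(\Gamma) \inters \vec{y}) \union (\names(\Gamma_1) \inters \vec{y}) = \emptyset$ by the two disjointness assumptions on $\vec{y}$.

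Next I would compute the renamed set. By definition of substitution application on sets, $\Delta\sigma = \Gamma\sigma \union \Gamma_1\sigma$. Since $\sigma$ has domain $\vec{x}$ and $\names(\Gamma) \inters \vec{x} = \emptyset$, the substitution leaves $\Gamma$ unchanged, i.e.~$\Gamma\sigma = \Gamma$; and $\Gamma_1\sigma = \Gamma_2$ by definition of $\Gamma_2$. Hence $\Delta\sigma = \Gamma \union \Gamma_2$, and therefore, using that these are sets so $\union$ is idempotent and associative, $\Delta, \Delta\sigma = \Gamma \union \Gamma_1 \union \Gamma_2 = \Gamma,\Gamma_1,\Gamma_2$. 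Applying the absorption axiom to $\Delta$ and $\sigma$ then says precisely that for every message $M$ with $\names(M) \subseteq \names(\Delta) = \names(\Gamma,\Gamma_1)$ we have $\Gamma,\Gamma_1,\Gamma_2 \deriv M$ if and only if $\Gamma,\Gamma_1 \deriv M$, which is the claim.

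There is no real obstacle here; the statement is a straightforward repackaging of \cref{def:absorption}. The only points needing a little care are the name bookkeeping---specifically using $\names(\Gamma) \inters \vec{x} = \emptyset$ to see that $\sigma$ does not act on $\Gamma$, and combining the two separate hypotheses on $\vec{y}$ into $\names(\Delta) \inters \vec{y} = \emptyset$---and observing that the side condition $\names(M) \subseteq \names(\Gamma,\Gamma_1)$ stated in the corollary is exactly the side condition $\names(M) \subseteq \names(\Delta)$ of the absorption axiom for the chosen $\Delta$. (The $\Leftarrow$ direction alone would also follow from \eqref{axiom:mon}, but absorption delivers both directions at once.)
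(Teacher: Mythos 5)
Your proof is correct and takes essentially the same route as the paper's: both reduce the claim to the absorption axiom applied to the union $\Gamma \union \Gamma_1$ with the renaming $\subst{\vec{x} -> \vec{y}}$, using $\names(\Gamma) \inters \vec{x} = \emptyset$ to see that the renamed copy is $\Gamma \union \Gamma_2$ and idempotence of set union to collapse $\Gamma,\Gamma_1,\Gamma,\Gamma_2$ into $\Gamma,\Gamma_1,\Gamma_2$. If anything, your bookkeeping of the side conditions is more explicit than the paper's two-line argument.
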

  \begin{proof}
   The direction from right to left is obvious.
   For the reverse direction, it is equivalent to show that
   $\Gamma, \Gamma_1, \Gamma, \Gamma_2 \deriv M$.
   Now, $\Gamma, \Gamma_1 = (\Gamma, \Gamma_2)\subst{\vec{x} -> \vec{y}}$.
   The claim follows by the assumption that the intruder is absorbing
   (\cref{def:absorption}).
  \end{proof}

  \begin{lemma}
  \label{lm:congr-exp}
   Let $L, L'$ be two limits
   such that $L \congr L'$.
   Then for every $n \in \Nat: \lground{L}{n} \congr \lground{L'}{n}$.
  \end{lemma}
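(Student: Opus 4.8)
The plan is to prove the lemma by induction on the derivation of $L \congr L'$, using throughout that $\lground{\hole}{n}$ is a homomorphism for every limit-forming construct: by \cref{fig:grounding-extension} we have $\lground{\new x.L}{n} = \new x.\lground{L}{n}$, $\lground{L_1 \parallel L_2}{n} = \lground{L_1}{n} \parallel \lground{L_2}{n}$, and $\lground{B^\omega}{n} = (\lground{B}{n})^{n}$. Since $\congr$ is a congruence, it is also compatible with the derived operator $\hole^{n}$, i.e.\ $P \congr P'$ implies $P^{n} \congr P'^{n}$ by $n$ applications of the parallel rule (with $\zero \congr \zero$ in the base case). These two facts let me transport each generating step of the derivation through grounding.

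First I would dispatch the congruence-closure rules. Reflexivity, symmetry and transitivity are immediate, and for each contextual rule (closure under $\parallel$, $+$, $\new x.\hole$, and $\hole^\omega$) the inductive hypotheses give $\lground{L_i}{n} \congr \lground{L_i'}{n}$, after which applying the same construct to the grounded terms yields the goal; the $\hole^\omega$ case is handled by $B \congr B' \Rightarrow \lground{B^\omega}{n} = (\lground{B}{n})^{n} \congr (\lground{B'}{n})^{n} = \lground{B'^\omega}{n}$ using the compatibility of $\congr$ with $\hole^{n}$ noted above. Next I would check the structural axioms: $\alpha$-equivalence, associativity and commutativity of $\parallel$ and $+$ with unit $\zero$, the swap law $\new a.\new b.L \congr \new b.\new a.L$, the garbage law $\new a.\zero \congr \zero$, and scope extrusion $L_1 \parallel \new a.L_2 \congr \new a.(L_1 \parallel L_2)$ for $a \notin \freenames(L_1)$. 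In each of these, grounding neither creates nor erases restrictions, commutes with $\new$ and $\parallel$, and does not change free names, so the freshness side condition is preserved; hence every axiom instance is sent to the corresponding process-level instance of the \emph{same} axiom, giving $\lground{L}{n} \congr \lground{L'}{n}$.

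The main obstacle is the one genuinely new generating equation, $\out{M}^\omega \congr \out{M}$, which is the only place where grounding does not commute verbatim: here $\lground{\out{M}^\omega}{n} = (\out{M})^{n}$ is $n$ parallel copies of the output while $\lground{\out{M}}{n} = \out{M}$ is a single copy. To close this case I would observe that the active messages of a standard form are collected into the knowledge \emph{set} $\Gamma$, so that repeated outputs of one and the same message are identified: both $(\out{M})^{n}$ and $\out{M}$ have the standard form $\out{\set{M}}$, and hence $(\out{M})^{n} \congr \out{M}$ (this is the structural counterpart of the persistence fact \cref{lm:persistence-of-knowledge}, specialised to a single repeated message, where $\set{M}\union\set{M}=\set{M}$). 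This is precisely the step the earlier draft got stuck on: every other case is bookkeeping showing grounding commutes with the constructors, whereas this case is where the $\omega$-specific law forces us to use that duplicated outputs of the same message collapse in the standard-form representation, so that the two groundings remain congruent.
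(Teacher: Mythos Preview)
Your approach (induction on the derivation of $L \congr L'$) is the natural one and is essentially what the paper's one-line ``straightforward structural induction on~$L$'' is gesturing at; your treatment of the closure rules and the usual structural axioms is fine and more detailed than the paper's.

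There is, however, a genuine gap in your handling of the limit-specific law $\out{M}^\omega \congr \out{M}$. You need $(\out{M})^{n} \congr \out{M}$, and you justify it by saying that active messages in standard form are collected into a \emph{set}~$\Gamma$, citing \cref{lm:persistence-of-knowledge} as ``the structural counterpart''. But \cref{lm:persistence-of-knowledge} establishes $\out{\Gamma}\parallel\out{\Gamma}\kcongr\out{\Gamma}$, i.e.\ a fact about \emph{knowledge} congruence~$\kcongr$, not about structural congruence~$\congr$. Inspecting the paper's definition of~$\congr$ (associativity, commutativity, unit for $\parallel$, the three $\new$-laws, $\alpha$-equivalence, plus the single added limit law $\out{M}^\omega\congr\out{M}$), there is no idempotence axiom for outputs, so $(\out{M})^n \congr \out{M}$ does not follow from the stated rules. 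The fact that the paper \emph{notates} standard forms via a set~$\Gamma$ is not itself a congruence law; it is an abbreviation.

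In other words, the only case you flag as ``the main obstacle'' is precisely where your argument does not go through with the axioms as written: either one must assume (as the paper appears to tacitly do when it writes $\out{\Gamma}$ with $\Gamma$ a set) that $\out{M}\parallel\out{M}\congr\out{M}$ is part of structural congruence, or the conclusion of the lemma should be weakened to~$\kcongr$. The paper's own proof sketch does not address this point, so you have not missed an idea the paper supplies; but you should be explicit that closing this case requires output idempotence as an additional structural law, rather than presenting it as a consequence of \cref{lm:persistence-of-knowledge}.
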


  \begin{proof}
    The proof is a straightforward structural induction on $L$.
  \end{proof}

  We introduce a refinement of grounding and a function folding the right hand side but preserving the knowledge embedding.
  The \pre(n,k,m)-th grounding takes a limit and unfolds each $\omega$
  $n$ times for the outer $k$ nested levels of $\omega$,
  and $m$ times for the inner ones.

  \begin{definition}[Step-indexed grounding]
  \label{def:step-indexed-exp}
    For a limit $L$ in standard form,
    we define the \pre(n,k,m)-th grounding of $L$ to be the process
    $\lground{L}{n,k,m}$ recursively:
    \[
      \lground{L}{n,k,m} \is
        \begin{cases}
          L
                                & \text{if $L$ is sequential or }\zero\\
          \lground{L_1}{n,k,m} \parallel \lground{L_2}{n,k,m}
                                & \text{if } L = L_1 \parallel L_2\\
          \new x.(\lground{L'}{n,k,m})
                                & \text{if } L = \new x.L'\\
          \left(\lground{B}{n,k-1,m}\right)^{n}
                                & \text{if } L = B^\omega \land k > 0\\
          \left(\lground{B}{m}\right)^{m}
                                & \text{if } L = B^\omega \land k = 0
        \end{cases}
    \]
  \end{definition}

  \begin{definition}[$\omega$-height]
  For a limit L, we define the $\omega$-height as follows:
    \[
      \height(L) \is
        \begin{cases}
          0
                                & \text{if $L$ is sequential or }\zero\\
          \max{(\height(R_1), \cdots, \height(R_n))}
                                & \text{if }
                        L = R_1 \parallel \cdots \parallel R_n\\
          \height(L')
                                & \text{if } L = \new x.L'\\
          \height(B) + 1
                                & \text{if } L = B^\omega
        \end{cases}
    \]
  \end{definition}

  \begin{lemma}
  \label{lm:rho-index-height}
Let $L$ be a limit and $m, n, k \in \Nat$.
   If $k \geq \height(L)$,
   then $\lground{L}{n,k,m} = \lground{L}{n}$.
  \end{lemma}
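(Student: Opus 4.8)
The plan is to prove the statement by structural induction on the limit $L$, following exactly the recursive cases appearing in \cref{def:step-indexed-exp} and in the definitions of $\lground{\hole}{n}$ and $\height(\hole)$. The induction hypothesis I would use is: for every proper sub-limit $L''$ of $L$ and all $m',n',k' \in \Nat$ with $k' \geq \height(L'')$ we have $\lground{L''}{n',k',m'} = \lground{L''}{n'}$.

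First I would dispatch the easy cases. If $L$ is sequential or $\zero$, both $\lground{L}{n,k,m}$ and $\lground{L}{n}$ are equal to $L$ by definition, so there is nothing to show. If $L = L_1 \parallel L_2$, then $\height(L) = \max(\height(L_1),\height(L_2))$, so the hypothesis $k \geq \height(L)$ gives $k \geq \height(L_i)$ for $i\in\set{1,2}$; since both groundings distribute over $\parallel$, applying the induction hypothesis to $L_1$ and $L_2$ yields the claim. The restriction case $L = \new x.L'$ is analogous: $\height(\new x.L') = \height(L')$, both groundings commute with $\new x.\hole$, and the induction hypothesis applies to $L'$ with the same index $k$.

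The only case requiring a little attention is $L = B^\omega$. Here $\height(L) = \height(B) + 1 \geq 1$, so the assumption $k \geq \height(L)$ forces $k > 0$; hence $\lground{B^\omega}{n,k,m}$ is computed by the fourth clause of \cref{def:step-indexed-exp}, namely $(\lground{B}{n,k-1,m})^{n}$, rather than the $k=0$ clause. Moreover $k \geq \height(B)+1$ gives $k-1 \geq \height(B)$, so the induction hypothesis applies to $B$ with index $k-1$, yielding $\lground{B}{n,k-1,m} = \lground{B}{n}$, and therefore $\lground{B^\omega}{n,k,m} = (\lground{B}{n})^{n} = \lground{B^\omega}{n}$ by definition of $\lground{\hole}{n}$ on iterated limits. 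I do not expect a genuine obstacle; the single point worth stating explicitly is that $k \geq \height(L)$ simultaneously selects the ``$k>0$'' branch of the step-indexed grounding at every iterated sub-limit encountered and leaves enough budget ($k-1 \geq \height(B)$ at each descent through an $\omega$) to invoke the hypothesis, so the parameter $m$ is never consumed.
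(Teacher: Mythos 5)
Your proof is correct and is essentially the same argument as the paper's: the paper compresses it into the one-line observation that $k$ only decreases when descending under an $\omega$, so under the hypothesis $k \geq \height(L)$ the $k=0$ clause is never reached and the two groundings coincide. Your structural induction merely spells out that observation case by case, with the $B^\omega$ case carrying the whole content exactly as you identify.
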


  \begin{proof}
   The parameter $k$ only decreases when recursing into a limit under $\omega$.
   If $k \geq \height(L)$ the last case of the definition will never apply,
   which makes the definition coincide with the one of \pre n-grounding.
  \end{proof}

  The idea of the following parametrised function is to fold an $m$-grounding to an $n$-grounding up to a certain $\height$ $k$ of the limit.
  Since we want to be very specific about the domains, we define some sets of groundings.

  \begin{definition}[Set of groundings]
   Let $L \in \Limits$ and $m, n, k \in \Nat$. \\
   We define the following two sets of processes:
    \begin{itemize}
    \item $\mathcal{R}_L^m       \is \set{\lground{\stdf(L)}{m}}$
    \item $\mathcal{R}_L^{n,k,m} \is \set{\lground{\stdf(L)}{n,k,m}}$
   \end{itemize}
  \end{definition}

  \begin{definition}
  \label{def:folding-function}
   Let $L$ be a limit in recursive standard form and $m,k,n \in \Nat$.
   The function $\fold_{k,L}^{n,m} : \mathcal{R}_L^m \to \mathcal{R}_L^{n,k,m}$ and is parametrised in all four variables.

   \[
    \fold_{k, P}^{n, m}(P) \is
    \begin{cases}
        P
            \CASE k=0 \\
        \new \vec{x}.\left(\out{\Gamma} \parallel
           Q \parallel
           \Parallel_{j\in J}
           \left(
           \fold_{k,L_j}^{n,m}(\lground{L_j^\omega}{m})
           \right)
           \right)
            \CASE k > 0 \; \land \; P = \new \vec{x}.(\out{\Gamma} \parallel Q \parallel \Parallel_{j \in J}L_j^\omega) \\
        \left( \fold_{k-1,L}^{n,m}(\lground{L}{m}) \right)^n
             \CASE k > 0 \; \land \; P = \lground{L^\omega}{m}

    \end{cases}
   \]

\end{definition}
  We may omit the parameters $n$ and $m$ in the following if obvious from context as they do not change over the process of folding.

  \begin{lemma}[Folding is sound wrt.~knowledge embedding]
  \label{lm:fold-sound-kembed}
   Let $L_1, L_2$ be two limits in standard form
   $L_i = \new \vec{x}_i.
   (\out{\Gamma_i} \parallel Q_i \parallel R_i)$ with
   $n = \card{\vec{x_1}} + \card{Q_1} + 1$.
   If there is a $m \in \Nat$ such that $m > n$ and
   $\lground{L_1}{0} \kembed \lground{L_2}{m}$,
   then $\forall k: \lground{L_1}{0} \kembed \fold_{k,L_2}^{n,m}(\lground{L_2}{m})$.
  \end{lemma}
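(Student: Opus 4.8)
The plan is to argue by induction on $k$, peeling one level of $\omega$-unfoldings at a time. For $k=0$ the map $\fold_{0,L_2}^{n,m}$ is the identity on $\lground{L_2}{m}$ (first case of \cref{def:folding-function}), so the conclusion is exactly the hypothesis $\lground{L_1}{0} \kembed \lground{L_2}{m}$. For the step, the key structural observation is that $\fold_{k,L_2}^{n,m}(\lground{L_2}{m})$ differs from $\fold_{k-1,L_2}^{n,m}(\lground{L_2}{m})$ only at the $\omega$-groups occurring at nesting level $k$: at each such group the former keeps only $n$ of the $m$ identical ground copies kept by the latter, and everything else — including the interior of those copies — is unchanged. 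Which $n$ copies one keeps is immaterial up to $\congr$, since all copies of an $\omega$-group are $\aeq$-variants of one another (and grounding commutes with $\congr$ by \cref{lm:congr-exp}).

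Now assume, by the induction hypothesis, $\lground{L_1}{0} \kembed \fold_{k-1,L_2}^{n,m}(\lground{L_2}{m})$, and fix a witness for this embedding. Since grounding at $0$ deletes every iterated component, $\lground{L_1}{0} \congr \new \vec{x}_1.(\out{\Gamma_1} \parallel Q_1)$, so unfolding the definition of $\kembed$ the witness identifies at most $\card{\vec{x}_1}$ restriction scopes and exactly $\card{Q_1}$ process calls of the right-hand side, i.e.\ at most $\card{\vec{x}_1} + \card{Q_1} = n-1$ syntactic components in all. Each of these components sits inside at most one of the $m$ copies of any given level-$k$ $\omega$-group; hence at every level-$k$ $\omega$-group at most $n-1 < n$ of its copies are touched by the witness. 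Reduce every level-$k$ $\omega$-group by keeping the (at most $n-1$) touched copies, padding with untouched ones up to exactly $n$, and discarding the remaining $m-n$ (possible as $m>n$). By the structural observation the result is $\congr \fold_{k,L_2}^{n,m}(\lground{L_2}{m})$, and since no touched component was removed the restriction- and process-call-part of the witness carries over verbatim.

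It remains to re-establish the knowledge part of the embedding. Write $\Gamma$ for the active messages of the retained process and $\Gamma'$ for those of the discarded copies; the witness yields $\Gamma_1 \kleq \Gamma,\Gamma'$ and we must show $\Gamma_1 \kleq \Gamma$, i.e.\ $\Gamma \deriv M$ for every $M \in \Gamma_1$ (\cref{lm:locality}). Each discarded copy is an $\aeq$-variant, with freshly chosen bound names, of some retained copy, so its active knowledge is a renaming of retained knowledge whose target names are disjoint from $\names(\Gamma)$; moreover $\names(M) \subseteq \names(\Gamma_1)$ is contained in the retained structure, hence in $\names(\Gamma)$. Applying \cref{cor:deriv-variant-idemp} once per discarded copy then rewrites $\Gamma,\Gamma' \deriv M$ into $\Gamma \deriv M$, which closes the step and the induction.

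The step I expect to be the main obstacle is the last paragraph: one has to choose the standard forms of $\lground{L_2}{m}$ and of the retained process precisely enough that the active knowledge of every discarded copy is genuinely a renaming, with entirely fresh target names, of the active knowledge of a retained copy, and that $\names(\Gamma_1)$ really is confined to the retained structure — so that, when \cref{cor:deriv-variant-idemp} is iterated, all of its disjointness and name-containment side conditions are met. The remaining bookkeeping (tracking which ground copy each component pulled out by the witness lies in, and commuting grounding with $\congr$) is routine.
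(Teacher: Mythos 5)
Your proposal is correct and follows essentially the same route as the paper's proof: induction on $k$, viewing the two foldings as the same multi-hole context filled with $m$ versus $n$ ground copies, bounding the number of "touched" copies per group by $\card{\vec{x}_1}+\card{Q_1}=n-1$ to handle restrictions and process calls, and discharging the knowledge part by iterating \cref{cor:deriv-variant-idemp} one discarded copy at a time. The side conditions you flag as the main obstacle are exactly what the paper verifies (by splitting $\vec{x}_1$ into context-matched and hole-matched names and choosing the discarded branch to avoid the latter), so nothing essential is missing.
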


  \begin{proof}
  We prove the claim by induction on $k$.

  For k = 0, the claim trivially follows as
  $\fold_{0,L_2}^{n,m}(\lground{L_2}{m}) = \lground{L_2}{m}$
  by the assumption that
  $\lground{L_1}{0} \kembed \lground{L_2}{m}$
  holds.

  For the induction step, we assume that
  $\lground{L_1}{0} \kembed \fold_{k,L_2}^{n,m}(\lground{L_2}{m})$
  and prove
  \[
   \lground{L_1}{0} \kembed \fold_{k+1,L_2}^{n,m}(\lground{L_2}{m}).
  \]
  By definition of folding, both $\fold_{k, L}(\lground{L_2}{m})$ and $\fold_{k+1, L}(\lground{L_2}{m})$ are folding in exactly the same way up to the $k$-th recursive calls, i.e.~calls in which $k$ decreases.
  This means that up to the calls to $\fold_{0,L'}$ and $\fold_{1,L'}$,
  both $\fold_{k, L}(\lground{L_2}{m})$ and $\fold_{k+1, L}(\lground{L_2}{m})$
  will have constructed the same context $C[-, \cdots, -]$
  around these final calls.
We thus characterise $A = \fold_{k,L_2}^{n,m}(\lground{L_2}{m})$,
  and $B = \fold_{k+1,L_2}^{n,m}(\lground{L_2}{m})$
  as follows:
  \begin{align*}
    A &= C[\fold_{0,F_1^\omega}^{n,m}(F_1^\omega), \cdots,
            \fold_{0,F_j^\omega}^{n,m}(F_j^\omega)] \\ &=
          C[\lground{F_1^\omega}{m}, \cdots,
            \lground{F_j^\omega}{m}] =
          C[(\lground{F_1}{m})^m, \cdots,
            (\lground{F_j}{m})^m] \\
     B &= C[\fold_{1,F_1^\omega}^{n,m}(F_1^\omega), \cdots,
            \fold_{1,F_j^\omega}^{n,m}(F_j^\omega)] =
          C[(\lground{F_1}{m})^n, \cdots,
            (\lground{F_j}{m})^n]
  \end{align*}
  Recall that
  $L_1 = \new \vec{x}_1. (\out{\Gamma_1} \parallel Q_1 \parallel R_1)$
  and
  hence
  $\lground{L_1}{0} =
         \new \vec{x}_1. (\out{\Gamma_1} \parallel Q_1)$.
  By assumption, we have
  \begin{equation}
  \new \vec{x}_1. (\out{\Gamma_1} \parallel Q_1) \kembed
  \fold_{k,L_2}^{n,m}(\lground{L_2}{m}) = A
  \label{eq:fold-sound-kembed-i}
  \end{equation}
  and want to prove that
  \begin{equation}
  \new \vec{x}_1. (\out{\Gamma_1} \parallel Q_1) \kembed
  \fold_{k+1,L_2}^{n,m}(\lground{L_2}{m}) = B.
  \label{eq:fold-sound-kembed-ii}
  \end{equation}
  Intuitively, we have to find a way to preserve the knowledge embedding from
  \eqref{eq:fold-sound-kembed-i} when removing some branches in the holes of the context to get from $A$ to $B$.
  Let us show what $A$ and $B$ look like explicitly with their context:
  \newcommand{\holesA}{D}
  \newcommand{\holesB}{E}
  \begin{align*}
    A & \congr \new \vec{c}.(\out{\Gamma_c} \parallel
                Q_c \parallel \holesA)
    &&\text{ with }
    \holesA = (\lground{F_1}{m})^m
             \parallel \cdots \parallel
             (\lground{F_j}{m})^m
                \\
    B & \congr \new \vec{c}.(\out{\Gamma_c} \parallel
                Q_c \parallel \holesB)
    \quad
    &&\text{ with }
    \holesB = (\lground{F_1}{m})^n
             \parallel \cdots \parallel
             (\lground{F_j}{m})^n
  \end{align*}
  We call $\Gamma_c$ the knowledge of the context.

  Our goal is to show that reducing the number of iterations in each of $\holesA$'s parallel components does not affect the knowledge embedding described in $(i)$ and thereby obtain $(ii)$.
  Let us first consider names and process calls.
  We can split $\lground{L_1}{0}$ in the following way:
  $\lground{L_1}{0} =
         \new \vec{x}_1. (\out{\Gamma_1} \parallel Q_1)
       =
         \new\vec{y}. \new\vec{z}. (\out{\Gamma_1} \parallel Q_y \parallel Q_z)$
  where $\vec{y} \subseteq \vec{c}$ and $Q_y \subseteq Q_c$
  .
  The intention is to distinguish names and process calls that are already matched in the context.
  This is why we do not require all process calls that are only using names from $\vec{y}$ to be in $Q_y$ but some of them might be in $Q_z$.
  The goal follows with the following claim immediately.

  \textbf{Claim I:} In every hole of context $C[-, \cdots, -]$,
  we can reduce the number of branches to at most $n$, i.e.~for every $1 \leq l \leq j$ we can have a grounding
  $\lground{F_l}{n}$ instead of $\lground{F_j}{m}$.

  Proof of Claim I.
  Towards a contradiction, assume that there is a hole in which we cannot remove $m-n$ branches.
  W.l.o.g.\ let $\lground{L_l}{m}$ for $1 \leq l \leq j$ be the sublimit in this hole.
  There might be three reasons for this: names, process calls and knowledge.

  Considering the names and process calls, we know that
  $\card{\vec{x_1}} + \card{Q_1}~<~n$ by definition and hence
  $\card{\vec{z}} + \card{Q_z}~<~n$.
  Now, we investigate the components that $\vec{z}$ and $Q_z$ are matched to.
  In the worst case, all of them are mapped to this hole but we still delete $m-n$ branches that are not used to match the names $\vec{z}$.
    For the process calls in $Q_z$, we have to distinguish two cases.
  First, if a process call uses any name from $\vec{z}$, it is fine as we will leave them anyway.
  Second, if a process call does not use any name from $\vec{z}$, it is fine to delete this branch as there will be enough copies in the remaining branches to cover this process call.

  It remains to reason about knowledge.
  We make the knowledge of $\holesA$, i.e.~the one having budget $k$, explicit:
  $\stdf(\holesA) = \new\vec{a}.(\Gamma_m \parallel \cdots)$
   and factor out the knowledge from sublimit
  $\lground{F_l}{m}$:
  $\Gamma_m = \Gamma_m', \Gamma_{l,m}$
  so that $\Gamma_{l,m}$ was the knowledge obtained through
  $\lground{F_l}{m}$.
For knowledge, we have to prove that
  $\Gamma_1 \kleq \Gamma_c, \Gamma_m', \Gamma_{l,n}$.
  By \cref{lm:locality},
  we need to prove that for every message $M\in\Gamma_1$,
  $\Gamma_c, \Gamma_m', \Gamma_{l,n} \deriv M$ given that
  $\Gamma_c, \Gamma_m', \Gamma_{l,m} \deriv M$.
  The idea now is to reduce the knowledge from $\Gamma_{l,m}$ to $\Gamma_{l,n}$ by \cref{cor:deriv-variant-idemp}, which is a corollary of the absorbing intruder.
  Let us define $\Gamma_{c,m}' = \Gamma_c, \Gamma_m'$ indicating the context of the hole we are considering.
  As $m$ might be bigger than $2n$, we have to iterate the process of reducing the number of branches.
  Hence, we generalise the notation of $\Gamma_{l,m}$ and $\Gamma_{l,n}$ in the following way:
  $(\lground{F_l}{m})^i
   \congr
   \new\vec{a_i}.(\out{\Gamma_{l,i}} \parallel \cdots)$.

  \textbf{Claim II:} $\forall m > n$,
  $\Gamma_{l,m} \deriv M \implies \Gamma_{l,m-1} \deriv M$.

  Proof of Claim II.
  For convenience, we rename $\Gamma_{l,i}$ to $\Lambda_i$.
  The main observation is that we can split the knowledge $\Lambda_m$ into $\Lambda_{m-1}$ and a remainder $\Lambda'$.
  We can choose a branch which does not use names from $\vec{z}$ to contribute to $\Lambda'$.
  Since we know that $n \geq 1$, we know that $m > 1$ by assumption.
  Therefore, we can split $\Lambda_{m-1}$ again and obtain the knowledge stemming from one branch which we call $\Lambda''$.
  Let us recall the assumption and goal after these rewriting steps:
  Given that
  \begin{equation}
    \Gamma_{c, m}', \Lambda_{m-2}, \Lambda', \Lambda'' \deriv M
    \label{eq:fold-sound-kembed-iii}
  \end{equation}
  holds,
  we want to prove that $\Gamma_{c, m}', \Lambda_{m-2}, \Lambda'' \deriv M$.
  Let $\vec{w}'$ and $\vec{w}''$ be the names only used in $\Lambda'$ and $\Lambda''$ respectively so that:
  \begin{equation}
    \vec{w}' \inters \vec{w}'' = \emptyset
  \text{ and }
   \names(\Gamma_{c, m}', \Lambda_{m-2}) \inters \vec{w}' = \emptyset =
   \names(\Gamma_{c, m}', \Lambda_{m-2}) \inters \vec{w}''
   \label{eq:fold-sound-kembed-iv}
  \end{equation}
  $\Lambda'$ and $\Lambda''$ have been obtained from a branch of the same sublimit, so we can infer that
  \begin{equation}
    \Lambda'' = \Lambda'\subst{\vec{w}' -> \vec{w}''}.
    \label{eq:fold-sound-kembed-v}
  \end{equation}
  Notice that $\vec{w}' \inters \vec{z} = \emptyset$ by the fact how we have chosen the branch for $\Lambda'$.
  Furthermore, $\vec{w}' \inters \vec{y} = \emptyset$ by
  $\vec{y} \subseteq \vec{c}$.
  Combining these observations, we get that
  $ \vec{x_1} \inters \vec{w}' = \emptyset $.
By the fact that $M \in \Gamma_1$, we have
  $ \names(M) \subseteq \names(\Gamma_1)
             \subseteq \vec{x_1} $.
  Therefore, $\names(M) \inters \vec{w'} = \emptyset$ which implies
  that
  \begin{equation}
    \names(M) \subseteq \names(\Gamma_{c, m}', \Lambda_{m-2}, \Lambda'')
    \label{eq:fold-sound-kembed-vi}
  \end{equation}
  Facts \eqref{eq:fold-sound-kembed-iii}~to~\eqref{eq:fold-sound-kembed-vi}
  fulfil the conditions for
  \cref{cor:deriv-variant-idemp}, resulting in
  ${\Gamma_{c, m}', \Lambda_{m-2}, \Lambda'' \deriv M}$ which reads
  $\Gamma_{c, m}', \Lambda_{m-1} \deriv M$ when folding back which is the goal of Claim~II.
  By this, we have shown that we can remove $m - n$ branches which leads to a contradiction which is why Claim I holds.
  As $\lground{F_l}{m}$ was chosen arbitrarily, we have shown that we can remove $m-n$ branches in every hole of the context
  $C[-, \cdots, -]$
  which concludes this proof.
\end{proof}

  \begin{corollary}
   \label{cor:n-is-enough-condition}
   Let $L_1, L_2$ be two limits with $\stdf(L_1) =
   \new \vec{x}_1.(\out{\Gamma_1} \parallel Q_1 \parallel R_1)$
   and
   $\sem{L_1} \subseteq \sem{L_2}$.
   Then, $\stdf(\lext{L_2}{n}) \aeq
         \new\vec{x}_1,\vec{x}_2.(
         \out{\Gamma_2} \parallel
         Q_1 \parallel
         Q_2 \parallel
         R_2)
     \;\text{ and }\;
     \Gamma_1 \kleq \Gamma_2.$
   for $n = \card{\vec{x}_1} + \card{Q_1} + 1$.
  \end{corollary}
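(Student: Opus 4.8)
The statement is precisely the completeness direction for condition~\cond{A} of \cref{th:char-lim-incl}, and the plan is to extract it from a single, very cheap instance of the assumed inclusion, fed through the folding machinery built up for \cref{lm:fold-sound-kembed}. First I would instantiate $\sem{L_1}\subseteq\sem{L_2}$ at the ground process $\lground{L_1}{0}$: by \cref{lm:lground-in-sem} it lies in $\sem{L_1}$, and since grounding at~$0$ collapses every iterated component of $L_1$ to $\zero$, its standard form is $\new\vec x_1.(\out{\Gamma_1}\parallel Q_1)$ (after dropping those names of $\vec x_1$, if any, that occur only inside $R_1$). From $\lground{L_1}{0}\in\sem{L_2}$ and \cref{lm:lground-kembed} we obtain an $m$ with $\lground{L_1}{0}\kembed\lground{L_2}{m}$, and by \cref{lm:less-lground-kembed} we may assume $m>n=\card{\vec x_1}+\card{Q_1}+1$.

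The crux is then \cref{lm:fold-sound-kembed}, which from $m>n$ and $\lground{L_1}{0}\kembed\lground{L_2}{m}$ yields $\lground{L_1}{0}\kembed\fold_{k,L_2}^{n,m}(\lground{L_2}{m})$ for \emph{every} $k$. I would choose $k=\height(L_2)$, so that the folding unfolds every $\omega$ exactly $n$ times: $\fold_{k,L_2}^{n,m}(\lground{L_2}{m})=\lground{\stdf(L_2)}{n,k,m}$, which equals $\lground{L_2}{n}$ by \cref{lm:rho-index-height}, which in turn equals $\lground{\lext{L_2}{n}}{0}$ by \cref{lm:ext-n-ground-0-eq-ground-n}. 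Hence $\new\vec x_1.(\out{\Gamma_1}\parallel Q_1)\kembed\lground{\lext{L_2}{n}}{0}$, i.e.\ $\lground{L_1}{0}$ embeds into the non-iterated part of $\lext{L_2}{n}$.

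It remains to read off condition~\cond{A} from this single embedding, using the standard-form characterisation of $\kembed$ established in the proof of \cref{prop:kn-embed-vs-kn-congr}: $P_1\kembed P_2$ holds iff, for suitable $\alpha$-renamings, the restriction list of $\stdf(P_2)$ extends that of $\stdf(P_1)$, the process calls of $\stdf(P_1)$ are among those of $\stdf(P_2)$, and the knowledge of $\stdf(P_1)$ is $\kleq$ that of $\stdf(P_2)$. Applying this with $P_1=\lground{L_1}{0}$ and $P_2=\lground{\lext{L_2}{n}}{0}$, and then re-attaching (by scope extrusion) the iterated components $R_2$ that extension leaves untouched --- which do not affect the top-level knowledge $\Gamma_2$ --- gives exactly $\stdf(\lext{L_2}{n})\aeq\new\vec x_1,\vec x_2.(\out{\Gamma_2}\parallel Q_1\parallel Q_2\parallel R_2)$ together with $\Gamma_1\kleq\Gamma_2$.

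I expect essentially all the difficulty to sit in \cref{lm:fold-sound-kembed}, whose proof (Claims~I and~II) is where the absorbing-intruder hypothesis enters. The key observation is that, having only $\card{\vec x_1}+\card{Q_1}<n$ names and process calls of $L_1$ to accommodate, each $\omega$-hole can be truncated to $n$ branches without breaking the embedding: the surviving $n$ branches already carry copies covering every needed name and call, and the knowledge contributed by a discarded branch is an $\alpha$-variant of that of a retained one, hence redundant for deriving any message with names in $\names(\Gamma_1)$ --- which is exactly \cref{cor:deriv-variant-idemp}, the corollary of \cref{def:absorption}, combined with \cref{lm:locality}. A secondary, fiddlier point is the bookkeeping for names of $\vec x_1$ occurring only in $R_1$ (which vanish in $\lground{L_1}{0}$): these must be matched to restrictions of $\lext{L_2}{n}$ by hand, relying on $\sem{L_1}\subseteq\sem{L_2}$ forcing $\lext{L_2}{n}$ to carry at least the nesting structure of $L_1$, so that its restriction list can still be partitioned as $\vec x_1\vec x_2$ with the first block an $\alpha$-copy of $\vec x_1$.
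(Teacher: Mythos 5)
Your proposal is correct and follows essentially the same route as the paper's proof: instantiate the inclusion at $\lground{L_1}{0}$ to obtain an embedding into some $\lground{L_2}{m}$, apply \cref{lm:fold-sound-kembed} with $k=\height(L_2)$, and collapse via \cref{lm:rho-index-height} and \cref{lm:ext-n-ground-0-eq-ground-n} to read off condition \cond{A} from the fixed part of $\lext{L_2}{n}$. The only (harmless) difference is that you route the first step through \cref{lm:lground-in-sem} and \cref{lm:lground-kembed} where the paper invokes \cref{lm:lground-inc-indices}, and you make explicit the $m>n$ adjustment and the bookkeeping for names occurring only in $R_1$.
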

  \begin{proof}
   First, we show that
   $\lground{\stdf(L_1)}{0} \kembed \lground{L_2}{n}$.
   By $\sem{L_1} \subseteq \sem{L_2}$, we know that
   there is an $m$ so that
   $\lground{\stdf(L_1)}{0} \kembed \lground{L_2}{m}$
   by \cref{lm:lground-inc-indices}.
   From \cref{lm:fold-sound-kembed}, we obtain that
   that $\lground{\stdf(L_1)}{0} \kembed
   \fold_{k,L_2}^{n,m}(\lground{L_2}{m})$ for every $k$.
   Substituting $\height(L_2)$ for $k$ leads to
   $\fold_{\height,L_2}^{n,m}(\lground{L_2}{m})$.
   This is $\lground{L_2}{n}$ by \cref{lm:rho-index-height}.
   Using this knowledge embedding, we get
   $\lground{L_2}{n} \aeq
   \new \vec{x}_1,\vec{x}_2.(\out{\Gamma_2} \parallel Q_1 \parallel Q_2)$ with $\Gamma_1 \kleq \Gamma_2$.
   With \cref{lm:ext-n-ground-0-eq-ground-n}, we observe that
   $\lground{L_2}{n} = \lground{\lext{L_2}{n}}{0}$.
   By this, the claim follows as $\lground{-}{0}$ just omits the iterated parts, i.e.~$R_2$, from $\stdf(\lext{L_2}{n})$.
  \end{proof}

  \begin{lemma}[Necessary Conditions for Inclusion]
  \label{lem:nec-conds-incl}
   Let $L_1$ and $L_2$ be two limits,
 with
 $
   \stdf(L_1) \aeq \new\vec{x}_1.(
     \out{\Gamma_1} \parallel
     Q_1 \parallel
     R_1
   )
 $ with
 $R_1 =  \Parallel_{i\in I} B_i^\omega$,
 and let $n = \card{\vec{x}_1} + \card{Q_1} + 1 $.
   Given that the inclusion
   $\sem{L_1} \subseteq \sem{L_2}$,
   both conditions \cond{A} and \cond{B} hold.
  \end{lemma}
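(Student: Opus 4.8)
The plan is to prove the two conditions separately, leaning on the machinery already in place. Condition \cond{A} is exactly \cref{cor:n-is-enough-condition}, so I would simply invoke it: from $\sem{L_1}\subseteq\sem{L_2}$ it delivers both the decomposition $\stdf(\lext{L_2}{n})\aeq\new\vec{x}_1,\vec{x}_2.(\out{\Gamma_2}\parallel Q_1\parallel Q_2\parallel R_2)$ and $\Gamma_1\kleq\Gamma_2$ (this is where the absorbing-intruder hypothesis is consumed, via \cref{lm:fold-sound-kembed}). What then remains is condition \cond{B}, i.e.\ $\sem{\out{\Gamma_1}\parallel\Parallel_{i\in I}B_i}\subseteq\sem{\out{\Gamma_2}\parallel R_2}$.

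For \cond{B} I would first apply \cref{lm:lground-inc-indices} to reduce to: for every $n_1$ there is $m_2$ with $\out{\Gamma_1}\parallel\Parallel_{i\in I}\lground{B_i}{n_1}\kembed\out{\Gamma_2}\parallel\lground{R_2}{m_2}$. Fix $n_1$ and write $G\is\out{\Gamma_1}\parallel\Parallel_{i\in I}\lground{B_i}{n_1}$; after discarding any iterated component of shape $\out{M}^\omega$ from $\stdf(L_1)$ (it is absorbed into $\Gamma_1$ by $\out{M}^\omega\congr\out{M}$), every $B_i$ is a process call or a restriction, so $G$ contains at least one restriction or process call. For a parameter $t$ fixed below I would form the witness $S_t\is\new\vec{x}_1.(\out{\Gamma_1}\parallel Q_1\parallel G^t)$; using \cref{lm:less-lground-kembed,lm:kembed-lground-proc,lm:persistence-of-knowledge} one checks $S_t\kembed\lground{L_1}{N}$ for $N\geq\max(n_1,t)$, hence $S_t\in\sem{L_1}\subseteq\sem{L_2}=\sem{\lext{L_2}{n}}$, and by \cref{lm:lground-kembed} there is $m'$ with $S_t\kembed\lground{\lext{L_2}{n}}{m'}\congr\new\vec{x}_1,\vec{x}_2,\vec{u}.(\out{\Gamma_2}\parallel Q_1\parallel Q_2\parallel\lground{R_2}{m'})$, where $\vec{u}$ collects the restrictions created by grounding $R_2$.

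The core is then a pigeonhole count on this embedding. Since $\kembed$ matches the restrictions of the smaller process as a prefix and its process calls as a sub-multiset, only up to $\congr$ (hence rigidly), each of the $t$ disjointly renamed copies of $G$ in $S_t$ is placed restriction-by-restriction and call-by-call into $\lground{\lext{L_2}{n}}{m'}$; call a copy \emph{escaping} if some component lands in the non-iterated fragment $\new\vec{x}_1,\vec{x}_2.(\dots\parallel Q_1\parallel Q_2)$ rather than in $\lground{R_2}{m'}$. As the matching is injective and that fragment has only $\card{\vec{x}_1}+\card{\vec{x}_2}+\card{Q_1}+\card{Q_2}$ slots — a number independent of $t$ and $m'$ — choosing $t$ beyond it leaves a non-escaping copy $c_0$, whose restrictions and calls sit entirely inside $\lground{R_2}{m'}$. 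After the corresponding $\alpha$-renaming this yields the structural half of $G\kembed\out{\Gamma_2}\parallel\lground{R_2}{m'}$; for the knowledge half, the knowledge of this single copy of $G$ together with $\Gamma_1$ is a subset of the knowledge of $S_t$, which is $\kleq$ that of $\lground{\lext{L_2}{n}}{m'}$, so monotonicity of $\kleq$ (\cref{lm:locality}) together with $\Gamma_1\kleq\Gamma_2$ from \cond{A} closes it. Taking $m_2\is m'$ finishes \cond{B}.

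The hard part will be making this pigeonhole rigorous: one must argue that an \emph{entire} grounded iterated sub-limit of $L_1$ — all of its restrictions and process calls together — embeds into the grounded iterated part $\lground{R_2}{m'}$, tracking which components belong to which copy (restrictions are genuinely distinct names, and a copy's calls either use a copy-local name or are common to all copies), and that the number of copies that fail to do so is bounded solely by the count of non-iterated components of $\lext{L_2}{n}$. This is exactly why the expansion threshold is what it is, and why delegating \cond{A} to \cref{cor:n-is-enough-condition} — where the absorption axiom and folding do the real work — is what makes the rest go through with mere monotonicity of knowledge.
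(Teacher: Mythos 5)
Your proposal is correct in substance and, for condition \cond{A}, identical to the paper's proof: both simply invoke \cref{cor:n-is-enough-condition}, where the folding construction and the absorption axiom do the real work. For \cond{B} you take a genuinely different, and arguably cleaner, route. The paper argues by contradiction: assuming \cond{B} fails for every \cond{A}-renaming, it uses \cref{lm:lground-inc-indices} to extract a uniform failure threshold $m_1'$, observes that knowledge cannot be what breaks the embedding (the top-level knowledge is replicated on both sides of \cond{B}, so a knowledge failure would contradict $\sem{L_1}\subseteq\sem{L_2}$ directly), and then derives a contradiction by a halving count: grounding $R_1$ at $n_1 = 2\cdot\max(\card{\vec{x}_2}+\card{Q_2}+1,\,m_1')$ forces at least half of its names and calls to be covered by $\lground{R_2}{m_2}$, since the fixed part of $\lext{L_2}{n}$ cannot absorb more than half. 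You instead argue directly: embed a large witness $S_t\in\sem{L_1}$ and pigeonhole on the finitely many non-iterated slots of $\lext{L_2}{n}$ to exhibit one entire copy of the grounded iterated part landing inside $\lground{R_2}{m'}$, then close the knowledge side by monotonicity exactly as the paper does. Both proofs rest on the same combinatorial fact (the fixed part of $\lext{L_2}{n}$ has size independent of the grounding index), so the difference is organisational rather than conceptual. The one loose end in your version is renaming coherence: the embedding of $S_t$ is only existentially given, so the induced identification of $\vec{x}_1$ and $Q_1$ with components of the target need not agree with the renaming fixed by \cond{A}, and could in principle vary with $n_1$; since \cond{B} must hold under the \emph{same} renaming as \cond{A}, you need an extra step (finiteness of the candidate renamings plus \cref{lm:less-lground-kembed}, or re-running the folding argument on the fixed part of $S_t$) to pin down a single renaming that works for all $n_1$. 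The paper sidesteps this by phrasing its contradiction hypothesis over all \cond{A}-renamings at once, though its own handling of the point is also informal.
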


  \begin{proof}
  For \cond{A}, the claim follows by \cref{cor:n-is-enough-condition} which implicitly gives a renaming for $L_2$. \\
  For \cond{B}, we want to show that
  $\sem{\out{\Gamma_1} \parallel \Parallel_{i\in I} B_i} \subseteq \sem{\out{\Gamma_2} \parallel R_2}$.\\
  Let $N_i = \out{\Gamma_i} \parallel R_i$.
  It is straightforward to see that
  $\sem{\out{\Gamma_1} \parallel \Parallel_{i\in I} B}
  \subseteq
  \sem{N_1}$.
  This is why it is enough to show that
  $\sem{N_1} \subseteq \sem{N_2}$
  by transitivity.\\
  Towards a contradiction, we assume that for all possible renaming so that \cond{A} is satisfied,
  $\sem{N_1} \not \subseteq \sem{N_2}$.
  By \cref{lm:lground-inc-indices},
  \begin{equation}
   \exists m_1', \forall m_1 \geq m_1', \forall m_2.
   \lground{N_1}{m_1}
    \not \kembed
   \lground{N_2}{m_2}.
   \label{eq:nec-cond-notkembed}
  \end{equation}
  First, knowledge could break the embedding.
  Let us define
  \[
    \stdf(\lground{N_i}{m_i}) =
    \new y_i.(\Gamma_i \parallel \Gamma_i' \parallel Q_i').
  \]
  Notice that $\names(\Gamma_i) \inters \vec{y_i} = \emptyset$.
  There might be two reasons why the knowledge embedding does not hold.

  First, the embedding breaks because of knowledge.
  This is impossible as $\Gamma_1, \Gamma_1'$ and
  $\Gamma_2, \Gamma_2'$ represent the knowledge of groundings of the two limits $L_1$ and $L_2$ as their top level knowledge is replicated in \cond{B}.
  Therefore, the inclusion $\sem{L_1} \subseteq \sem{L_2}$ would also break which is a contradiction.

  Second, names or process calls can hence be the only reasons why the knowledge embedding
  $\lground{N_1}{m_1'}
    \kembed
   \lground{N_2}{m_2}$ does not hold for any $m_2$.
  We will derive a contradiction by choosing
  $n_1 = 2 \cdot max(\card{\vec{x}_2} + \card{Q_2} + 1, m_1')$.
  We incorporate $\lground{R_1}{n_1}$ into $\stdf(L_1)$:
  $\lground{\stdf(L_1)}{n_1} =
  \new \vec{x}_1.(\Gamma_1 \parallel Q_1 \parallel
  \lground{R_1}{n_1})$.
  Recall that
  $\lground{\stdf(\lext{L_2}{n})}{m_2}
     \aeq \new\vec{x}_1,\vec{x}_2.(
         \out{\Gamma_2} \parallel
         Q_1 \parallel
         Q_2 \parallel
         \lground{R_2}{m_2}
       )$.
  By the size of $n_1$, at least half of the names and process calls of $\lground{R_1}{n_1}$ have to be covered by $\lground{R_2}{m_2}$ as the non-iterated part
  $\vec{x_2}$ and $Q_2$ cannot do more than half.
  But by definition $\frac{n_1}{2} \geq m_1'$ so
  $\forall m_2, \lground{R_1}{\frac{n_1}{2}} \not \kembed \lground{R_2}{m_2}$ as knowledge cannot be the reason for \eqref{eq:nec-cond-notkembed} to break.
  Altogether, this entails that there is a $n_1$ such that for all $m_2$:
  \begin{align*}
   \lground{\stdf(L_1)}{n_1} & =
  \new \vec{x}_1.(\Gamma_1 \parallel Q_1 \parallel
  \lground{R_1}{n_1})
  \\
  & \not \kembed
  \new\vec{x}_1,\vec{x}_2.(
         \out{\Gamma_2} \parallel
         Q_1 \parallel
         Q_2 \parallel
         \lground{R_2}{m_2}
       )
  \aeq
  \lground{\stdf(\lext{L_2}{n})}{m_2}
  \end{align*}
  Using \cref{lm:ext-same-sem}, this implies that
  $\lground{L_1}{n_1} \not \kembed \lground{L_2}{m_2}$ for every $m_2$.
  In turn, this entails that
  $\sem{L_1} \not \subseteq \sem{L_2}$ by
  \cref{lm:lground-inc-indices} which is a contradiction.
  \end{proof}

\section{Correctness of \texorpdfstring{$\posthat$}{posthat}}

\begin{definition}[$\beta(\Delta)$ and $b$]
Let $\Delta$ be a set of definitions.
We define $\beta(\Delta)$ and $\gamma(\Intruder)$ as follows:
 \begin{align*}
    \beta(\Delta) & \is
\max\Set{\card{\vec{x}} | (\p{Q}[\vec{y}] \is A + \inp{\vec{x}:M}.P + A') \in \Defs}
  \\
    \gamma(\Intruder) & \is
    \max\Set{\arity(\constr{f}) | \constr{f} \in \Sig)}
    \text{ with }
    \Intruder = (\Sig, \deriv)
  \end{align*}
  then $ b \is \beta(\Delta)  \cdot \gamma(\Intruder)^{s-1} + 1 $.
\end{definition}

Recall the definition of $\posthat$ from \cref{def:posthat}.
  Let
$\stdf(\lext{L}{b}) = \new \vec{x}.(\out{\Gamma} \parallel Q \parallel R)$,
  then
  \[
    \posthat_\Defs^s(L) \is
        \Set{\;
          \new \vec{y}.\bigl(\out{\Gamma'} \parallel Q' \parallel R\bigr)
          \;|\;
            \new \vec{x}.\bigl(\out{\Gamma} \parallel Q\bigr)
              \redto_\Defs
            \new \vec{y}.\bigl(\out{\Gamma'} \parallel Q'\bigr)
              \in \Size{s}
        \;}
  \]

  \begin{corollary}[Post of expansion is enough]
  \label{cor:post-exp-enough}
   Let $L$ be a limit and $P \in \sem{L}$. Then,
   for every $P_1 \in \post(P)$,
   there is a $P_2 \in \post(\lground{L}{n})$ for some $n \in \Nat$ such that
   $P_1 \kembed P_2$.
  \end{corollary}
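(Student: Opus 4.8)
The plan is to obtain the corollary as an immediate composition of two facts that have already been established earlier: that every instance of a limit is dominated (w.r.t.\ $\kembed$) by some grounding of that limit (\cref{lm:lground-kembed}), and that knowledge embedding is a simulation for $\redto$ (\cref{th:kembed-simulation}). No new combinatorial argument is needed; the role of this statement is purely to package those two results in the form in which the correctness proof of $\posthat$ will consume it.

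Concretely, I would fix a limit $L$, a process $P \in \sem{L}$, and a transition $P \redto P_1$ witnessing $P_1 \in \post(P)$. The first step is to apply \cref{lm:lground-kembed} to $P \in \sem{L}$, obtaining an index $n \in \Nat$ with $P \kembed \lground{L}{n}$. The second step is to instantiate \cref{th:kembed-simulation} with the transition $P \redto P_1$ and the embedding $P \kembed \lground{L}{n}$ (playing the role of $P'$); this yields a process $P_2$ such that $\lground{L}{n} \redto P_2$ and $P_1 \kembed P_2$. Since $\lground{L}{n} \redto P_2$, we have $P_2 \in \post(\lground{L}{n})$, and together with $P_1 \kembed P_2$ this is exactly the claim.

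I do not expect a genuine obstacle here: all the real work — the case analysis over the transition rules of \cref{fig:op-sem} showing that an embedding on the source lifts to one on the target — lives in the proof of \cref{th:kembed-simulation}, not in this corollary. The only subtlety worth flagging explicitly is that the witnessing index $n$ depends on $P$ (it is the number of unfoldings of $L$ needed to absorb $P$), which is why the statement quantifies $n$ existentially rather than uniformly; this is harmless for the intended use, where the absorption axiom will later be invoked to bound the relevant groundings uniformly.
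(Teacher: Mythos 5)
Your proposal is correct and matches the paper's own proof exactly: it applies \cref{lm:lground-kembed} to obtain $P \kembed \lground{L}{n}$ and then invokes the simulation property of $\kembed$ (\cref{th:kembed-simulation}) to transfer the transition to the grounding. Nothing further is needed.
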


  \begin{proof}
   As $P \in \sem{L}$, we know that there is a $n' \in \Nat$ such that $P \kembed \lground{L}{n'}$ by \cref{lm:lground-kembed}.
   We choose this $n'$ to be $n$.
   We have that $P \redto P_1$ and $P \kembed \lground{L}{n}$.
   As $\kembed$ is a simulation
   (\cref{th:kembed-simulation}),
   we know that there is a
   $P_2$ so that $\lground{L}{n} \redto P_2$ and
   $P_1 \kembed P_2$.
  \end{proof}
  Let us recall the theorem we want to prove:
for all $L\in \Limits_{s,k}$,
\begin{gather*}
  \posthat_\Defs^s(L) = \set{L_1,\dots,L_n}
    \implies
      \dwd{\bigl( \post(\sem{L}) \inters \Size{s} \bigr)} =
        \sem{L_1} \union \dots \union \sem{L_n}.
\end{gather*}

\begin{proof}[Proof of \cref{th:posthat-correct}]
 We assume that
 $\posthat_\Defs^s(L) = \set{L_1,\dots,L_n}$ and prove the
 set equality by inclusion in both directions.

 First, we show that
 $\dwd{\bigl( \post(\sem{L}) \inters \Size{s} \bigr)}
 \supseteq
 \sem{L_1} \union \dots \union \sem{L_n}$.
 It suffices to show that
 $ \sem{L_j} \subseteq
 \dwd{\bigl( \post(\sem{L}) \inters \Size{s} \bigr)} $
 for every $j \in \set{1, \cdots, n}$.
 We choose $j$ arbitrarily and prove the latter statement.
 By definition, we know that $L_j$ stems from at least one transition in the non-iterated part of
 $\stdf(\lext{L}{b}) =
 \new \vec{x}.(\out{\Gamma} \parallel Q \parallel R)$.
 Wlog~let
 $P_1 := \new \vec{x}.\bigl(\out{\Gamma} \parallel Q\bigr)
          \redto_\Defs
          \new \vec{y}.\bigl(\out{\Gamma'} \parallel Q'\bigr)
          =: P_2$
 be this transition in $\Size{s}$.
 Hence,
 $ L_j =
   \new \vec{y}.\bigl(\out{\Gamma'} \parallel Q' \parallel R\bigr).
 $
 By \cref{lm:ext-same-sem},
 $\sem{L} = \sem{\lext{L}{b}}$ and therefore
 $\lground{\new\vec{x}.(\out{\Gamma} \parallel Q \parallel R)}{n}
  =
  \new\vec{x}.(\out{\Gamma} \parallel Q \parallel
  \lground{R}{n})
  \in \sem{L}$ for every $n$.
 Because of $P_1 \redto P_2$, we know that
 $\lground{\new\vec{x}.(\out{\Gamma} \parallel Q \parallel R)}{n}
  \redto
  \lground{\new\vec{y}.(\out{\Gamma'} \parallel Q' \parallel R)}{n}$.
 This is why
 $\lground{L_j}{n} =
  \lground{\new\vec{y}.(\out{\Gamma'} \parallel Q' \parallel R)}{n} \in \post(\sem{L})$ for every $n$.
 With \cref{lm:lground-kembed}, we know that for every
 $P \in \sem{L_j}$, there is a $m$ so that
 $P \kembed \lground{L_j}{m}$.
 By downward-closure, we infer that
 $P \in \dwd{\bigl(\sem{\post(L)} \inters \Size{s}\bigr)}$.

 Second, we show that
 $\dwd{\bigl( \post(\sem{L}) \inters \Size{s} \bigr)}
 \subseteq
 \sem{L_1} \union \dots \union \sem{L_n}$.
 Let $\lground{L}{b}$ has standard form:
 $\stdf(\lground{L}{b}) \aeq
  \new\vec{x}_1.(\Gamma_1 \parallel \p Q[\vec{M}] \parallel C_1)$.
 By \cref{cor:post-exp-enough}, it suffices to consider only successors of groundings of $L$.
 Therefore, let $b < m \in \Nat$ with
 $\stdf(\lground{L}{m})
   \aeq
  \new\vec{x}_1, \vec{x_2}.(\out{\Gamma_1} \parallel
                            \out{\Gamma_2}\parallel
                 \p Q[\vec{M}] \parallel C_2)$.
 We consider the three different reduction rules that can be fired starting from this grounding:
 \begin{itemize}
  \item a principal is waiting for some message on a private channel which is derivable and the pattern can be matched by $\Gamma$
  \item a principal would like to send a message on some private channel which can be derived from the environment
  \item one principal sends a message on a private channel to another principal
 \end{itemize}
 We split the parts into three paragraphs.

 \subparagraph{Public private channel, message input}
 We have
 $\p Q[\vec{M}] := \inpc{a}{\vec{p} : N}.
                 P_1 + A $ for some private channel $a$ and action $A$.
 Wlog~we derive the message which is matched using some fresh intruder names $\vec{c}$:
 $\Gamma_1, \vec{c} \deriv N \subst{\vec{x} -> \vec{M}'}$
 and the channel name $a$ can be derived by assumption: $\Gamma_1, \Gamma_2 \deriv a$.
 Overall, we get the following transition:
 \[
  \lground{L}{m}
  \;
   \underset{\vec{p} \; \rightarrow \; \vec{M}'}{
        \xrightarrow{\p Q[\vec{M}] = \inpc{a}{\vec{p} : N}.P_1 + A}}
  \;
  \underbrace{\new \vec{x}_1,\vec{x}_2, \vec{c}.
    (\out{\Gamma_1} \parallel \out{\Gamma_2}
    \parallel \out{\vec{c}} \parallel
     P_1\subst{\vec{p} -> \vec{M}'} \parallel C_2)
   }_{Q'\mathrlap{{} \in \post^s(L)}}
 \]
 where the annotations explicitly state which process call and action was used with which substitution.
 We want to show that we can have the same reduction in
 $\lground{L}{b}$.
 We do so by using the $\fold$ function as in \cref{lm:fold-sound-kembed}
 to relate the difference between iterating $k$ times and $k+1$ times.

 \textbf{Claim~I:}
For every $k \in \Nat$,
  $\exists \vec{y}_k, \Delta_k, D_k$ such that
 \begin{equation}
  \fold_{k,L}^{b,n}(\lground{L}{n})
   \;
   \underset{\vec{p} \; \rightarrow \; \vec{M}'}{
        \xrightarrow{\p Q[\vec{M}] = \inpc{a}{\vec{p} : N}.P_1 + A}}
  \;
  \new\vec{y}_k,\vec{c}.
    (\out{\Delta_k} \parallel \out{\vec{c}} \parallel
     P_1\subst{\vec{p} -> \vec{M}'} \parallel D_k )
  \label{posthat:trans1}
 \end{equation}
 with
  $\Delta_k, \vec{c} \deriv N \subst{\vec{p} -> \vec{M}'}
  $ and $
  \Delta_k, \vec{c} \deriv a$.

 Proof of Claim~I by induction on~$k$:
 For the base case in which $k = 0$, the claim holds by assumption.
 For the induction step, we assume that~\eqref{posthat:trans1} holds for~$k$ and we prove it for~$k+1$.
 Similar to the proof of \cref{lm:fold-sound-kembed},
 we use a multi-hole context $C[-, \cdots, -]$
 to distinguish between having budget $k$ or $k+1$.
 Let $F_1, \cdots, F_j$ be $j$ limits and
  $C[-, \cdots, -]$ a multi-hole context so that:
 \begin{align*}
   \fold_{k,L}^{b,n}(\lground{L}{n})
    & \congr
     C[\fold_{0,F_1^\omega}^{b,n}(F_1^\omega), \cdots,
       \fold_{0,F_j^\omega}^{b,n}(F_j^\omega)] =
     C[\lground{F_1^\omega}{n}, \cdots,
       \lground{F_j^\omega}{n}] \\ & =
     C[(\lground{F_1}{n})^n, \cdots,
       (\lground{F_j}{n})^n]
     = A
   \\
   \fold_{k+1,L}^{b,n}(\lground{L}{n})
    & \congr
     C[\fold_{1,F_1^\omega}^{b,n}(F_1^\omega), \cdots,
       \fold_{1,F_j^\omega}^{b,n}(F_j^\omega)] \\ & =
     C[(\lground{F_1}{n})^b, \cdots,
       (\lground{F_j}{n})^b]
     = B
 \end{align*}
 We want to show that it suffices to have $b$ copies of $F_l$ for any $1 \leq l \leq j$.
 Since $\p Q[\vec{M}]$ also occurs in $\stdf(\lground{L}{b})$, it is trivial to keep the process call which is reduced.
 It remains to argue that the same redex is enabled in B.
  Towards a contradiction:
Assume that there is a hole in which $b$ copies are not sufficient.
Wlog~let $L_l$ be the limit in this hole.

  We did not explicitly single out the message
 $N \subst{\vec{p} -> \vec{M}'}$
 but only the substitution
 $\subst{\vec{p} -> \vec{M}'}$
 for the continuation since
 the substitution is solely determining the successor.

We know that $A$'s knowledge is $\Delta_k$ and
factor out the knowledge from sublimit
$\lground{F_l}{m}$:
$\Delta_k = \Delta_k', \Delta_{l,m}$
  so that $\Delta_{l,m}$ was the knowledge obtained through
  $\lground{F_l}{m}$.
We want to prove that
${
  \Delta_k', \Delta_{l,m}, \vec{c} \deriv
  N \subst{\vec{p} -> \vec{M}'}
 \implies
  \Delta_k', \Delta_{l,b}, \vec{c} \deriv
  N \subst{\vec{p} -> \vec{M}'}
}$
where $\Delta_{l,b}$ denotes the knowledge obtained through
 $\lground{F_l}{b}$ respectively.

Now, we consider the different names used in
$N \subst{\vec{p} -> \vec{M}'}$.
It is straightforward to see that $\names(N) \subseteq \vec{x}_1$.
Recall the definition of $b$:
  \[
   b \is \beta(\Delta)  \cdot \gamma(\Intruder)^{s-1} + 1
\]
By this, we know that
$\beta(\Delta) \geq \card{\vec{p}}$ and hence
$b \geq \card{\vec{p}} \cdot \gamma(\Intruder)^{s-1} + 1$.
Intuitively, this ensures that there are enough distinct names for every single parameter in the parameter list as well as for channel $a$ as the size determines the maximum depth of the syntax tree of a message.
As $\names(\vec{p}) < b$, we can remove at least $m-b$ branches without loosing names used in $\vec{p}$ or channel name $a$.
Therefore, we can assume that
$\names(N\subst{\vec{p} -> \vec{M}}) \cup \set{a} \subseteq \vec{x}_1$.
The idea is to reduce the knowledge from $\Delta_{l,m}$ to
$\Delta_{l,b}$ by \cref{cor:deriv-variant-idemp}, which is a corollary of the absorbing intruder.
As $m$ might be bigger than $2b$, we have to iterate the process of reducing the number of branches.
Hence, we generalise the notation of $\Delta_{l,m}$ and $\Delta_{l,b}$ in the obvious way:
  $(\lground{F_l}{m})^i
   \congr
   \new\vec{a_i}.(\Delta_{l,i} \parallel \cdots)$.\\
\textbf{Claim~II:}
  For all~$m > b$,
  \[
    \bigl(
     \Delta_{l,m} \deriv N\subst{\vec{p} -> \vec{M}} \; \land \;
     \Delta_{l,m} \deriv a
     \bigr)
    \implies
     \bigl(
     \Delta_{l,m-1} \deriv N\subst{\vec{p} -> \vec{M}}
     \; \land \;
     \Delta_{l,m-1} \deriv a
     \bigr).
  \]

  Proof of Claim~II.
  For convenience, we rename $\Delta_{l,i}$ to $\Lambda_i$.
  The main observation is that we can split the knowledge $\Lambda_m$ into $\Lambda_{m-1}$ and a remainder $\Lambda'$.
  We can choose a branch which does not use names from $\vec{x}_1$ to contribute to $\Lambda'$.
  Since we know that $n \geq 1$, we know that $m > 1$ by assumption.
  Therefore, we can split $\Lambda_{m-1}$ again and obtain the knowledge stemming from one branch which we call $\Lambda''$.
  Let us recall the assumption and goal after these rewriting steps:
  Given that
  \begin{equation}
    \Delta_k', \Lambda_{m-2}, \Lambda', \Lambda''
    \deriv
    N\subst{\vec{p} -> \vec{M}}
    \quad \text{ and } \quad
    \Delta_k', \Lambda_{m-2}, \Lambda', \Lambda''
    \deriv
    a
    \label{eq:posthat-II-iii}
  \end{equation}
  holds, we want to prove that
  $\Delta_k', \Lambda_{m-2}, \Lambda'' \deriv
  N\subst{\vec{p} -> \vec{M}}$ and
  $\Delta_k', \Lambda_{m-2}, \Lambda'' \deriv a$.
  Let $\vec{w}'$ and $\vec{w}''$ be the names only used in $\Lambda'$ and $\Lambda''$ respectively so that:
  \begin{equation}
    \vec{w}' \inters \vec{w}'' = \emptyset
  \text{ and }
   \names(\Gamma_{c, m}', \Lambda_{m-2}) \inters \vec{w}' = \emptyset =
   \names(\Gamma_{c, m}', \Lambda_{m-2}) \inters \vec{w}''
   \label{eq:posthat-II-iv}
  \end{equation}
  $\Lambda'$ and $\Lambda''$ have been obtained from a branch of the same sublimit, so we can infer that
  \begin{equation}
    \Lambda'' = \Lambda'\subst{\vec{w}' -> \vec{w}''}
    \label{eq:posthat-II-v}.
  \end{equation}
  Notice that $\vec{w}' \inters \vec{x}_1 = \emptyset$ by the fact how we have chosen the the branch for $\Lambda'$.
  Because of
  $\names(N\subst{\vec{p} -> \vec{M}}) \cup \set{a} \subseteq \vec{x_1} $,
  we can infer that
  $(\names(N\subst{\vec{p} -> \vec{M}}) \cup \set{a}) \inters \vec{w'} = \emptyset$ which implies
  that
  \begin{equation}
    \names( N\subst{\vec{p} -> \vec{M}}) \cup \set{a} \subseteq \names(\Delta_k', \Lambda_{m-2}, \Lambda'')
    \label{eq:posthat-II-vi}
  \end{equation}
  Facts \eqref{eq:posthat-II-iii}~to~\eqref{eq:posthat-II-vi}
  fulfil the conditions for
  \cref{cor:deriv-variant-idemp} resulting in
  ${\Delta_k', \Lambda_{m-2}, \Lambda''
    \deriv N\subst{\vec{p} -> \vec{M}}}$
  and
  ${\Delta_k', \Lambda_{m-2}, \Lambda''\deriv a}$
  which reads
  $\Delta_k', \Lambda_{m-1}
  \deriv  N\subst{\vec{p} -> \vec{M}}$
  and
  ${\Delta_k', \Lambda_{m-1} \deriv a}$
  respectively
  when folding back which is the goal of Claim~II.

  In turn, this concludes the proof of Claim~I.

  Instantiating the statement of Claim~I with
  $k > \height(L)$ shows that this transition is still enabled in
  $\lground{L}{b}$.
    We prove that the same holds for transitions enabled by the remaining reduction rules.

  \subparagraph{Public private channel, message output}
  Recall that
  $\stdf(\lground{L}{m})
   \aeq
  \new\vec{x}_1, \vec{x_2}.(\out{\Gamma_1} \parallel
                            \out{\Gamma_2}\parallel
                 \p Q[\vec{M}] \parallel C_2)$.
   We have
 $\p Q[\vec{M}] := \outpc{a}{M}.
                 P_1 + A $ for some private channel $a$ and action $A$.
   The channel name $a$ can be derived by assumption: $\Gamma_1 \deriv a$.
 Overall, we get the following transition:
 \[
  \lground{L}{m}
  \;
\xrightarrow{\p Q[\vec{M}] =
                     \outpc{a}{M}.P_1 + A}
\;
  \underbrace{\new \vec{x}_1,\vec{x}_2, \vec{c}.
    (\out{\Gamma_1} \parallel \out{\Gamma_2}
    \parallel \out{M} \parallel
     P_1 \parallel C_1)
  }_{Q'\mathrlap{{} \in \post^s(L)}}
 \]

  and the channel name~$a$ can be derived by assumption:
  $\Gamma_1, \Gamma_2 \deriv a$.
 We want to show that we can have the same reduction in
 $\lground{L}{b}$.
 We sketch the proof that this transition is still enabled as it is analogous to the first proof.
 We start with an analogous claim.

 Claim~I: In every hole of context $C[-, \cdots, -]$,
 we can reduce the number of branches to at most~$n$, i.e.~for every~$j$ we can have a grounding
 $\lground{F_j}{n}$ instead of $\lground{F_j}{m}$.
  for every $k \in \Nat$:
 \begin{align*}
  \exists \vec{y}_k, \Delta_k, D_k \st
  \fold_{k,L}^{b,n}(\lground{L}{n})
   \quad
\xrightarrow{\p Q[\vec{M}] = \outpc{a}{M} + A}
\quad
  & \new\vec{y}_k.
    (\out{\Delta_k} \parallel
     P_1 \parallel \out{M} \parallel D_k )
  \\
  & \text{ with }
  \Delta_k \deriv a
\end{align*}
 It is obvious to see that the side condition is subsumed by \eqref{posthat:trans1} which we have used in the first case and hence the proof could be proceeded in the same way.

 \subparagraph{Private channel, two participants}
 For the last case, we consider:
 \begin{align*}
 \stdf(\lground{L}{m})
   & \aeq
  \new\vec{x}_1, \vec{x_2}.(\out{\Gamma_1} \parallel
                            \out{\Gamma_2}\parallel
                            \p Q_1[\vec{M}_1] \parallel
                            \p Q_2[\vec{M}_2] \parallel
                            C_2)
                            \text{ with }\\
 \p Q_1[\vec{M}_1] & \defeq \outpc{c}{N\subst{\vec{x}->\vec{M}'}}.P_1 + A_1 \\
    \p Q_2[\vec{M}_2] & \defeq \inpc{c}{\vec{x}:N}.P_2 + A_2
 \end{align*}
 The resulting transition looks as follows:
  \begin{multline*}
    \new \vec{x_1}.\new \vec{x_2}(
      \out{\Gamma_1}  \parallel
      \out{\Gamma_2}  \parallel
      \p Q_1[\vec{M}_1] \parallel
      \p Q_2[\vec{M}_2] \parallel
      C_2
    )
  \\
  \redto_\Defs
  \new \vec{x_1}.\vec{x_2}(
      \out{\Gamma_1}  \parallel
      \out{\Gamma_2}  \parallel
      P_1                        \parallel
      P_2\subst{\vec{x}->\vec{M}'} \parallel
      C_2
    )
  \end{multline*}
 There are no side conditions on derivability in this case so we only have to ensure that all possible combiniations of
 $\outpc{c}{N\subst{\vec{x}->\vec{M}'}}.P_1$
 and
 $\inpc{c}{\vec{x}:N}.P_2$ can occur.
 As both actions could stem from the same process call, it suffices to have an expansion factor greater or equal to $2$.
 This only poses restrictions on $\beta$ and $\gamma$.
 By definition of $b$, we need to ensure that
 $\beta(\Delta) \cdot \gamma(\Intruder)^{s-1} \geq 1$.
 As argued before, considering $s < 2$ is unreasonable since it would be impossible to have encryptions in this setting.
 Hence, even if $\gamma(\Intruder)$ was $0$,
 $\gamma(\Intruder)^{s-1} \geq 1$ for reasonable $s$.
  If $\beta(\Delta) < 1$, it is $0$ and hence the transition we consider would be impossible.
 We therefore can assume that $\beta(\Delta) > 0$.
 Hence, we know that $b \geq 2$ and this suffices to have a congruent transition in $\lground{L}{b}$.
 \bigskip

 We now turn back to our general goal and know that all kinds of transitions are enabled in $\lground{L}{b}$.
 Consider the extension $\lext{L}{b}$ of limit $L$ whose standard form we choose to resemble the correlation with $\lground{L}{b}$:
  $\stdf(\lext{L}{b}) \aeq
   \new\vec{x}_1.(\Gamma_1 \parallel \p Q[\vec{M}] \parallel C_1
                  \parallel R_1)$.
  By the definition of $\posthat(L)$, we know that
  \[
    L' := \new\vec{x}_1.(
          \out{\Gamma_1} \parallel
          P_1 \subst{\vec{p} -> \vec{M}}
          \parallel C_1
          \parallel R_1)
      \in \posthat(L).
  \]
  Recall that
  $Q' = \new \vec{x}_1,\vec{x}_2, \vec{c}.
    (\out{\Gamma_1} \parallel \out{\Gamma_2}
    \parallel \out{\vec{c}} \parallel
     P_1\subst{\vec{p} -> \vec{M}'} \parallel C_1)$
  was the successor of $\lground{L}{m}$ from the beginning.
  It remains to show that
  $Q' \in \sem{L'}$.
  As before for $\lext{L}{b}$, we relate the standard form of
  $\lext{L}{m}$ to $\lground{L}{m}$ and get the following:
  \[
   \stdf(\lext{L}{m}) \aeq
   \new\vec{x}_1, \vec{x_2}.(\out{\Gamma_1} \parallel
                            \out{\Gamma_2}\parallel
                 \p Q[\vec{M}] \parallel C_2 \parallel R_2)
  \]
  By \cref{lm:ext-same-sem}, we have that
  $\sem{\lext{L}{m}} = \sem{\lext{L}{b}}$ and taking one step hence leads to
  \[
  \begin{array}{rccl}
   K'
   & := &
   \sem{\new \vec{x}_1,\vec{x}_2, \vec{c}.
    (\out{\Gamma_1} \parallel \out{\Gamma_2}
    \parallel \out{\vec{c}} \parallel
     P_1\subst{\vec{p} -> \vec{M}'} \parallel C_2 \parallel R_2} \\
   & = &
   \sem{\new\vec{x}_1.(\out{\Gamma_1} \parallel \out{\vec{c}}
                        \parallel P_1\subst{\vec{p} -> \vec{M}'} \parallel C_1 \parallel R_1)}
   & =: L'
  \end{array}
  \]
  As $Q' \in K'$, it also holds that $Q' \in L'$ which concludes this proof.

\end{proof}

\section{Support for Other Cryptographic Primitives}
\label{app:more-primitives}

We claim the assumptions we make on the intruder model are mild,
and are satisfied by the symbolic models of many cryptographic primitives.
We illustrated the treatment of (a)symmetric encryption;
treatment of hashes and blind signatures is entirely analogous.
By using the sequent calculus formalisation of \cite{tiu10lmcs}
one can trivially extend our proofs to prove all these primitives
form an effective absorbing intruder.

Supporting XOR requires a bit more analysis on the algebraic properties of the primitives.
We take as reference the model of XOR analysed in~\cite{AbadiCortierXOR,ChevalierKRT03}.
The two constructors $\xor$ (of arity~2) and $\xorzero$ (of arity~0)
are added to the set of constructors.
Their algebraic properties are formalised through a congruence relation:
\begin{align}
  M_1 \xor (M_2 \xor M_3) &\xoreq (M_1 \xor M_2) \xor M_3 &
  M_1 \xor M_2 &\xoreq M_2 \xor M_1
  \label{law:xor-AC}
  \\
  M \xor \xorzero &\xoreq M &
  M \xor M &\xoreq \xorzero
  \label{law:xor-directed}
\end{align}

The results of~\cite{AbadiCortierXOR,ChevalierKRT03} establish that
the laws~\eqref{law:xor-directed} can be always orientated from left to right.
Formally, one can define the rewriting system $ \xorrew $ with two rules
$ M \xor \xorzero \xorrew M$ and $M \xor M \xorrew \xorzero $;
and the congruence $\aceq$ defined by laws \eqref{law:xor-AC}.
Then the relation ${\xorrewac} \is ({\aceq\circ\xorrew})$
is terminating, and confluent modulo~$\aceq$.
The set of normal forms $ \xornf{M} \is \set{N | M \xorrewac^* N \not\xorrewac} $ is then guaranteed to be finite, computable, and such that
$M \xoreq N \iff (\xornf{M} \inters \xornf{N}) \neq \emptyset$.

We can harmonise the equational theory of XOR with the deduction system of
\cref{fig:symmetric-intruder} by adding the rules
\begin{mathpar}
  \infer*[right=Xor$_{\xorzero}$]{ }{
  \label{rule:xorzero}
    \Gamma \deriv \xorzero
  }
  \and
  \infer*[right=Xor\textsubscript{R}]{
  \label{rule:xorR}
    \Gamma \deriv M_1 \\
    \Gamma \deriv M_2
  }{
    \Gamma \deriv M_1 \xor M_2
  }
  \and
  \infer*[right=Xor\textsubscript{L}]{
  \label{rule:xorL}
    \Gamma, M_1, M_2, M_1 \xor M_2 \deriv N
  }{
    \Gamma, M_1, M_2 \deriv N
  }
  \and
  \infer*[right=$\xornf{}$\textsubscript{L}]{
  \label{rule:xoreqL}
    \Gamma, M, \xornf{M} \deriv N
  }{
    \Gamma, M \deriv N
  }
  \and
  \infer*[right=$\xornf{}$\textsubscript{R}]{
  \label{rule:xoreqR}
    \Gamma \deriv N \\
    N \in \xornf{M}
  }{
    \Gamma \deriv M
  }
\end{mathpar}
The deduction system accurately models XOR,
even if it uses $\xornf{}$ instead of $\xoreq$:
as proven in~\cite[Prop.~1]{ChevalierKRT03},
one can always restrict the intruder
to manipulate messages in normal form without loosing expressive power.

We are left to prove that the derivability satisfies the effective absorbing intruder axioms.
Decidability has been proven in~\cite{AbadiCortierXOR,ChevalierKRT03}.
The axioms of \cref{def:intruder-model,def:absorption} are easily satisfied by the same arguments we used for $\IntrSymm$.
The proofs of \eqref{axiom:relevancy} and absorption make use of the fact
that if $N \in \xornf{M}$ then $\freenames(N) \subseteq \freenames(M)$.

\medskip

We conjecture Diffie-Hellman exponentiation
(following the model of e.g.~\cite{SchmidtMCB12})
can be shown to satisfy our axioms
in the same way we treated XOR.
The main issue with Diffie-Hellman, with respect to \eqref{axiom:relevancy} and absorption, is the inverses law $ M * M^{-1} \cong 1 $:
by using the law from right to left,
one can involve arbitrary names in a derivation.
This could be handled in the same way we handle cancellation of XOR,
by normalising derivations so that the law is always applied left to right.
We leave the formal development of this remark as future work.

\medskip

A delicate point is the bounded message size assumption.
With advanced primitives like XOR it is easier (but not inevitable)
to encounter protocols
for which it is impossible to extend the results on the bounded model
to the unbounded case.

\section{Towards Unbounded Message Size}
\label{app:extended-limits}

The analysis presented in \cref{sec:ideal} considers only traces (and attacks)
involving messages of size smaller than some given bound~$s$.
We show here how the results of the analysis can be generalised
to inductive invariants for the full set of traces, with no restriction on the size.
Because of undecidability of the general problem,
this generalisation will not be precise for every protocol:
there are protocols for which the most precise generalised limit is trivial (i.e.~does not ensure any non-trivial property of the protocol).
For the protocols in our benchmarks however, one can get precise invariants
by generalising the ones inferred by the tool.

We first introduce the syntax and semantics of generalised limits,
and describe how we can use them to generalise our benchmarks.
Finally,
although studying how to automate this generalisation is beyond the scope
of this paper, we briefly sketch how $\posthat$ can be adapted to
work on generalised limits.

\subparagraph{Aside: finer size bounds via typing}
In our development, we assumed a global size bound~$s$.
To have finer control on the message sizes, as we do in our tool,
one can introduce a primitive form of typing.
Assuming wlog that all pattern variables are unique,
a \emph{typing} is a partial function $ \ty \from \Names \pto \Nat $,
assigning to each pattern variable a maximum size for the messages it can match.
A typing induces a typed transition relation $\redto_{\Defs,\ty}$
which only matches patterns with subsitutions respecting $\ty$;
$\reach^\ty_\Defs(P)$ collects all the terms reachable from $P$ through $\redto_{\Defs,\ty}$.
A typing $\ty$ of $P,\Defs$ is \pre s-bounding if
$\reach^\ty_\Defs(P) \subseteq \Size{s}$.
One can check if $\ty$ is \pre s-bounding on-the-fly while computing $\posthat$.

\subsection{Generalised Limits}
Recall the ``encryption oracle'' of \cref{ex:encryption-oracle}:
$
 E[k] = \inp{x : x}.(\out{ \enc{x}{k} } \parallel E[k] )
$.
Without any restrictions on the pattern variable~$x$,
there is no way to prevent the encryption chains
described in \cref{ex:encryption-oracle}.
However, we can use the \pre 2-bounding typing $ \ty = \map{x->1} $,
to obtain the limit
$
  \p E[k] \parallel \bigl( \new m. \out{\enc{m}{k}}\bigr)^\omega
$
which is inductive (wrt~$\redto_{\Defs,\ty}$)
and contains the initial state $\p E[k]$.
Since~$x$ is never inspected, injecting larger messages in it would not lead to new behaviour.
We represent this arbitrary injection of messages in a limit by introducing a ``wildcard'' annotation on occurrences of names $a^\wildcard$.

Technically, we duplicate the set of names $\Names$ to a disjoint set
of \pre \wildcard-annotated names
  $ \Names^\wildcard \is \set{ a^\wildcard | a \in \Names } $,
and we allow messages to contain names from $\Names \dunion \Names^\wildcard$.
All the definitions of this paper can be adapted straightforwardly to support
wildcards by simply not distinguishing between $a$ and $a^\wildcard$.
To stress the fact that a set of processes/limits contains annotations,
we annotate the set with a wildcard, e.g.~$\Limits_{s,k}^{\wildcard}$.

\begin{definition}[Wildcard semantics]
Given $P \in \Proc^\wildcard$, its \emph{wildcard semantics} is defined as
\[
    \wsem{P} \is
      \Set{ P' |
        \begin{array}{l}
          \stdf(P) = \new \vec{x}.( \out{\Gamma} \parallel Q ) ,\;
          P' \kcongr
            \new \vec{x}, \vec{c}. \bigl(
              ( \out{\Gamma} \parallel Q )
              \subst{\vec{y}\,^\wildcard -> \vec{M}}
            \bigr)
        \\
          \vec{y}\,^\wildcard = \names(P) \inters \Names^\wildcard,
          \names(\vec{M}) \subseteq \vec{x} \union \vec{c}
        \end{array}
      }
\]
For $L \in \Limits_{s,k}^{\wildcard}$, define
$
    \wsem{L} \is \Set{ \wsem{P} | P \in \sem{L}}
$.
\end{definition}

With the help of wildcards, we can then take a limit $L$,
annotate it with wildcards obtaining a limit $L^\wildcard$.
Then we can check that the wildcards generalise the limit enough
to make it inductive wrt $\redto_{\Defs}$
(i.e.~without restricting the message size):
\begin{equation}
  \forall P \in \wsem{L^\wildcard},
    \forall Q\st
      P \redto_{\Defs} Q  \implies  Q \in \wsem{L^\wildcard}
  \label{eq:wildcard-inductive}
\end{equation}
For our encryption oracle example, the annotated limit
$
  \p E[k] \parallel \bigl( \new m. \out{\enc{m^\wildcard}{k}}\bigr)^\omega
$
represents an inductive invariant for the unrestricted semantics.

For a more realistic application of generalised limits,
consider our running \cref{ex:running}.
The limit of~\cref{fig:running-invariant} is inductive with respect to
the typing $ \ty = \map{n_a->1} $.
To remove the size bound we only need to annotate the
occurrence of $n_a$ in $L_2$
obtaining the limit $L^\wildcard$:
\begin{align*}
L^\wildcard &= \new a, b, k_{as}, k_{bs}.(
      \out{a,b} \parallel
      \p{A_{1}}[a, b, k_{as}]^\omega \parallel
      \p{B_{1}}[a, b, k_{bs}]^\omega \parallel
      \p{S}[a, b, k_{as}, k_{bs}]^\omega \parallel
      L_1^\omega
    )
\\
L_1 &= \new n_{a}.\bigl( \out{n_{a}} \parallel
    \p{A_{2}}[a, b, k_{as}, n_{a}] \parallel
    L_2^\omega
  \bigr)
\\
L_2 &= \new k.\bigl(
\out{\enc{k}{(n_{a}^{{\color{red}\wildcard}}, k_{as})}} \parallel
    \out{\enc{k}{k_{bs}}} \parallel
    \p{Secret}[k]^\omega \parallel
    \p{A_{3}}[a, b, k_{as}, k]^\omega \parallel
    L_3^\omega
  \bigr)
\\
L_3 &= \new n_{b}.\bigl(
    \out{\enc{n_{b}}{(k, k)}} \parallel
    \out{\enc{n_{b}}{k}} \parallel
    \p{B_{2}}[a, b, k_{bs}, n_{b}, k]
  \bigr)
\end{align*}

Indeed this generalised limit is inductive:
the intruder can send any message $(M, b)$ to the server $\p{S}[a, b, k_{as}, k_{bs}] \is \inp{x : (x, b)}.\bigl(\dots
      \out*{\enc{k}{(x, k_{as})}} \dots
\bigr)$, which will use it as part of the encryption key $(M,k_{as})$.
None of the input patterns of the other processes would however be able to match the message $\enc{k}{(M, k_{as})}$ unless $M=n_{a}$;
thus this attack attempt will not generate new behaviour,
and our generalised limit captures all the reachable configurations
without assuming bounds on the size of messages.

Similarly, it is not difficult to annotate the limits of our benchmarks so that the generalised limits satisfy condition~\eqref{eq:wildcard-inductive}:

\begin{itemize}
 \item ARPC:
    There is only one annotation reasonable:
    $ \pair{n_x^\wildcard}{\pair{k}{b}} $.
    This is still inductive even though it can be fed back
    as we can have a different substitutions for $n_x^\wildcard$.
\item KSL:
    We annotate
    $\p B_2[a, b, k_{bb}, k_{bs}, n_x^\wildcard, n_y]$ in $L_2$
        and
        $\enc{\pair{nz}{\pair{b, n_x^\wildcard}}}{k_{xy}}$ in $L_4$.
    $A$ knows the actual $n_x$ as it produced it
    and hence it is still inductive.
 \item KSLr:
    Additionally to KSL, we annotate
    $ \enc{\pair{m_x^\wildcard}{m_y}}{k_{xy}} $
    and
    $ \p B_5[a, b, k_{bb}, k_{bs}, k_{ab}, t_y, m_x^\wildcard, m_y] $ in $L_7$.
    The new message cannot flow into $\p B_3[-]$ or $\p D_2[-]$ as both pattern-match names on the second position which are different from $m_y$.
    However, due to the wildcard as parameter which could become $m_y$, $\p B_4[-]$ can produce $\p B_5[-]$'s input message.
    This is still covered and inductive but does not comply with an normal reauthentification run.
 \item NHS:
    We annotate
    $ \enc{
        \pair{n^\wildcard}{\pair{b}{\enc{\pair{k}{a}}{k_{bs}}}}
        } {k_{as}} $
    in $ L_2 $.
    $A$ can only match on the original $n$ so it is still inductive.
 \item NHSr:
    Same annotations as in NHS.
    Additionally,
    we annotate
    $ \enc{\pair{one}{s^\wildcard}}{k} $.
    in $L_3$
    and hence have to annotate the $s$ in
    $\p {Secret}[s^\wildcard] $ as well as $\p {Leak}[s^\wildcard] $.
    This covers all the enabled transitions and is hence inductive.
 \item NHSs:
    The same annotation as in NHS works.
 \item OR/ORl:
    We annotate
    $ \enc{\pair{n_y}{\pair{m^\wildcard}{\pair{a}{b}}}}{k_{bs}} $
    in $ L_2 $.
    This does not enable any new transition so it is inductive.
 \item ORs:
    Same annotation as OR/ORl.
    But we also have to annotate the key $k_{xy}$ which we declare as secret which is the difference to ORl and hence does not work here.
 \item ORa:
    Similar annotation to OR/ORl.
    Analogously, inductivity is preserved.
 \item YAH:
    We annotate
    $ \enc{\pair{a}{\pair{n_a^\wildcard}{n_b}}}{k_{bs}} $
    in $ L_3 $.
    This can be fed back to $\p B_2[-]$ and hence we also annotate
    $ \enc{n_b}{\pair{n_a^\wildcard}{n_b}} $
    in $ L_5 $.
    This is still inductive.
 \item YAHs1:
    We annotate as in YAH.
    The declaration of a secret happens only if $A$ plays back and hence it is inductive.
 \item YAHs2:
    Without the size constraint for the key,
    the invariant obtained for YAHs1 is also one for YAHs2.
 \item YAHlk:
    We annotate
    $ \enc{\pair{one}{s^\wildcard}}{k} $
    in $L_4$
    which represents the case where the key $k$ is leaked.
    This annotation covers all newly enabled transitions.
    In case $k$ is not leaked,
    there are no new transitions.
    Hence, it is inductive.
\end{itemize}

\subsection{Towards Automating Generalisation}

Automating the check of condition~\eqref{eq:wildcard-inductive} is beyond
the scope of this paper, but we can sketch how one would approach the problem
by adapting our $\posthat$ definition to work on generalised limits.
The idea is that one can design a computable function $\posthat^{\ty}_{\Defs,\wildcard}$, a ``symbolic'' version of $\posthat$, and $\Subset$ satisfying:
\begin{gather}
  L_1^\wildcard \Subset L_2^\wildcard
    \implies
      \wsem{L_1^\wildcard} \subseteq \wsem{L_2^\wildcard}
  \label{eq:symb-inclusion}
  \\
  \post(\wsem{L^\wildcard}) \subseteq
    \wsem{\posthat^{\ty}_{\Defs,\wildcard}(L^\wildcard)}
  \label{eq:symb-posthat}
\end{gather}
With these components, one can check~\eqref{eq:wildcard-inductive} by checking
$
  \posthat^{\ty}_{\Defs,\wildcard}(L^\wildcard)
    \Subset L^\wildcard
$.
Note that
  \eqref{eq:symb-inclusion} is an implication,
and \eqref{eq:symb-posthat} a subset relation:
an equivalence would be impossible to achieve due to undecidability of the general problem; we therefore only require a sound over-approximation.

Designing $\Subset$ requires defining an approximate version of $\kleq$
which can work on \pre\wildcard-annotated sets of messages.
A precise version of this can only be defined on a per-intruder-model basis.
A generic definition of $\Subset$ could simply extend the non-annotated inclusion check with the axioms
$x^\wildcard \kleq x^\wildcard$ and $M \kleq x^\wildcard$.

Designing a suitable $\posthat^{\ty}_{\Defs,\wildcard}(L^\wildcard)$
similarly depends on the choice of intruder-model.
One could, for example, define a symbolic matching function
$\operatorname{match}(\Gamma^\wildcard, \vec{x} : N^\wildcard)$
returning a finite set of substitutions such that
\[
  \Gamma \deriv N^\wildcard \theta
  \land
  \Gamma \in \wsem{\Gamma^\wildcard}
    \implies
      \exists \sigma\in
        \operatorname{match}(\Gamma^\wildcard, \vec{x} : N^\wildcard) \st
          N^\wildcard \theta \in \wsem{N^\wildcard \sigma}
\]
and use it in $\posthat^{\ty}_{\Defs,\wildcard}$ to find all symbolic redexes.
It is relatively straightforward to define a match function that is precise enough to check the generalised limits of our benchmarks.
Exploring the design of these symbolic analyses in general is left for future work.

\section{Benchmarks}

The Needham-Schr\"{o}der protocol
\cite{NeedhamS78}
is modelled with and without secrecy (NHS/NHSs).
The NHSr version models leaks of old session keys, which leads to a replay attack (and hence the invariant is leaky).
We provide four models of the Otway-Rees protocol
\cite{OtwayR87}.
OR does not model secrecy and is used to prove the protocol depth-bounded.
ORl models secrecy but the inferred invariant contains a genuine leak, which is the result of a known type-confusion attack.
The attack substitutes a composite message for some input $x$ that is (wrongly) assumed to be a nonce by a principal.
ORa models authentication and the invariant shows the genuine misauthentication based on this attack.
ORs models the same situation with the assumption that $x$ is of size one; with this assumption the inferred invariant is not leaky.

ARPC models Lowe’s modified BAN concrete ARPC protocol
\cite{Lowe96}
(we model succ(-) with pairs,
i.e.~$\operatorname{succ}(\pair{\operatorname{zero}}{-})$ is
$\pair{\operatorname{one}}{-}$).
We modelled the Kehne-Sch\"{o}nw\"{a}lder-Landend\"{o}rfer protocol~\cite{Lowe96}, as modified by Lowe, with (KSLr) and without (KSL) re-authentication.

We produced four models of the Yahalom protocol
\cite{BurrowsAN89}.
Our first model (YAH) does not model secrecy and is used to establish depth-boundedness.
The protocol has a type-confusion attack (similar to the case of Otway-Rees) which does not lead to a leak of a secret.
This is modelled in YAHs1.
We rule out this type-confusion by adding a size assumption in YAHs2.
YAHlk is a variant where an additional fresh nonce is exchanged at the beginning of each session, which makes the protocol secret even when old session keys can be leaked, a property entailed by the inferred invariant.

\section{Example of Incorporation}
\label{app:incorporation}

Consider the limit
$
  L =
    \new y.\Bigl(
      \bigl(\new x.(\p A[x] \parallel \p B[y,x]^\omega)\bigr)^\omega
    \Bigr).
$
To compute $\posthat$ we consider the limit
$
  \lext{L}{1} =
    \new y.\Bigl(
      \bigl(\new x.(
          \p A[x] \parallel \p B[y,x] \parallel \p B[y,x]^\omega)
      \bigr)
    \parallel
      \bigl(\new x.(\p A[x] \parallel \p B[y,x]^\omega)\bigr)^\omega
    \Bigr)
$
and assume
\[
  \p A[x] \redto \new z.(A[z] \parallel \p B[y,z]) \parallel \p B[y, x] = P.
\]
Then, $\posthat$ will contain the limit
$ L' = C[P] $
for
$
  C[\bullet] =
    \new y.\Bigl(
      \bigl(\new x.(
          \;\bullet \parallel \p B[y,x] \parallel \p B[y,x]^\omega)
      \bigr)
    \parallel
      \bigl(\new x.(\p A[x] \parallel \p B[y,x]^\omega)\bigr)^\omega
    \Bigr)
$.
Since we are trying to prove inclusion between $L$ and $ L' = C[P] $,
and $L$ is equivalent to $\lext{L}{1} = C[\p A[x]]$,
it seems likely that the inclusion could be proven by
matching $C$ with $C$ in the two limits, and by matching $P$ in $\lext{L}{1}$.
Intuitively:
\begin{center}
  \begin{tikzpicture}[
  group/.style={decoration=brace,decorate,thick},
  match/.style={->,semithick, shorten=5pt, looseness=#1},
  match/.default={.8},
]
  \begin{scope}[
      start chain=going right,
      node distance=-2pt,
      every node/.style={on chain,inner sep=2pt,rounded corners,strut sized}
    ]

    \node(C1){$
    \new y.\Bigl(
           \bigl(\new x.(
    $};
    \node[red](A){$
               \p A[x]
    $};
    \node(C5){$
      {} \parallel \p B[y,x] \parallel {}
    $};

    \node(B1){$\p B[y,x]^\omega$};

    \node(C2){$
      )\bigr) \parallel \bigl(
    $};
    \node(xAB){$
      \new x.(
        \p A[x]
        \parallel
        \p B[y,x]^\omega
      )
    $};
    \node(C4){$
      \bigr)^\omega\Bigr)
    $};
  \end{scope}
  \begin{scope}[
    yshift=-1.5cm,
    start chain=going right,
    node distance=-2pt,
    every node/.style={on chain,inner sep=2pt,rounded corners,strut sized},
    ]
    \node(C1'){$
    \new y.\Bigl(
           \bigl(\new x.(
    $};
    \node[blue](P-zAB){$ \new z.( \p A[z] \parallel \p B[y,z] ) $};
    \node[blue](P-par){$ {} \parallel {} $};
    \node[blue](P-B){$ \p B[y,x] $};
    \node(C5'){$
      {} \parallel \p B[y,x] \parallel {}
    $};

    \node(B1'){$\p B[y,x]^\omega$};

    \node(C2'){$
      )\bigr) \parallel \bigl(
      \new x.(
      \p A[x]
      \parallel
      \p B[y,x]^\omega
      )\bigr)^\omega\Bigr)
    $};
  \end{scope}
  \draw[group] (P-zAB.north west) -- (P-zAB.north east);
  \draw[group] (xAB.south east) -- (xAB.south west);
  \draw[group] (B1.south east) -- (B1.south west);
  \draw[group] (P-B.north west) -- (P-B.north east);
  \draw[match=.5] (P-zAB) edge[in=-90,out=90] (xAB);
  \draw[match=1] (P-B) edge[in=-80,out=90,preaction={draw=white,line width=3pt,-}] (B1);
\end{tikzpicture} \end{center}
where the part in blue is $P$ and the context $C$ is the part in black.
The arrows indicate the sublimits that
can be further unfolded to show the inclusion of $C[P]$.

Formally, we want to check that
$
  \forall L' \in \posthat^s_\Defs(L)\st
    \sem{L'} \subseteq \sem{L}
$.
By construction,
  $L' = C[P]$ for some context $C$ such that
  $ \lext{L}{b} = C[\p Q[\vec{M}]] $.
Then,
if we can find $P$ in the fixed part of $\stdf(\lext{L}{n})$,
conditions~\cond{A} and~\cond{B} of \cref{th:char-lim-incl} are
automatically satisfied because of the shared context $C$.
\end{document}